\pgfplotsset{compat=1.16}
\newcommand{\dd}{\mathop{}\text{d}} 
\newcommand{\DD}{\mathop{}\text{D}} 
\newcommand{\pdv}[2]{\frac{\partial#1}{\partial#2}}
\newcommand{\mdv}[2]{\frac{\DD#1}{\DD#2}}
\newcommand{\definedAs}{\coloneqq}
\newcommand{\Knud}{\operatorname{Kn}}
\newcommand{\te}[1]{\boldsymbol{#1}}
\newcommand{\tee}[1]{\te{#1}}
\newcommand{\teee}[1]{\te{#1}}
\newcommand{\con}[1]{\ensuremath{\mathrm{#1}}}
\DeclarePairedDelimiter{\abs}{\lvert}{\rvert}
\DeclarePairedDelimiter{\norm}{\lVert}{\rVert}
\DeclareMathOperator{\trace}{tr}
\definecolor{background_c}{rgb}{.95,.95,.95}
\colorlet{comment_c}{green!50!black}
\colorlet{string_c}{red!60!black}
\colorlet{keyword_c}{magenta!90!black}
\colorlet{identifier_c}{black!90!white}
\colorlet{emph_c}{blue!70!white}
\colorlet{specialfunctions_c}{blue!60!green}
\lstdefinestyle{pythonstyle}{
  language=Python,
  numbers=left,
  numbersep=2mm,
  numberstyle=\tiny\color{gray},
  captionpos=b,
  backgroundcolor=\color{yellow!10},
  basicstyle=\footnotesize \ttfamily,
  %
  deletestring=[s]{r"""}{"""}, 
  morecomment=[s]{r"""}{"""}, 
  morecomment=[s]{"""}{"""}, 
  otherkeywords={\ , \}, \{, \%, \&, \|}, 
  keywordstyle=\color{keyword_c}\bfseries,
  commentstyle=\color{comment_c}\itshape,
  stringstyle=\color{string_c},
  identifierstyle=\color{identifier_c},
  emphstyle=\color{blue}\bfseries,
  emph={[2]True, False, None},
  emphstyle=[2]\color{keyword_c},
  emph={[4]ode, fsolve, sqrt, exp, sin, cos, arccos, pi,  array, norm, solve, dot, arange, isscalar, max, sum, flatten, shape, reshape, find, any, all, abs, plot, linspace, legend, quad, polyval,polyfit, hstack, concatenate,vstack,column_stack,empty,zeros,ones,rand,vander,grid,pcolor,eig,eigs,eigvals,svd,qr,tan,det,logspace,roll,min,mean,cumsum,cumprod,diff,vectorize,lstsq,cla,eye,xlabel,ylabel,squeeze},
  emphstyle=[4]\color{specialfunctions_c},
  emph={[10]jump, grad, tr, transpose, Identity, div, dx, ds, dS, inner, dot, as_tensor, indices, Measure, CellDiameter},
  emphstyle=[10]\color{specialfunctions_c},
  breaklines=false,
  breakatwhitespace=true,
  breakindent=40pt,
  keepspaces=true,
  columns=fullflexible, 
  showtabs=false,
  tab=, 
  tabsize=4,
  showstringspaces=false,
  showspaces=false,
  escapeinside={(*@}{@*)}, 
  extendedchars=\true, 
  xleftmargin=2mm,
  xrightmargin=0mm,
  framexleftmargin=0mm,
  frame=single,
  %
}
\crefname{equation}{}{}
\theoremstyle{remark}
\newtheorem{remark}{Remark}
\begin{document}

\title[fenicsR13: A Tensorial Mixed FEM Solver for the Linear R13 Equations Using FEniCS]{fenicsR13: A Tensorial Mixed Finite Element Solver for the Linear R13 Equations Using the FEniCS Computing Platform
}

\author{Lambert Theisen}
\email{theisen@mathcces.rwth-aachen.de}
\orcid{0000-0001-5460-5425} 
\author{Manuel Torrilhon}
\email{mt@mathcces.rwth-aachen.de}
\orcid{0000-0003-0008-2061} 
\affiliation{%
  \institution{Center for Computational Engineering Science, Department of Mathematics, RWTH Aachen University}
  \streetaddress{Schinkelstr.~2}
  \city{Aachen}
  \country{Germany}
  \postcode{D-52062}
}


\begin{abstract}
  We present a mixed finite element solver for the linearized R13 equations of non-equilibrium gas dynamics. The Python implementation builds upon the software tools provided by the FEniCS computing platform. We describe a new tensorial approach utilizing the extension capabilities of FEniCS's Unified Form Language (UFL) to define required differential operators for tensors above second degree. The presented solver serves as an example for implementing tensorial variational formulations in FEniCS, for which the documentation and literature seem to be very sparse. Using the software abstraction levels provided by the UFL allows an almost one-to-one correspondence between the underlying mathematics and the resulting source code. Test cases support the correctness of the proposed method using validation with exact solutions. To justify the usage of extended gas flow models, we discuss typical application cases involving rarefaction effects. We provide the documented and validated solver publicly.
\end{abstract}

\begin{CCSXML}
<ccs2012>
<concept>
<concept_id>10002950.10003705.10003707</concept_id>
<concept_desc>Mathematics of computing~Solvers</concept_desc>
<concept_significance>500</concept_significance>
</concept>
<concept>
<concept_id>10002950.10003714.10003727.10003729</concept_id>
<concept_desc>Mathematics of computing~Partial differential equations</concept_desc>
<concept_significance>500</concept_significance>
</concept>
<concept>
<concept_id>10010405.10010432.10010441</concept_id>
<concept_desc>Applied computing~Physics</concept_desc>
<concept_significance>300</concept_significance>
</concept>
<concept>
<concept_id>10010405.10010432.10010439</concept_id>
<concept_desc>Applied computing~Engineering</concept_desc>
<concept_significance>300</concept_significance>
</concept>
</ccs2012>
\end{CCSXML}

\ccsdesc[500]{Mathematics of computing~Solvers}
\ccsdesc[500]{Mathematics of computing~Partial differential equations}
\ccsdesc[300]{Applied computing~Physics}
\ccsdesc[300]{Applied computing~Engineering}

\keywords{Tensorial Mixed Finite Element Method, R13 Equations, FEniCS Project, Continuous Interior Penalty}


\maketitle


\section{Introduction}
Nowadays, scientists often consider computational resources as the limiting factor in numerical simulations. However, this is not true in general. For non-standard gas flow conditions, the model accuracy can dominate the discretization errors \textendash\ quantitative and qualitative. Especially the lack of essential rarefaction effects in the numerical solution affects the quality of the computational predictions. These non-standard conditions often occur for high Knudsen numbers, e.g., in diluted gases.
For rarefied gas applications, classical Navier--Stokes and Fourier (NSF) models are not sufficiently accurate to predict all occurring non-equilibrium effects. Therefore, we will consider the regularized 13-moment (R13) equations as a natural extension to the classical gas flow models. This set of equations is derived using the moment method for the Boltzmann equation (compare, e.g., with~\cite{grad1958principles}), resulting in additional evolution equations for the heat flux vector and the stress tensor (in contrast to the NSF system). Based on a pseudo-equilibrium approach, regularization terms are added in~\cite{struchtrup2003regularization} to transform \citeauthor{grad1958principles}'s 13-moment equations into the R13 equations.
\par
Several numerical methods were applied to solve the resulting set of equations. In~\cite{rana2013robust}, \citeauthor{rana2013robust} applied a finite difference scheme to obtain steady-state approximations of the R13 equations. Only recently, \citeauthor{torrilhon2017hierarchical} used a discontinuous Galerkin approach in~\cite{torrilhon2017hierarchical} for a hierarchical simulation context. During the same period, \citeauthor{westerkamp2019finite} proposed the first finite element approaches for the steady-state linearized R13 system in~\cite{westerkamp2019finite,westerkamp2017continous} after subsequent advances regarding instability issues and stabilization techniques in~\cite{westerkamp2012finite,westerkamp2014stabilization,westerkamp2017curvature}. Earlier FEM approaches in~\cite{mueller2010computing} already used the FEniCS simulation framework for a simplified set of equations. We will focus on a Galerkin finite element approach and note that previous work did not provide a tensor-based formulation, being very common in the context of mixed Galerkin methods~\cite{auricchio2004mixed}.

\paragraph{The FEniCS Project}
The FEniCS framework~\cite{alnaesEtAl2015fenics,loggMardalWellsEtAl2012} serves as an optimal computing platform for implementing our method. It allows us to avoid the component-wise derivation of scalar variational forms. FEniCS is an LGPLv3-licensed~\cite{web2007lgplv3} collection of interoperable components for automated scientific computing. The main focus is on solving partial differential equations by the finite element method~\cite{alnaesEtAl2015fenics}.
\par
In terms of usability, FEniCS offers both a modern high-level Python interface and a high-performance C++ backend. The whole framework consists of several \textit{building blocks}~\cite{langtangen2016solving}, i.e., flexible and reusable software components allowing the user to write the mathematical models in a very high-level abstraction. The effort to write significant portions of code is shifted away from the user to the developer by the concept of automated code generation. This approach aims to solve the ``disconnect between mathematics and code'' (e.g., a relatively simple Poisson equation \(-\Delta u = f\) vs.~100--10000 lines of code)~\cite{logg2015ytImplementing}. In the optimal case, a user should only write the model problem's mathematical statements while the simulation framework executes all the extra work automatically.
\par
The main component of FEniCS is DOLFIN~\cite{logg2010dolfin,hoffman2002dolfin}. This library implements the high-performance C++ backend, consisting of the relevant data structures such as meshes, function spaces, and function expressions. The main algorithms in the context of the finite element method (i.e., element assembly, mesh handling, or connection to linear algebra solvers) are also part of DOLFIN\@. From a user perspective, DOLFIN is the main connection layer between the high-level Python interface and the core layer because it handles the communication between all modules and extends them with external software. The most important internal low-level components of FEniCS are:
\begin{itemize}
  \item The UFL~\cite{alnaes2014unified} (Unified Form Language) is a domain-specific language to formulate finite element problems in their weak variational form. With UFL, the user can express a model problem in its natural and mathematical way. It also serves as the input for the form compiler used by both the C++ and the Python interface.
  \item The FFC~\cite{kirby2006compiler} (FEniCS Form Compiler) automatically generates DOLFIN code from a given variational form. The goal of FFC is to provide a validated compiler, performing automated code generation tasks to improve the correctness of the resulting code.
  \item The FIAT~\cite{kirby2004algorithm} (FInite element Automatic Tabulator) enables the automatic computation of basis functions for nearly arbitrary finite elements.
\end{itemize}
Other simulation frameworks building upon or utilizing similar concepts as FEniCS are the Firedrake project~\cite{rathgeber2017firedrake} or FEniCS-HPC~\cite{hoffman2015fenics}. The presented work uses the FEniCS version 2019.1.0.r3.

\paragraph{Outline}
This work's organization is as follows: In \cref{s_modelFormulation}, we present the tensorial R13 model equations, discretized in \cref{s_galerkinFiniteElementApproach} using a Galerkin approach. \cref{s_implementationAndValidation,s_validation} are devoted to implementing and validating the proposed method, focusing on auxiliary implementations for tensor differential operators. Furthermore, application cases in \cref{s_applications} present the solver capabilities to predict the critical flow phenomena. Finally, we discuss limitations and future work before adding some concluding remarks in \cref{s_conclusion}.

\section{Formulation of the Model Equations}\label{s_modelFormulation}
We first consider the general case of a closed and time-independent gas domain \(\tilde{\Omega} \subset \mathbb{R}^{3}\) in three spatial dimensions. The main quantities are the time evolutions (with \(t \in [0,\con{T}]\)) of field quantities, such as the gas density \(\rho : \tilde{\Omega} \times [0,\con{T}] \rightarrow \mathbb{R}_+\), the gas velocity \(\te{u} : \tilde{\Omega} \times [0,\con{T}] \rightarrow \mathbb{R}^d\), and the gas temperature \(T : \tilde{\Omega} \times [0,\con{T}] \rightarrow \mathbb{R}_+\) inside the domain \(\tilde{\Omega}\). These are the three fundamental quantities. We will further encounter the pressure \(p : \tilde{\Omega} \times [0,\con{T}] \rightarrow \mathbb{R}_+\), the heat flux \(\te{s} : \tilde{\Omega} \times [0,\con{T}] \rightarrow \mathbb{R}^d\), and the deviatoric stress tensor \(\tee{\sigma} : \tilde{\Omega} \times [0,\con{T}] \rightarrow \mathbb{R}_{\text{stf}}^{d \times d}\). The deviatoric part means the symmetric and trace-free part of a tensor. Note that the traditional symbol \(\te{q}\) for the heat flux is replaced by \(\te{s}\) to have an intuitive set of test functions in the weak formulations later on in \cref{s_galerkinFiniteElementApproach}.
The overall goal is to determine these evolutions for all points \(\te{x} \in \tilde{\Omega}\) from given initial conditions (e.g.,~\(\rho_0(\te{x},0), \te{u}_0(\te{x},0), T(\te{x},0), \ldots\)) together with a set of given boundary conditions. The latter describes the outer environment and boundary behavior of the fields on the domain boundary \(\partial \tilde{\Omega}\).

\subsection{Modeling Non-Equilibrium Gas Flows Using the R13 Equations}
Three fundamental laws of physics describe the behavior of general gas flows: The point-wise conservation of mass in~\cref{eq_consMass}\index{balance laws!mass}, the balance of momentum in~\cref{eq_consMomentum}\index{balance laws!momentum} (Newton's second law of motion), and the energy conservation in~\cref{eq_consEnergy}\index{balance laws!energy} (first law of thermodynamics), expressed in component form as
\begin{subequations}\label{eqsec:balanceLawsLagrangian}
  \begin{align}
    \mdv{\rho}{t} + \rho \pdv{u_k}{x_k} &= \dot{m}, \label{eq_consMass} \\
    \rho \mdv{u_i}{t} + \pdv{p}{x_i} + \pdv{\sigma_{ij}}{x_j} &= \rho b_i, \label{eq_consMomentum} \\
    \rho \mdv{\epsilon}{t} + p \pdv{u_i}{x_i} + \sigma_{ij} \pdv{u_i}{x_j} + \pdv{s_i}{x_i} &= r. \label{eq_consEnergy}
  \end{align}
\end{subequations}
In~\eqref{eqsec:balanceLawsLagrangian}, \(\{\dot{m}, b_i, r\}\) denotes a mass source, a body force, and an energy source, respectively. \( (\text{D}\star/\text{D}t) = (\partial\star/\partial t) + u_j (\partial\star/\partial x_j)\) defines the material derivative using Einstein's summation convention \textendash\ compare, e.g., with~\cite{schade2009tensoranalysis}. When assuming a monoatomic ideal gas, the pressure is related to the density and the temperature as \(p=\rho \theta\). In contrast to \(T\), \(\theta\) is the temperature in energy units as \(\theta = (k/m) T\). The classical constitutive theory considers Fourier's law (\(s_i=-\kappa (\partial T)/(\partial x_i)\)) and the law of Navier--Stokes (\(\sigma_{ij} = -2 \mu (\partial u_{\langle i})/(\partial x_{j\rangle})\) in notation~\eqref{eq_rank2stf}) as closure relations to solve the system~\eqref{eqsec:balanceLawsLagrangian} \textendash\ altogether forming the well-known NSF system for compressible gas flows.
\par
However, the NSF model is only accurate near the thermodynamic equilibrium with \(\Knud \ll 1\), where the dimensionless Knudsen number
\begin{equation}
  \Knud=\frac{\lambda}{L},
\end{equation}
describes the ratio between the mean free path \(\lambda\) of the gas particles and the relevant characteristic length scale \(L\)~\cite{struchtrup2005macroscopic}. One observes many interesting rarefaction effects or micro-scale phenomena in a non-equilibrium gas flow at about \(\Knud \gtrsim 0.05 \). Experiments and the underlying Boltzmann equation's analysis confirmed these observations~\cite{struchtrup2011macroscopic}. We find a comprehensive list of rarefaction effects in~\cite{struchtrup2011macroscopic,torrilhon2016modeling}. Some of them are:
\begin{itemize}
  \item Heat flux parallel to the walls in a channel flow, contradicting Fourier's law as there is no temperature difference in this direction (see \cref{sec_channelFlow});
  \item A non-constant pressure behavior in Couette and Poiseuille channel flows although no flow across the channel is present;
  \item A minimum of the mass flow rate (\textit{Knudsen minimum}) in a force-driven Poiseuille flow, also known as the \textit{Knudsen paradox} (see \cref{sec_channelFlow});
  \item A non-convex temperature profile in such microchannels while NSF predicts a strictly convex shape for the same setup;
  \item Temperature-induced flow situations in channels (see \cref{sec_knudsenPump});
  \item Temperature jump and velocity slip at walls (\textit{Knudsen boundary layers});
  \item as well as Temperature-induced edge flow (see \cref{sec_thermalEdgeFlow}).
\end{itemize}
\par
Therefore, the goal of a rarefied gas solver is to predict all of the above-listed effects accurately. Using Boltzmann's transport equation is an option for all flow situations with \(\Knud \in \mathbb{R}_+\). However, due to high dimensionality, its numerical simulation is costly compared to classical continuum approaches. For this reason, there exists a variety of models that extend the classical NSF system. We use the R13 equations proposed in~\cite{struchtrup2003regularization} (summarized in~\cite{struchtrup2011macroscopic,torrilhon2016modeling}). This prominent example of extended macroscopic gas flow models is suitable in the transition regime away from the thermodynamic equilibrium. Accordingly, one might consider the full set of non-linear R13 equations as the natural extension of the Navier--Stokes' and Fourier's models to more field equations.

\subsection{Steady-State Linearization and Simplifications}\label{ss_linearization}
Throughout this work, we will consider the steady-state and linearized R13 equations using pressure \(p\) and temperature \(\theta\) instead of density \(\rho\) and classical temperature \(T\). This set of variables is a common choice for engineering applications. The full set of non-linear equations consists of the three conservation equations \cref{eq_consMass,eq_consMomentum,eq_consEnergy} and balance laws for \(\te{s}\) and \(\tee{\sigma}\) and are given, e.g., in~\cite{struchtrup2003regularization,struchtrup2011macroscopic,torrilhon2016modeling}. The balance laws for \(\te{s}\) and \(\tee{\sigma}\) contain collision terms with the collision frequency \(\nu\). A reformulation expands the material derivatives. It uses \(p=\rho \theta\) and \(\epsilon = 3 \theta/2\), accounting for monoatomic ideal gases. We further neglect all temporal operators and only consider deviation fields \(\delta \te{U}\) with \(\te{U}=\te{U}_0 + \delta \te{U}\) around a ground state \(\te{U}_0 = (\rho^{(0)},{u_i}^{(0)},\theta^{(0)},{\sigma_{ij}}^{(0)},{s_i}^{(0)}) = (\rho_0,0,\theta_0,0,0)\). This procedure linearizes the whole system together with the closures.
\par
We end up with the only remaining parameter \(\tau\), which is the mean free-flight time defined as \(\tau = 1/\nu\). A subsequent scaling of the equations relates every field \(\star\) to its reference \(\hat{\star}\). We therefore define
\begin{equation}
  \hat{x}_i \definedAs \frac{x_i}{L},
  \;
  \hat{\theta} \definedAs \frac{\theta}{\theta_0},
  \;
  \hat{p} \definedAs \frac{p}{p_0},
  \;
  \hat{\tau} \definedAs \frac{\tau}{\tau_0},
  \;
  \hat{u}_i \definedAs \frac{u_i}{\sqrt{\theta_0}},
\end{equation}
which leads to the references
\begin{equation}
  \hat{\sigma}_{ij} = \frac{\sigma}{p_0},
  \;
  \hat{s}_i = \frac{s_i}{p_0\sqrt{\theta_0}},
  \;
  \hat{m}_{ijk} = \frac{m_{ijk}}{p_0\sqrt{\theta_0}},
  \;
  \hat{\Delta}_{ij} = \frac{\Delta_{ij}}{p_0 \theta_0}.
  \;
\end{equation}
We then replace all quantities with their dimensionless counterpart multiplied with the reference value. For example, we insert \(x_i \rightarrow L \hat{x}_i\) into the linearized model. The resulting equations, then, allow for the identification of the Knudsen number as
\begin{equation}
  \Knud = \frac{\tau_0 \sqrt{\theta_0}}{L}.
\end{equation}
The resulting system of interest is linear, steady-state, and dimensionless. We switch to a tensorial notation and drop the dimensionless indicator \(\hat{\star}\) for better readability of the differential operators and simpler notation. The resulting balance laws read
\begin{align}
  \te{\nabla} \te{\cdot} \te{u} &= \dot{m},\label{eq_balance_mass}
  \\
  \te{\nabla} p + \te{\nabla} \te{\cdot} \tee{\sigma} &= \te{b},\label{eq_balance_momentum}
  \\
  \te{\nabla} \te{\cdot} \te{u} + \te{\nabla} \te{\cdot} \te{s} &= r,\label{eq_balance_energy}
\end{align}
with additional evolution equations for the heat flux vector \(\te{s}\) and the deviatoric stress tensor \(\tee{\sigma}\) as
\begin{align}
  \frac{4}{5} {(\te{\nabla} \te{s})}_{\text{stf}} + 2 {(\te{\nabla} \te{u})}_{\text{stf}} + \te{\nabla} \te{\cdot} \teee{m} &= - \frac{1}{\Knud} \tee{\sigma},\label{eq_balance_heatflux}
  \\
  \frac{5}{2} \te{\nabla} \theta + \te{\nabla} \te{\cdot} \tee{\sigma} + \frac{1}{2} \te{\nabla} \te{\cdot} \tee{R} + \frac{1}{6} \te{\nabla} \Delta &= - \frac{1}{\Knud} \frac{2}{3} \te{s}.\label{eq_balance_stress}
\end{align}
To obtain a closed system, we also require the linearized closure relations for the highest-order moments as
\begin{align}
  \teee{m} &= - 2 \Knud {(\te{\nabla} \tee{\sigma})}_{\text{stf}},\label{eq_closure_m}
  \\
  \tee{R} &= - \frac{24}{5} \Knud {(\te{\nabla} \te{s})}_{\text{stf}},\label{eq_closure_r}
  \\
  \Delta &= - 12 \Knud \left( \te{\nabla} \te{\cdot} \te{s} \right).\label{eq_closure_delta}
\end{align}
Here, the symmetric and trace-free (deviatoric) part of a 2-tensor is defined component-wise \({(\star)}_{ij} \mapsto {({(\star)}_{\text{stf}})}_{ij} = {(\star)}_{\langle ij \rangle}\), with the trace subtracted from the symmetric part \({(\star)}_{(ij)}\). For a tensor \(\tee{A} \in \mathbb{R}^{3 \times 3}\) and following~\cite{struchtrup2005macroscopic}, this translates to
\begin{equation}\label{eq_rank2stf}
  A_{\langle ij \rangle} = A_{(ij)} - \frac{1}{3} A_{kk} \delta_{ij} = \frac{1}{2} (A_{ij} + A_{ji}) - \frac{1}{3} A_{kk} \delta_{ij},
\end{equation}
using Kronecker's delta function \(\delta_{ij}\). The symmetric and trace-free part~\cite{struchtrup2005macroscopic} of a 3-tensor \(\teee{B} \in \mathbb{R}^{3 \times 3 \times 3}\) analogously reads
\begin{equation}\label{eq_rank3stf}
  B_{\langle ijk \rangle} = B_{(ijk)} - \frac{1}{5} \left( B_{(ill)} \delta_{jk} + B_{(ljl)} \delta_{ik} + B_{(llk)} \delta_{ij}\right).
\end{equation}
Here, the symmetric part of a 3-tensor is the average of all possible transpositions and is given by
\begin{equation}\label{eq_rank3sym}
  B_{(ijk)} = \frac{1}{6} \left( B_{ijk} + B_{ikj} + B_{jik} + B_{jki} + B_{kij} + B_{kji} \right).
\end{equation}
\par
We restrict the set of computational domains \(\tilde{\Omega} \subset \mathbb{R}^3\) to geometries \(\tilde{\Omega} \equiv \Omega \times \mathbb{R}\) with complete homogeneity in the third spatial direction \(z\), such that \(\partial_{x_3} = \partial_z \equiv 0\). This assumption simplifies the calculation of all variables, although they formally remain three-dimensional. Therefore, in this paper's remainder, we only have to consider the computational domain \(\Omega \subset \mathbb{R}^2\). One can assume this as an area cut from an infinitely long domain in the \(z\)-direction.

\subsection{Linearized Boundary Conditions}
To formulate boundary value problems, we require a set of linearized boundary conditions. In~\cite{rana2016thermodynamically}, \citeauthor{rana2016thermodynamically} proposed the most recent version based on Maxwell's accommodation model~\cite{torrilhon2008boundary} while we use the notation of~\cite{westerkamp2019finite,torrilhon2017hierarchical} as
\begin{align}
  u_n &= 0,\label{eq_bc_un}
  \\
  \sigma_{nt} &= \tilde{\chi} \left( (u_t-u_t^{\mathrm{w}}) + \frac{1}{5} s_t + m_{nnt} \right),\label{eq_bc_sigmant}
  \\
  R_{nt} &= \tilde{\chi} \left( -(u_t-u_t^{\mathrm{w}}) + \frac{11}{5} s_t - m_{nnt} \right),\label{eq_bc_rnt}
  \\
  s_n &= \tilde{\chi} \left( 2(\theta-\theta^{\mathrm{w}}) + \frac{1}{2} \sigma_{nn} + \frac{2}{5} R_{nn} + \frac{2}{15} \Delta \right),\label{eq_bc_sn}
  \\
  m_{nnn} &= \tilde{\chi} \left( - \frac{2}{5} (\theta-\theta^{\mathrm{w}}) + \frac{7}{5} \sigma_{nn} - \frac{2}{25} R_{nn} - \frac{2}{75} \Delta \right),\label{eq_bc_mnnn}
  \\
  \left( \frac{1}{2} m_{nnn} + m_{ntt} \right) &= \tilde{\chi} \left( \frac{1}{2} \sigma_{nn} + \sigma_{tt} \right).\label{eq_bc_05mnnnmnnt}
\end{align}
A two-dimensional local boundary-aligned coordinate system in terms of outer normal and tangential components \((\te{n},\te{t})\) generates the required projections. The modified accommodation factor is given by \(\tilde{\chi} = \sqrt{2/(\pi \theta_0)} \chi/(2-\chi)\)~\cite{westerkamp2019finite}. The boundary conditions are equal to the Onsager boundary conditions of~\cite{rana2016thermodynamically}. They were adjusted, as described in~\cite{torrilhon2017hierarchical} (to ensure thermodynamic admissibility).
\par
For real-life applications, it is often necessary to prescribe inflow or outflow conditions. The trivial velocity boundary condition in the \(\te{n}\)-direction (\(u_n=0\)) is therefore replaced by an inflow model, following the idea of~\cite{torrilhon2017hierarchical} as
\begin{equation}
  \epsilon^{\mathrm{w}} \tilde{\chi} \left( (p-p^{\mathrm{w}}) + \sigma_{nn} \right) = \left(u_n - u_n^{\mathrm{w}}\right). \label{eq_inflowBC}
\end{equation}
The artificial in- and outflow interface require a pressure \(p^{\mathrm{w}}\), a velocity \(u_n^{\mathrm{w}}\), and a velocity prescription coefficient \(\epsilon^\mathrm{w}\). Although these interfaces are no physical walls, we still use the notation \(\star^{\textrm{w}}\) for indicating the boundary values following the other conditions \cref{eq_bc_sigmant,eq_bc_rnt,eq_bc_sn,eq_bc_mnnn,eq_bc_05mnnnmnnt}. Intuitively, the equation \(\epsilon^\mathrm{w}=0\), together with \(u_n^{\mathrm{w}}=0\), reduces the inflow model back to the standard boundary condition. However, a value of \(\epsilon^\mathrm{w} \rightarrow \infty\) enforces the total pressure at the wall \(p^{\mathrm{w}} = p + \sigma_{nn}\).

\section{Galerkin Finite Element Approach}\label{s_galerkinFiniteElementApproach}
The R13 equations of \cref{ss_linearization} are solved numerically. Before utilizing the Galerkin finite element method in \cref{ss_femDiscretization}, \cref{ss_weakform} presents the weak formulation. A stabilization approach using a continuous interior penalty (CIP) method is proposed in the \cref{ss_stabilization} and allows for a broader spectrum of stable element combinations.

\subsection{Derivation of the Variational Formulation}\label{ss_weakform}
The derivation of the weak variational form follows the usual strategy. The first step consists of integration over the computational domain \(\Omega\) while multiplying (testing) with a corresponding set of test functions. Due to the different tensorial degrees of all five equations, the trial- and test-function vectors read
\begin{equation}
  \te{\mathcal{U}} \definedAs (\te{s},\theta,\tee{\sigma},\te{u},p),
  \;
  \te{\mathcal{V}} \definedAs (\te{r},\kappa,\tee{\psi},\te{v},q).
\end{equation}
One 2-tensorial, two vectorial, and two scalar test functions from suitable Sobolev function spaces \(\mathbb{V}_\star\). We choose the trial and test functions from the same product spaces for the Galerkin method as \(\te{\mathcal{U}},\te{\mathcal{V}} \in \mathbb{H} \definedAs \bigtimes_{i \in \te{\mathcal{U}}} \mathbb{V}_i = \mathbb{V}_{\te{s}} \times \mathbb{V}_{\theta} \times \mathbb{V}_{\tee{\sigma}} \times \mathbb{V}_{\te{u}} \times \mathbb{V}_{p}\). In \cref{ss_derviationHeatflux,ss_derviationEnergy,ss_derviationStress,ss_derviationMomentum,ss_derviationMass}, we transform all evolution equations using the following strategy:
\begin{enumerate}
  \item Integrate over \(\Omega\) while testing with the corresponding test function \(\star \in \mathcal{V}\).
  \item Apply integration by parts to all \(\te{u}\)-, \(\te{r}\)-, and \(\tee{m}\)-terms.
  \item Insert the corresponding conditions on the boundary \(\Gamma \definedAs \partial \Omega\) using a normal/tangential aligned coordinate system with \((\te{n},\te{t})\).
  \item Use the closure relations \cref{eq_closure_m,eq_closure_r,eq_closure_delta} to eliminate the highest-order moments \(\teee{m},\tee{R}\), and \(\Delta\).
\end{enumerate}
The addition of all five weak equations \cref{eq_subf_line_heatflux,eq_subf_line_energy,eq_subf_line_stress,eq_subf_line_momentum,eq_subf_line_mass} yields the continuous compound formulation: \textit{Find} \(\te{\mathcal{U}} \in \mathbb{H}\) \textit{, such that} for all \(\te{\mathcal{V}} \in \mathbb{H}\):
\begin{equation}
  \mathcal{A}
  \left(\te{\mathcal{U}},\te{\mathcal{V}}\right)
  =
  \mathcal{L}\left(\te{\mathcal{V}}\right)
  .
\end{equation}
A collection and structuring step produces the bilinear form \(\mathcal{A}: \mathbb{H} \times \mathbb{H} \rightarrow \mathbb{R}\) on the product space over \(\mathbb{H}\). By defining the sub-functionals \(a(\te{s},\te{r}),\ldots,h(p,q)\), the combined weak form reads
\begin{align}
  \mathcal{A}\left(\te{\mathcal{U}},\te{\mathcal{V}}\right)
  &=
  a(\te{s},\te{r})
  +
  b(\kappa, \te{s})
  -
  b(\theta, \te{r})
  +
  c(\te{s},\tee{\psi})
  -
  c(\te{r},\tee{\sigma})
  +
  d(\tee{\sigma},\tee{\psi})
  \nonumber
  \\
  &+
  e(\te{u},\tee{\psi})
  -
  e(\te{v},\tee{\sigma})
  +
  f(p,\tee{\psi})
  +
  f(q,\tee{\sigma})
  +
  g(p,\te{v})
  -
  g(q,\te{u})
  +
  h(p,q)
  ,
  \label{eq_compundWeakForm}
\end{align}
while the linear functional \(\mathcal{L}: \mathbb{H} \rightarrow \mathbb{R}\) on the right-hand side reads
\begin{align}
  \mathcal{L}\left(\mathcal{V}\right)
  &=
  l_1(\te{r}) +
  l_2(\kappa) +
  l_3(\tee{\psi}) +
  l_4(\te{v}) +
  l_5(q)
  .
  \label{eq_compoundRHS}
\end{align}
\par
The bilinear sub-functionals used in~\eqref{eq_compundWeakForm} contain \(a(\te{s},\te{r})\), \(d(\tee{\sigma},\tee{\psi})\), \(h(p,q)\) as symmetric diagonal terms. Considering the physical interpretations, \(b(\star,\star)\) is an intra-heat coupling, \(e(\star,\star)\), \(f(\star,\star)\), \(g(\star,\star)\) are intra-stress couplings, and the contribution \(c(\star,\star)\) is the inter-heat-stress coupling. Altogether, they read:
\begin{align}
  a(\te{s},\te{r})
  &=
  \frac{24}{25} \Knud \int_\Omega \text{sym}(\te{\nabla}\te{s}) \tee{:} \text{sym}(\te{\nabla}\te{r}) \dd \te{x}
  +
  \frac{12}{25} \Knud \int_\Omega \text{div}(\te{s}) \text{div}(\te{r}) \dd \te{x}
  \nonumber
  \\
  &
  +
  \frac{4}{15} \frac{1}{\Knud} \int_\Omega \te{s} \cdot \te{r} \dd \te{x}
  + \frac{1}{2} \frac{1}{\tilde{\chi}} \int_\Gamma s_n r_n \dd l
  + \frac{12}{25} \tilde{\chi} \int_\Gamma s_t r_t \dd l
  \label{eq_subf_a}
  ,
  \\
  b(\theta, \te{r})
  &=
  \int_\Omega \theta \, \text{div}(\te{r}) \dd \te{x}
  \label{eq_subf_b}
  ,
  \\
  c(\te{r},\tee{\sigma})
  &=
  \frac{2}{5} \int_\Omega \tee{\sigma} \tee{:} \te{\nabla} \te{r} \dd \te{x}
  - \frac{3}{20} \int_\Gamma \sigma_{nn} r_n \dd l
  - \frac{1}{5} \int_\Gamma \sigma_{nt} r_t \dd l
  \label{eq_subf_c}
  ,
  \\
  d(\tee{\sigma},\tee{\psi})
  &=
  \Knud \int_\Omega \text{stf}(\te{\nabla}\tee{\sigma}) \teee{\because} \text{stf}(\te{\nabla}\tee{\psi}) \dd \te{x}
  +
  \frac{1}{2} \frac{1}{\Knud} \int_\Omega \tee{\sigma} \tee{:} \tee{\psi} \dd \te{x}
  \nonumber
  \\
  &+
  \frac{9}{8} \tilde{\chi} \int_\Gamma \sigma_{nn} \psi_{nn} \dd l
  +
  \tilde{\chi} \int_\Gamma \left( \sigma_{tt} + \frac{1}{2} \sigma_{nn} \right) \left( \psi_{tt} + \frac{1}{2} \psi_{nn} \right) \dd l
  \nonumber
  \\
  &+
  \frac{1}{\tilde{\chi}} \int_\Gamma \sigma_{nt} \psi_{nt} \dd l
  +
  \epsilon^{\text{w}} \tilde{\chi} \int_\Gamma \sigma_{nn} \psi_{nn} \dd l
  \label{eq_subf_d}
  ,
  \\
  e(\te{u},\tee{\psi})
  &=
  \int_\Omega \text{div}(\tee{\psi}) \te{\cdot} \te{u} \dd \te{x}
  \label{eq_subf_e}
  ,
  \\
  f(p,\tee{\psi})
  &=
  \epsilon^{\text{w}} \tilde{\chi} \int_\Gamma p \psi_{nn} \dd l
  \label{eq_subf_f}
  ,
  \\
  g(p,\te{v})
  &=
  \int_\Omega \te{v} \te{\cdot} \te{\nabla} p \dd \te{x}
  \label{eq_subf_g}
  ,
  \\
  h(p,q)
  &=
  \epsilon^{\text{w}} \tilde{\chi}  \int_\Gamma p q \dd l
  \label{eq_subf_h}
  .
\end{align}
The linear functionals of~\eqref{eq_compoundRHS} contain the corresponding source terms, forces, and given boundary expressions:
\begin{align}
  l_1(\te{r}) &= - \int_\Gamma \theta^{\text{w}} r_n \dd l \label{eq_subf_l1}
  ,
  \\
  l_2(\kappa) &= \int_\Omega \left( r - \dot{m} \right) \kappa \dd \te{x} \label{eq_subf_l2}
  ,
  \\
  l_3(\tee{\psi}) &= - \int_\Gamma \left( u_t^{\text{w}} \psi_{nt} + \left( u_n^{\text{w}} - \epsilon^{\text{w}} \tilde{\chi} p^{\text{w}} \right) \psi_{nn} \right) \dd l \label{eq_subf_l3}
  ,
  \\
  l_4(\te{v}) &= \int_\Omega \te{b} \te{\cdot} \te{v} \dd \te{x} \label{eq_subf_l4}
  ,
  \\
  l_5(q) &= \int_\Omega \dot{m} q \dd \te{x} - \int_\Gamma \left( u_n^{\text{w}} - \epsilon^{\text{w}} \tilde{\chi}  p^{\text{w}} \right) q \dd l \label{eq_subf_l5}
  .
\end{align}
Identification of the total pressure \(p + \sigma_{nn}\) allows for an alternative notation of \(\mathcal{A}(\te{\mathcal{U}},\te{\mathcal{V}})\) in \cref{eq_compundWeakForm} as
\begin{align}
  \mathcal{A}\left(\te{\mathcal{U}},\te{\mathcal{V}}\right)
  &=
  a(\te{s},\te{r})
  +
  b(\kappa, \te{s})
  -
  b(\theta, \te{r})
  +
  c(\te{s},\tee{\psi})
  -
  c(\te{r},\tee{\sigma})
  +
  \bar{d}((\tee{\sigma}, p), (\tee{\psi}, q))
  \nonumber
  \\
  &+
  e(\te{u},\tee{\psi})
  -
  e(\te{v},\tee{\sigma})
  +
  g(p,\te{v})
  -
  g(q,\te{u})
  ,
  \label{eq_compundWeakFormAlternative}
\end{align}
where we add the terms \(d(\tee{\sigma},\tee{\psi})\), \(f(p,\tee{\psi})\), \(f(q,\tee{\sigma})\), and \(h(p,q)\) to form \(\bar{d} : (\mathbb{V}_{\tee{\sigma}} \times \mathbb{V}_{p}) \times (\mathbb{V}_{\tee{\sigma}} \times \mathbb{V}_{p}) \rightarrow \mathbb{R}\) as
\begin{align}
  \bar{d}((\tee{\sigma}, p), (\tee{\psi}, q))
  &=
  \Knud \int_\Omega \text{stf}(\te{\nabla}\tee{\sigma}) \teee{\because} \text{stf}(\te{\nabla}\tee{\psi}) \dd \te{x}
  +
  \frac{1}{2} \frac{1}{\Knud} \int_\Omega \tee{\sigma} \tee{:} \tee{\psi} \dd \te{x}
  \nonumber
  \\
  &+
  \frac{9}{8} \tilde{\chi} \int_\Gamma \sigma_{nn} \psi_{nn} \dd l
  +
  \tilde{\chi} \int_\Gamma \left( \sigma_{tt} + \frac{1}{2} \sigma_{nn} \right) \left( \psi_{tt} + \frac{1}{2} \psi_{nn} \right) \dd l
  \nonumber
  \\
  &+
  \frac{1}{\tilde{\chi}} \int_\Gamma \sigma_{nt} \psi_{nt} \dd l
  +
  \epsilon^{\text{w}} \tilde{\chi} \int_\Gamma (p + \sigma_{nn}) (q + \psi_{nn}) \dd l
  \label{eq_subf_dTilde}
  .
\end{align}
The total pressure term \(\epsilon^{\text{w}} \tilde{\chi} \int_\Gamma (p + \sigma_{nn}) (q + \psi_{nn}) \dd l\) replaces the last term \(\epsilon^{\text{w}} \tilde{\chi} \int_\Gamma \sigma_{nn} \psi_{nn} \dd l\) of \(d(\sigma,\psi)\) in \cref{eq_subf_d}. We will need the notation \cref{eq_compundWeakFormAlternative} in \cref{thm_positiveDefiniteness} to simplify notation. The actual implementation, however, uses the equivalent weak form \cref{eq_compundWeakForm} to set up the system \cref{eq_discreteSystem}.

\subsection{Finite Element Discretization}\label{ss_femDiscretization}
We consider a conforming and shape-regular partition \(\mathcal{T}_h\) of the computational domain \(\Omega \subset \mathbb{R}^2\) into triangular elements \(\tau\) as \(\mathcal{T}_h = {\left\{ \tau \right\}}_{\tau \in \mathcal{T}_h}\). For a given polynomial degree \(m \in \mathbb{N}\), let the space of all polynomials with maximal degree \(m\) on every \(\tau \in \mathcal{T}_h\) reads
\begin{equation}
  \mathbb{V}_{\star,h} = \left\{ u \in \mathbb{V}_{\star} : u|_\tau \in \mathbb{P}_m(\tau) \, \forall \tau \in \mathcal{T}_h \right\}.
\end{equation}
Following the usual conforming finite element approach, we restrict the function space to a finite-dimensional subspace \(\mathbb{H}_h \subset \mathbb{H}\) by choosing polynomial ansatz functions for all fields. This discretization procedure, then, leads to the discrete algebraic system:
\newcommand{\tmpstreatch}{1.2}
\begin{equation}
  \renewcommand{\arraystretch}{\tmpstreatch}
  \left[{
    \begin{array}{cc|ccc}
      A_h & -B_h^T & -C_h^T &  0 &  0 \\
      B_h & 0 &  0 &  0 &  0 \\
      \hline
      C_h &  0 &  D_h & -E_h^T &  F_h^T \\
      0 &  0 & E_h & 0 &  G_h^T \\
      0 &  0 & F_h & -G_h & H_h \\
    \end{array}
  }\right]
  \left[{
    \begin{array}{c}
      \te{s}_h \\
      \theta_h \\
      \hline
      \tee{\sigma}_h \\
      \te{u}_h \\
      p_h \\
    \end{array}
  }\right]
  =
  \left[{
    \begin{array}{c}
      L_{1,h} \\
      L_{2,h} \\
      \hline
      L_{3,h} \\
      L_{4,h} \\
      L_{5,h} \\
    \end{array}
  }\right]
  ,
  \label{eq_discreteSystem}
\end{equation}
where each matrix \(A_h,\ldots,H_h\) corresponds to its corresponding weak form \(a(\te{s},\te{r}),\ldots,h(p,q)\). The system's formulation \cref{eq_discreteSystem} reveals the physical coupling between the heat variables \((\te{s}_h,\theta_h)\) and the stress variables \((\tee{\sigma}_h,\te{u}_h,p_h)\) only through the \(C_h\) and \(C_h^T\) matrices.
\par
To show the need for stabilization of \cref{eq_discreteSystem}, we reorder the rows with \(\te{x}={(\tee{\sigma}_h ,\te{s}_h ,p_h)}^T\) and \(\te{y}={(\te{u}_h ,\theta_h)}^T\) such that
\begin{equation}
  \left[{
    \begin{array}{cc}
      \mathbb{A} & -\mathbb{B}^T \\
      \mathbb{B} & \tee{0} \\
    \end{array}
  }\right]
  \left[{
    \begin{array}{c}
      \te{x} \\
      \te{y}
    \end{array}
  }\right]
  =
  \left[{
    \begin{array}{c}
      \te{f} \\
      \te{g}
    \end{array}
  }\right]
  ,
  \label{eq_saddleSystem}
\end{equation}
with
\begin{equation}
  \mathbb{A}
  =
  \left[{
    \begin{array}{ccc}
      D_h & C_h & F_h^T \\
      -C_h^T & A_h & 0 \\
      F_h & 0 & H_h
    \end{array}
  }\right]
  ,
  \;
  \mathbb{B}
  =
  \left[{
    \begin{array}{ccc}
      E_h & 0 & G_h^T \\
      0 & B_h & 0 \\
    \end{array}
  }\right]
  ,
  \;
  \te{f}
  =
  \left[{
    \begin{array}{c}
      L_{3,h} \\
      L_{1,h} \\
      L_{5,h} \\
    \end{array}
  }\right]
  ,
  \;
  \te{g}
  =
  \left[{
    \begin{array}{c}
      L_{4,h} \\
      L_{2,h} \\
    \end{array}
  }\right]
  .
\end{equation}
The notation used in \cref{eq_saddleSystem} reveals the saddle point structure (compare, e.g., with~\cite{auricchio2004mixed}). We directly observe the need for \((\te{u},\theta)\)-stabilization due to the zero diagonal entries. For an impermeable wall condition \(u_n = 0\) resulting from \(\epsilon^{\text{w}}=0\), the \(H_h\)-block is also vanishing. The \(p\)-diagonal, therefore, also needs stabilization to work for all possible boundary conditions.

\subsection{Continuous Interior Penalty (CIP) Stabilization}\label{ss_stabilization}
In general, mixed finite element problems require a compatible set of finite elements. For example, in Stokes's second problem, a suitable choice to circumvent the LBB condition is the Taylor--Hood element \(\mathbb{P}_2\mathbb{P}_1\). Here, the velocity function space has a higher dimension than the pressure function space. When it comes to application cases, we do not want to focus on a particular field and desire an equal order discretization. Especially for higher-order moments, this is true due to no real physical intuition about these fields. The argument gets stronger when considering even more complex models above the 13 field case.
\par
One approach to overcome the compatible condition on the discrete function spaces is stabilization. In general, stabilization techniques modify the weak form's left-hand side to stabilize the discrete system, i.e., adding entries to the zero sub-matrices in the discrete system. Residual-based stabilization techniques are widespread for flow problems~\cite{donea2003finite} and add a stabilization term based on the current residuum's value.
\par
We will use the continuous interior penalty (CIP) method, as proposed in~\cite{westerkamp2019finite} for the R13 system. This technique adds stabilization terms based on edge inner products (for two dimensions). The modified bilinear form \(\tilde{\mathcal{A}}\) then reads
\begin{align}
  \tilde{\mathcal{A}}\left(\te{\mathcal{U}_h},\te{\mathcal{V}_h}\right)
  &=
  \mathcal{A}\left(\te{\mathcal{U}_h},\te{\mathcal{V}_h}\right)
  +
  j_\theta(\theta_h,\kappa_h)
  +
  j_{\te{u}}(\te{u}_h,\te{v}_h)
  +
  j_p(p_h,q_h)
  ,
  \label{eq_weakFormStabilized}
\end{align}
where the stabilization terms are given by
\begin{align}
  j_\theta(\theta_h,\kappa_h)
  &=
  \delta_\theta \sum_\mathcal{E} \int_\mathcal{E} h^3 [\te{\nabla}\theta_h \te{\cdot} \te{n}] [\te{\nabla}\kappa_h \te{\cdot} \te{n}] \dd l
  ,
  \\
  j_{\te{u}}(\te{u}_h,\te{v}_h)
  &=
  \delta_{\te{u}} \sum_\mathcal{E} \int_\mathcal{E} h^3 [\te{\nabla}\te{u}_h \te{\cdot} \te{n}] \te{\cdot} [\te{\nabla}\te{v}_h \te{\cdot} \te{n}] \dd l
  ,
  \\
  j_p(p_h,q_h)
  &=
  \delta_p \sum_\mathcal{E} \int_\mathcal{E} h [\te{\nabla}p_h \te{\cdot} \te{n}] [\te{\nabla}q_h \te{\cdot} \te{n}] \dd l
  .
\end{align}
Here, \(\mathcal{E}\) is the index set of all interior element faces (i.e., edges) with \(\mathcal{E} \cap \partial \Omega = \emptyset\) and
\begin{equation}
  [\te{f} \cdot \te{n}] = \te{f}^+ \cdot \te{n}^+ + \te{f}^- \cdot \te{n}^-
  ,
\end{equation}
denotes the \(\te{f}\)-jump across the element boundary, weighted with the oppositely directed edge normals \(\te{n}^+\) and \(\te{n}^-\). Assuming the fields to be in \(C^1(\Omega)\) leads to no addition of stabilization terms. The method is, therefore, consistent~\cite{westerkamp2017curvature}. The different mesh size scalings result from an analysis in~\cite{westerkamp2017continous}, such that the order of stabilization does not change due to mesh refinement. The remaining parameters \(\delta_\theta,\delta_{\te{u}},\delta_p\) are not very sensitive, and the method is very robust to produce low errors for a wide range of \(\delta_\star\)-values.
Compare, for example, with~\cite{burman2006edgeStabilization}, where \citeauthor{burman2006edgeStabilization} presented a discussion of the CIP method applied to the generalized Stokes problem.
\par
With the presented stabilization, we have the following property of the system:
\begin{theorem}\label{thm_positiveDefiniteness}
  Consider a set of admissible system, stabilization, and boundary conditions as
  \begin{itemize}
    \item \(\Knud > 0\) to avoid division by zero,
    \item \(\delta_\theta,\delta_{\te{u}},\delta_p > 0 \) to avoid zero diagonals using stabilization,
    \item \(\tilde{\chi} > 0\) to have positive boundary terms in the diagonal sub-functionals,
    \item \(\epsilon^{\text{w}} \ge 0\) to guarantee non-negativity of all inflow boundary terms.
  \end{itemize}
  Then, the discrete stabilized weak form \(\tilde{\mathcal{A}}: \mathbb{H}_h \times \mathbb{H}_h \rightarrow \mathbb{R}, \left(\te{\mathcal{U}_h},\te{\mathcal{V}_h}\right) \mapsto \tilde{\mathcal{A}}\left(\te{\mathcal{U}_h},\te{\mathcal{V}_h}\right)\) is positive-definite for non-constant discrete fields \(\theta_h, \te{u}_h, p_h\).
\end{theorem}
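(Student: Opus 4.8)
The plan is to test the bilinear form against itself, that is, to study the quadratic form $\tilde{\mathcal{A}}(\te{\mathcal{U}_h},\te{\mathcal{U}_h})$ obtained by setting the test tuple equal to the trial tuple. I would start from the alternative representation \cref{eq_compundWeakFormAlternative}, in which the pressure boundary contributions have already been absorbed into $\bar{d}$. Choosing $\te{r}=\te{s}$, $\kappa=\theta$, $\tee{\psi}=\tee{\sigma}$, $\te{v}=\te{u}$, and $q=p$, every paired off-diagonal contribution cancels exactly: $b(\theta,\te{s})-b(\theta,\te{s})=0$, and likewise the $c$-, $e$-, and $g$-pairs. The whole inter-field coupling therefore disappears, and with the stabilization of \cref{eq_weakFormStabilized} the quadratic form reduces to the diagonal expression $\tilde{\mathcal{A}}(\te{\mathcal{U}_h},\te{\mathcal{U}_h}) = a(\te{s},\te{s}) + \bar{d}((\tee{\sigma},p),(\tee{\sigma},p)) + j_\theta(\theta,\theta) + j_{\te{u}}(\te{u},\te{u}) + j_p(p,p)$. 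This cancellation is the structural heart of the argument: the antisymmetric placement of the coupling functionals is exactly what makes the symmetric part of $\tilde{\mathcal{A}}$ block-diagonal.

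Next I would check non-negativity of each surviving term under the stated hypotheses. In $a(\te{s},\te{s})$, see \cref{eq_subf_a}, the two $\Knud$-weighted interior integrals and the two $\tilde{\chi}$-weighted boundary integrals are squares, hence non-negative because $\Knud>0$ and $\tilde{\chi}>0$; the decisive summand is the zeroth-order mass term $\frac{4}{15}\frac{1}{\Knud}\|\te{s}\|_{L^2}^2$, which vanishes only for $\te{s}\equiv 0$. The same pattern holds for $\bar{d}$ of \cref{eq_subf_dTilde}: all its boundary integrands are squares, the inflow term $(p+\sigma_{nn})^2$ is non-negative precisely because $\epsilon^{\text{w}}\ge 0$, and the interior mass term $\frac{1}{2}\frac{1}{\Knud}\|\tee{\sigma}\|_{L^2}^2$ forces $\tee{\sigma}\equiv 0$. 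Finally each CIP functional $j_\theta,j_{\te{u}},j_p$ is a positively weighted sum of edge integrals of squared normal-derivative jumps and is therefore non-negative since $\delta_\theta,\delta_{\te{u}},\delta_p>0$.

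For definiteness I would then argue that the sum can vanish only in a controlled way. The interior mass terms already pin down $\te{s}\equiv 0$ and $\tee{\sigma}\equiv 0$. After cancellation, the remaining fields $\theta_h,\te{u}_h,p_h$ enter the quadratic form solely through the jump seminorms (and $p_h$ additionally through the inflow term, which is active only for $\epsilon^{\text{w}}>0$). A vanishing seminorm $j_\star=0$ forces every interior jump $[\te{\nabla}(\cdot)\te{\cdot}\te{n}]$ to be zero, so the normal derivative of the field is continuous across all interior edges; combined with the $H^1$-conformity of the discrete spaces, no genuine inter-element variation survives and the field is driven into the constant mode, which is exactly what the hypothesis of non-constant $\theta_h,\te{u}_h,p_h$ excludes. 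Hence $\tilde{\mathcal{A}}(\te{\mathcal{U}_h},\te{\mathcal{U}_h})>0$ for every admissible non-zero argument.

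I expect the main obstacle to be this final implication, namely converting ``all normal-derivative jumps vanish'' into ``the field carries no definite energy.'' The honest subtlety is that the CIP seminorm has a non-trivial kernel, the globally smooth (for lowest order, the affine) piecewise polynomials, so strictly speaking one first obtains positive semi-definiteness and must then identify the kernel. I would therefore phrase definiteness modulo this kernel, treat the surviving constant modes as the expected gauge freedom analogous to the hydrostatic pressure constant in Stokes problems, and invoke the inflow term $(p+\sigma_{nn})^2$ with $\epsilon^{\text{w}}>0$ to remove the boundary-visible pressure mode whenever that datum is available.
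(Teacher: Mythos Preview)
Your proposal is correct and follows essentially the same approach as the paper: exploit the antisymmetric placement of the off-diagonal sub-functionals in \cref{eq_compundWeakFormAlternative} so that the quadratic form collapses to $a(\te{s}_h,\te{s}_h)+\bar d((\tee{\sigma}_h,p_h),(\tee{\sigma}_h,p_h))+j_\theta+j_{\te{u}}+j_p$, and then read off positivity from the squared structure of each surviving term. Your final paragraph on the CIP kernel is in fact more careful than the paper's own proof, which simply asserts $j_\star(\cdot,\cdot)>0$ for non-constant fields without confronting the globally smooth (for $\mathbb{P}_1$, affine) modes that also lie in the kernel.
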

\begin{proof}
  We use the notation \cref{eq_compundWeakFormAlternative} for \(\tilde{\mathcal{A}}\left(\te{\mathcal{U}_h},\te{\mathcal{V}_h}\right)\) and use antisymmetry of all non-diagonal sub-functionals to obtain
  \begin{equation}
    \tilde{\mathcal{A}}\left(\te{\mathcal{U}_h},\te{\mathcal{U}_h}\right)
    =
    a(\te{s}_h,\te{s}_h)
    +
    \bar{d}((\tee{\sigma}_h, p_h), (\tee{\sigma}_h, p_h))
    +
    j_\theta(\theta_h,\theta_h)
    +
    j_{\te{u}}(\te{u}_h,\te{u}_h)
    +
    j_p(p_h,p_h)
  \end{equation}
  For the stabilization terms, it holds by construction that
  \begin{equation}
    j_\theta(\theta_h,\theta_h)
    >0
    ,
    \quad
    j_{\te{u}}(\te{u}_h,\te{u}_h)
    >0
    ,
    \quad
    j_p(p_h,p_h)
    > 0
    ,
  \end{equation}
  for non-constant discrete fields and \(\delta_\star > 0\). The quadratic nature of the diagonal terms in \cref{eq_subf_a,eq_subf_dTilde} ensures positivity with
  \begin{equation}
    a(\te{s}_h,\te{s}_h) > 0 \; \forall \te{s}_h \ne \te{0}
    ,
    \quad
    \bar{d}((\tee{\sigma}_h, p_h), (\tee{\sigma}_h, p_h)) > 0 \; \forall (\tee{\sigma}_h, p_h) \ne (\te{0},0)
    .
  \end{equation}
  We can then directly follow that \(\tilde{\mathcal{A}}\left(\te{\mathcal{U}_h},\te{\mathcal{U}_h}\right) > 0 \; \forall \; \te{\mathcal{U}_h} \neq \te{0}\).
\end{proof}
\begin{remark}
  An analysis (similar to, e.g.,~\cite{burman2010interior}) would utilize the triple norm
  \begin{align}
    {||| \te{\mathcal{U}} |||}^2
    &=
    \frac{24}{25} \Knud \norm*{\mathrm{sym}(\te{\nabla}\te{s})}_{L^2,\Omega}^2
    + \frac{12}{25} \Knud \norm*{\mathrm{div}(\te{s})}_{L^2,\Omega}^2
    + \frac{4}{15} \frac{1}{\Knud} \norm*{\te{s}}_{L^2,\Omega}^2
    + \frac{1}{2} \frac{1}{\tilde{\chi}} \norm*{s_{n}}_{L^2,\Gamma}^2
    \nonumber
    \\
    &
    + \frac{12}{25} \tilde{\chi} \norm*{s_{t}}_{L^2,\Gamma}^2
    + \Knud \norm*{\text{stf}(\te{\nabla}\tee{\sigma})}_{L^2,\Omega}^2
    + \frac{4}{15} \frac{1}{\Knud} \norm*{\tee{\sigma}}_{L^2,\Omega}^2
    + \frac{9}{8} \tilde{\chi} \norm*{\sigma_{nn}}_{L^2,\Gamma}^2
    \nonumber
    \\
    &
    + \tilde{\chi} \norm*{ \sigma_{tt} + \frac{1}{2} \sigma_{nn} }_{L^2,\Gamma}^2
    + \frac{1}{\tilde{\chi}} \norm*{\sigma_{nt}}_{L^2,\Gamma}^2
    + \epsilon^{\text{w}} \tilde{\chi} \norm*{p + \sigma_{nn}}_{L^2,\Gamma}^2
    \nonumber
    \\
    &
    + j_\theta(\theta,\theta)
    + j_{\te{u}}(\te{u},\te{u})
    + j_p(p,p)
    ,
    \label{eq_tripleNorm}
  \end{align}
  where the \(L^2\)-scalar product \({(\star,\star)}_{D}\) over a domain \(D \subseteq \Omega\) defines the associated norm \({\lVert \star \rVert}_{L^2,D} = \sqrt{{(\star,\star)}_{D}}\). The stabilized weak form \cref{eq_weakFormStabilized} is coercive, using the norm \cref{eq_tripleNorm} with
  \begin{equation}
    \tilde{\mathcal{A}}\left(\te{\mathcal{U}_h},\te{\mathcal{U}_h}\right) \ge 1 \cdot {||| \te{\mathcal{U}}_h |||}^2 \quad \forall ~ \te{\mathcal{U}_h} \in \mathbb{H}_h
    .
  \end{equation}
\end{remark}

\section{Implementation}\label{s_implementationAndValidation}
The implementation of the compound weak form \cref{eq_compundWeakForm} uses the structured formulation of \cref{eq_discreteSystem} and schematically reads:
\begin{lstlisting}[
  style=pythonstyle,
  caption={Implementation of the stabilized compound weak form \(\tilde{\mathcal{A}}(\te{\mathcal{U}}_h,\te{\mathcal{V}}_h)\).},
  commentstyle=\color{comment_c}\rmfamily\itshape,
]
# 1) Left-hand sides, bilinear form A:
A[0] = a(s, r)     - b(theta, r) - c(r, sigma)   + 0         + 0
A[1] = b(kappa, s) + 0           + 0             + 0         + 0
A[2] = c(s, psi)   + 0           + d(sigma, psi) - e(u, psi) + f(p, psi)
A[3] = 0           + 0           + e(v, sigma)   + 0         + g(p, v)
A[4] = 0           + 0           + f(q, sigma)   - g(q, u)   + h(p, q)
# 2) Right-hand sides, linear functional L:
L[0] = - sum([
  n(r) * bcs[bc]["theta_w"] * df.ds(bc)
  for bc in bcs.keys()
])
# [...]
self.form_lhs = sum(A) + cip * j_theta(theta, kappa) + j_u(u, v) + j_p(p, q)
self.form_rhs = sum(L)
# [...]
df.solve(self.form_lhs == self.form_rhs, sol, [])
\end{lstlisting}
Here, we see the one-to-one correspondence between the underlying mathematics and the resulting source code. There is also no need to supply Dirichlet boundary conditions (observe ``\texttt{[]}'') to the ``\texttt{df.solve}''-routine. The weak formulation includes all boundary equations naturally in a weak sense. However, the sub-functionals still contain higher-order differential operators, e.g., the symmetric and trace-free part of a 2-tensor Jacobian \({(\te{\nabla}\tee{\sigma})}_{\text{stf}}\). For such higher-order tensors, not all required operators are available in the UF language 2019.1.0 (e.g., ``\texttt{Deviatoric}'' and ``\texttt{Trace}'' in ``\texttt{tensoralgebra.py}'' of~\cite{fenics2020uflRepo}). The same also applies to the DOLFIN C++ interface that uses the same underlying UFL to compile weak forms with the FFC\@.
\par
Therefore, we use the extension capabilities of the UFL and define the required operators using Einstein's index notation. This section first presents the important implementation aspects, focusing mainly on additional differential operators in \cref{ss_tensorFEMinFenics} and CIP stabilization in \cref{ss_cipStabilizationInFenics}.

\subsection{Tensorial Mixed FEM in FEniCS}\label{ss_tensorFEMinFenics}
An important implementation detail is treating the symmetric and trace-free operator for a tensor rank of two. For strictly two-dimensional problems, it is possible to use the default built-in ``\texttt{sym}''-, ``\texttt{tr}''- and ``\texttt{dev}''-operators of FEniCS/UFL successively. However, assuming a three-dimensional and \(z\)-homogenous problem (as stated in~\cref{ss_linearization}) requires a change in the operator. Following~\cite{fenics2019uflDocumentation}, UFL defines the deviatoric part of a 2-tensor as
\begin{equation}
  {(\tee{A})}_{\mathrm{dev}} = \tee{A} - \frac{A_{ii}}{d} \tee{I},
\end{equation}
using the dimension \(d\) and Einstein's summation notation for \(A_{ii}\). Performing computations on a two-dimensional mesh \(\Omega \in \mathbb{R}^2\) leads to \(d=2\). However, in our work, we assume homogeneity in the third spatial dimension (such that none of the relevant fields depends on the \(z\)-coordinate). A modified STF-operator for 2-tensors is, thus, used to obtain the definition of \cref{eq_rank2stf}. For \(\tee{A} \in \mathbb{R}^{2 \times 2}\), the modified operator is, for our purposes, defined as
\begin{equation}\label{eq:heatStfModification}
  {(\tee{A})}_{\text{stf}} = \frac{1}{2} \left( \tee{A} + \tee{A}^{\mathrm{T}} \right) - \frac{A_{ii}}{3} \tee{I}.
\end{equation}
The corresponding implementation of \({(\star)}_{\text{stf}}\), therefore, artificially assumes \(d=3\) and reads:
\begin{lstlisting}[
  style=pythonstyle,
  caption={3D STF-operator for 2D 2-tensors.},
  commentstyle=\color{comment_c}\rmfamily\itshape,
]
def stf3d2(rank2_2d):
    symm = 1/2 * (rank2_2d + ufl.transpose(rank2_2d))
    return symm - (1/3) * ufl.tr(symm) * ufl.Identity(2)
\end{lstlisting}
\par
The construction of the Frobenius inner product of two 3-tensors (as in \(\text{stf}(\te{\nabla}\tee{\sigma}) \teee{\because} \text{stf}(\te{\nabla}\tee{\psi})\)) needs more auxiliary functions. It is possible to implement the system component-wise and solve for \(p,u_x,u_y,\sigma_{xx},\sigma_{xy},\sigma_{yy}\) after expanding all operators in the weak form \cref{eq_subf_d}. However, this would increase the complexity even more and is error-prone. Using a computer algebra system to calculate \({(\nabla \tee{\sigma})}_{\text{stf}}\) reveals 18 different terms for only one tensorial expression. Therefore, we will again make extensive use of the tensor capabilities provided by FEniCS and UFL to avoid computing such expressions and have the corresponding source code in a compact form. In fact, up to second-order tensors, all UFL operators are intuitive, except for the already discussed 3D information encoding. However, dealing with 3-tensors is not straight-forward because FEniCS lacks implementations of the required operators in its current version.
\par
First of all, using a two-dimensional mesh leads to creating only two spatial variables acting in the differential operators. To respect the highest-order moments' shape assumptions \cref{eq:stressTensorShapes}, we define a lifting operator \(L\), mapping a 2D 2-tensor artificially to a trace-free 3D 2-tensor. The definition of
\begin{equation}
  L : \mathbb{R}^{2 \times 2} \rightarrow \mathbb{R}_{\mathrm{TF}}^{3 \times 3}
  ,
  \begin{pmatrix}
    a & b \\
    c & d
  \end{pmatrix}
  \mapsto
  \begin{pmatrix}
    a & b & 0 \\
    c & d & 0 \\
    0 & 0 & -(a+d)
  \end{pmatrix}
  ,
\end{equation}
implements as:
\begin{lstlisting}[
  style=pythonstyle,
  caption={Custom operator to lift a 2D 2-tensor to a 3D STF 2-tensor.},
  commentstyle=\color{comment_c}\rmfamily\itshape,
]
def gen3dTF2(rank2_2d):
    return df.as_tensor([
        [rank2_2d[0, 0], rank2_2d[0, 1], 0],
        [rank2_2d[1, 0], rank2_2d[1, 1], 0],
        [0, 0, -rank2_2d[0, 0]-rank2_2d[1, 1]]
    ])
\end{lstlisting}
The gradient operator is extended in a similar fashion accounting for the third dimension as:
\begin{lstlisting}[
  style=pythonstyle,
  caption={Custom gradient operator to account for three dimensions.},
  commentstyle=\color{comment_c}\rmfamily\itshape,
]
def grad3dOf2(rank2_3d):
    grad2d = df.grad(rank2_3d)
    dim3 = df.as_tensor([[0, 0, 0], [0, 0, 0], [0, 0, 0]])
    grad3d = df.as_tensor([grad2d[:, :, 0], grad2d[:, :, 1], dim3[:, :]])
    return grad3d
\end{lstlisting}
With these operators at hand, it is now possible to evaluate \({(\nabla \tee{\sigma})}_{\text{stf}}\). We use the definition \cref{eq_rank3stf} of the symmetric and trace-free part of a 3-tensor directly in FEniCS, including all Einstein summation conventions. The implementation of the corresponding function then reads:
\begin{lstlisting}[
  style=pythonstyle,
  caption={Custom operator to obtain the STF-part of a 3-tensor.},
  commentstyle=\color{comment_c}\rmfamily\itshape,
]
def stf3d3(rank3_3d):
    i, j, k, l = ufl.indices(4)
    delta = df.Identity(3)
    sym_ijk = sym3d3(rank3_3d)[i, j, k]
    traces_ijk = 1/5 * (
        + sym3d3(rank3_3d)[i, l, l] * delta[j, k]
        + sym3d3(rank3_3d)[l, j, l] * delta[i, k]
        + sym3d3(rank3_3d)[l, l, k] * delta[i, j]
    )
    tracefree_ijk = sym_ijk - traces_ijk
    return ufl.as_tensor(tracefree_ijk, (i, j, k))
\end{lstlisting}
Here, the symmetric part of a 3-tensor is the average of all possible transpositions, as defined in \cref{eq_rank3sym}, and is translated into UFL code using:
\begin{lstlisting}[
  style=pythonstyle,
  caption={Custom operator to obtain the symmetric part of a 3-tensor.},
  commentstyle=\color{comment_c}\rmfamily\itshape,
]
def sym3d3(rank3_3d):
    i, j, k = ufl.indices(3)
    symm_ijk = 1/6 * (
        # All permutations
        + rank3_3d[i, j, k] + rank3_3d[i, k, j] + rank3_3d[j, i, k]
        + rank3_3d[j, k, i] + rank3_3d[k, i, j] + rank3_3d[k, j, i]
    )
    return ufl.as_tensor(symm_ijk, (i, j, k))
\end{lstlisting}
These auxiliary functions are implemented in a separate ``\texttt{tensoroperations}''-module and allow the desired one-to-one correlation between the mathematical formulation and the corresponding implementation of the weak formulation. In general, the summation convention capabilities of FEniCS would also allow tackling even higher-order moment equations, such as the R26 equations~\cite{gu2009high}, if auxiliary \(n\)-tensor operators are defined. The implementation of \(d(\sigma,\psi)\), as the most complex bilinear form, is then schematically obtained as:
\begin{lstlisting}[
  style=pythonstyle,
  caption={Schemtic implementation of the bilinear form \(d(\sigma,\psi)\).},
  commentstyle=\color{comment_c}\rmfamily\itshape,
]
def d(sigma_, psi_):
return (
    kn * df.inner(
        to.stf3d3(to.grad3dOf2(to.gen3dTF2(sigma_))),
        to.stf3d3(to.grad3dOf2(to.gen3dTF2(psi_)))
    )
    + (1/(2*kn)) * df.inner(
        to.gen3dTF2(sigma_), to.gen3dTF2(psi_)
    )
) * df.dx # + [...]
\end{lstlisting}

\subsection{CIP Stabilization in FEniCS}\label{ss_cipStabilizationInFenics}
The CIP stabilization method's implementation uses the support for discontinuous Galerkin (DG) operators in the UFL\@. Integration of all interior edges, as a subset of all edges, can be archived using ``\texttt{dS}'' instead of ``\texttt{ds}'' (which only acts on boundary edges). The resulting implementation then intuitively reads, e.g., for the scalar and vector stabilization functionals \(j_\theta\), \(j_{\te{u}}\):
\begin{lstlisting}[
  style=pythonstyle,
  caption={Implementation of CIP stabilization.},
  commentstyle=\color{comment_c}\rmfamily\itshape,
]
# Define custom measeasures for boundary edges and inner edges
df.ds = df.Measure("ds", domain=mesh, subdomain_data=boundaries)
df.dS = df.Measure("dS", domain=mesh, subdomain_data=boundaries)
# Define mesh measuers
h_msh = df.CellDiameter(mesh)
h_avg = (h_msh("+") + h_msh("-"))/2.0
# 3) CIP Stabilization:
def j_theta(theta, kappa):
    return (
        + delta_theta * h_avg**3 *
        df.jump(df.grad(theta), n_vec) * df.jump(df.grad(kappa), n_vec)
    ) * df.dS
def j_u(u, v):
    return (
        + delta_u * h_avg**3 *
        df.dot(df.jump(df.grad(u), n_vec), df.jump(df.grad(v), n_vec))
    ) * df.dS
\end{lstlisting}

\section{Convergence Study Based on Mesh Refinement}\label{s_validation}
We perform a convergence study to validate the numerical method, comparing the discrete solutions to their exact solutions. The solver repository~\cite{theisen2020fenicsr13Zenodo} includes all exact solutions for reproducibility. Computations on a series of refined meshes reveal the method's numerical convergence properties and show convergence with increased mesh resolution.

\subsection{Computational Domain and Test Case Parameters}\label{ss_convergenceStudy}
We consider the ring domain \(\Omega \subset \mathbb{R}^2\) as the area between two coaxial circles with radii \(R_1\) and \(R_2\) as
\begin{equation}
  \Omega = \left\{ \te{x} = {(x,y)}^{\mathrm{T}} : R_1 \le {\lVert \te{x} \rVert}_2 \le R_2 \right\}
  .
\end{equation}
We choose \(R_1=0.5\) and \(R_2=2\), which follows previous works~\cite{torrilhon2017hierarchical,westerkamp2019finite}. The inner boundary corresponds to \(\Gamma_1\) and the outer circle to \(\Gamma_2\). The particular domain \(\Omega\) avoids sharp corners, and a prescription of boundary fluxes does not produce any problems because the circle origin \((0,0)\) is not part of the domain. The computation of exact solutions is, therefore, possible. As already mentioned, we assume 2D problems as simplifications for 3D problems with full symmetry and homogeneity in the third spatial direction.
\par
To test the numerical method, we consider a series of general and unstructured triangular meshes without spatial refinement. As a result, the discretized domain contains approximately similar cell sizes. This very general setup does not take any properties of curved domains into account. The mesh resolution at the inner boundary, for example, is equal to the resolution at the outer boundary, although the curvatures are not equal.
This general approach allows us to test the numerical method for principal correctness using the most general mesh types.
\par
We use the mesh generator Gmsh~\cite{geuzaine2009gmsh} to create a series of ring meshes. Note that no element split is applied to obtain the refined meshes, and finer meshes result from a complete re-meshing procedure with a lower cell size factor. The maximum cell size \(h_{\max}\) characterizes the mesh resolution. The \cref{fig:heatMeshes} presents exemplarily meshes with their corresponding \(h_{\max}\). In contrast to~\cite{westerkamp2019finite}, FEniCS cannot utilize isoparametric higher-order boundary representations in the current implementation. \(L^2\)-convergence rates beyond second-order are therefore not expected.
\par
We perform all calculations in a Docker container on an iMac 2017 with a 3.4 GHz Intel Core i5--7500 CPU and 48GiB memory. The resulting discrete systems were solved with the direct solver MUMPS~\cite{MUMPS:1,MUMPS:2}, shipped with FEniCS\@.
\begin{figure}[t]
  \begin{subfigure}[c]{0.24\textwidth}
    \includegraphics[width=\linewidth]{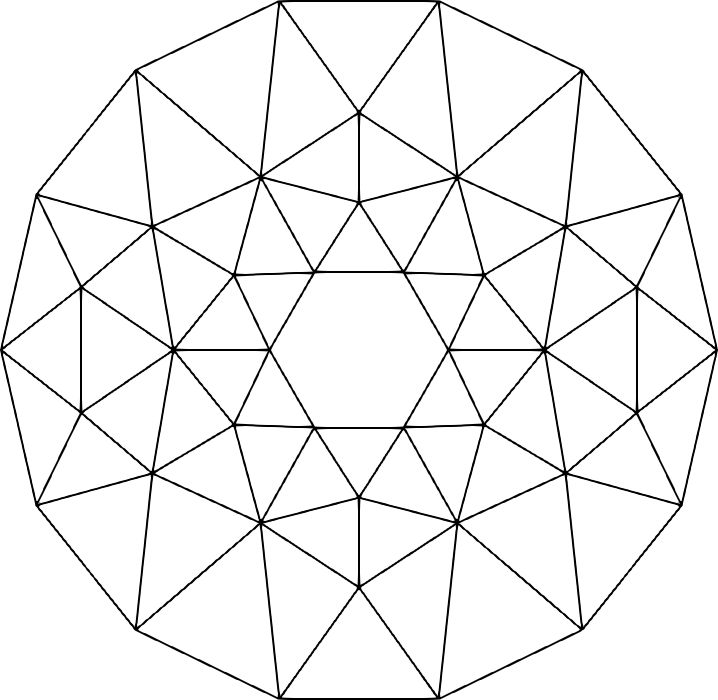}
    \subcaption{
      \(h_{\max}=0.99\).
    }\label{sfig:mesh0}
  \end{subfigure}
  \hfill
  \begin{subfigure}[c]{0.24\textwidth}
    \includegraphics[width=\linewidth]{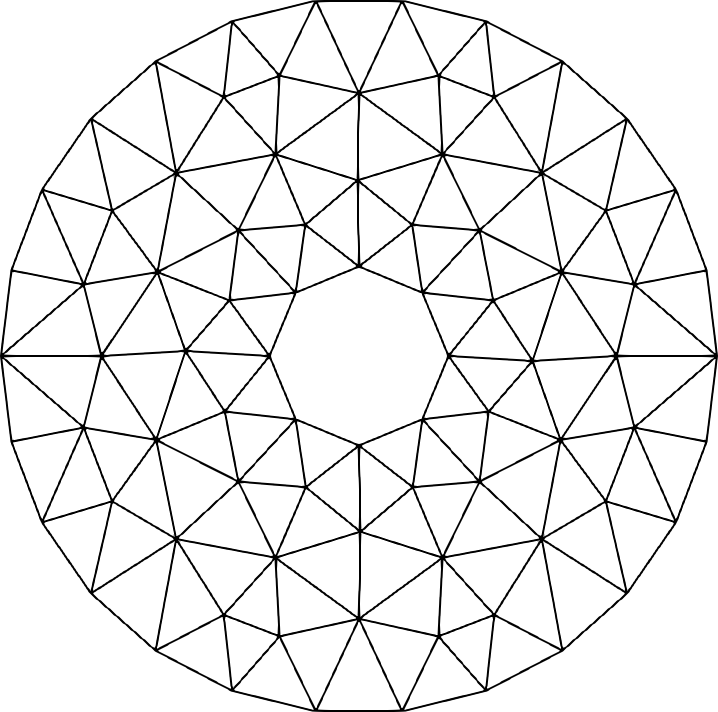}
    \subcaption{
      \(h_{\max}=0.63\).
    }\label{sfig:mesh1}
  \end{subfigure}
  \hfill
  \begin{subfigure}[c]{0.24\textwidth}
    \includegraphics[width=\linewidth]{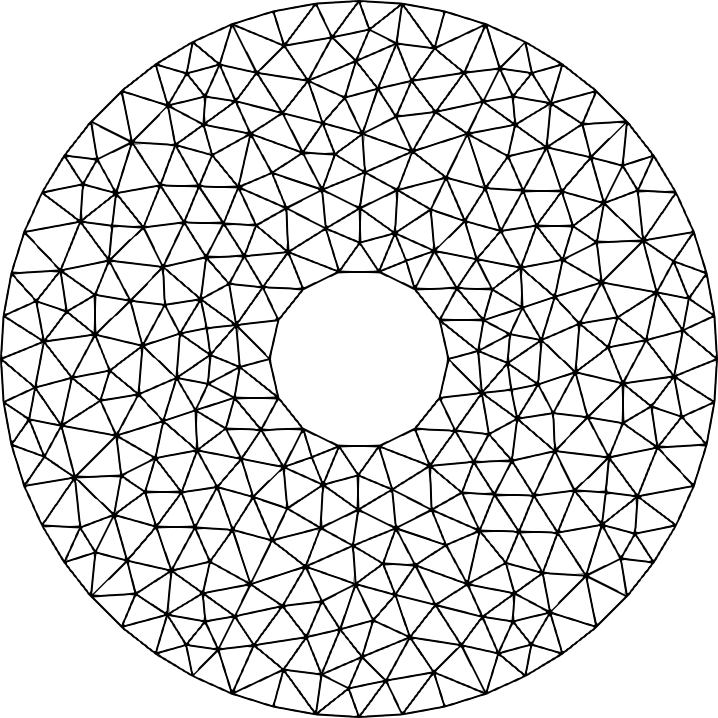}
    \subcaption{
      \(h_{\max}=0.33\).
    }\label{sfig:mesh2}
  \end{subfigure}
  \hfill
  \begin{subfigure}[c]{0.24\textwidth}
    \includegraphics[width=\linewidth]{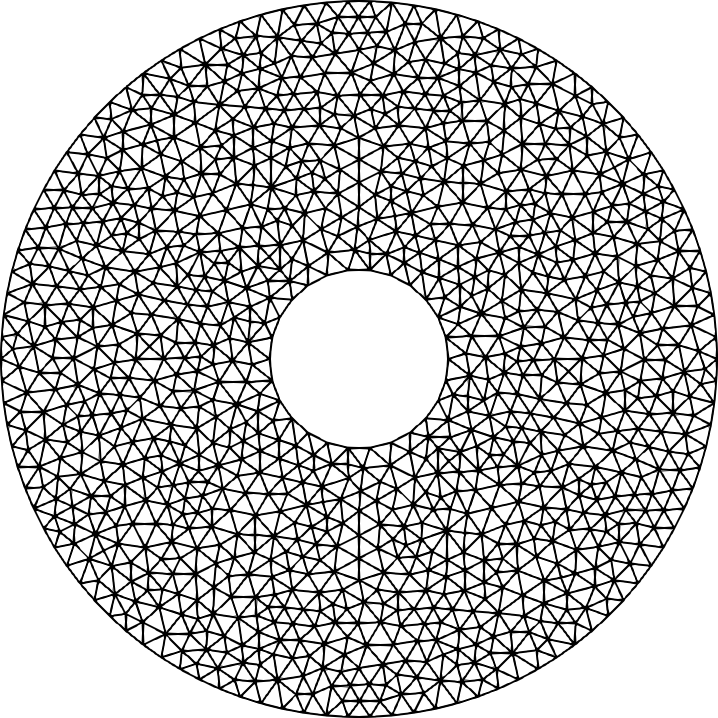}
    \subcaption{
      \(h_{\max}=0.17\).
    }\label{sfig:mesh3}
  \end{subfigure}
  \caption{
    Series of unstructured triangular meshes used for the convergence study: Note that the coarsest mesh is not uniform to sufficiently resolve the inner boundary. With finer meshes, this effect vanishes. The bounding box of the mesh slightly varies due to the curved boundary and the node placement.
  }\label{fig:heatMeshes}
  \Description{
    Series of unstructured triangular meshes used for the convergence study: Note that the coarsest mesh is not uniform to sufficiently resolve the inner boundary. With finer meshes, this effect vanishes. The bounding box of the mesh slightly varies due to the curved boundary and the node placement.
  }
\end{figure}
\paragraph{Error Measures}
To rate the success of the numerical method, we introduce relevant error measures. The standard relative \(L^2\)-function error is
\begin{equation}
  e_{L^2} = \frac{
    {\lVert f_{\mathrm{ex}}-f_h\rVert}_{L^2(\Omega)}
  }{\max{\left\{ f_{\mathrm{ex}}|_n\right\}}_{n \in \eta}},
\end{equation}
where \(f_{\mathrm{ex}}\) denotes the exact solution, and  \(\eta\) is the set of all mesh nodes.
\par
The typical way to obtain a discrete solution is in terms of node values. Therefore, particular interest should also be given to these particular points, using a relative error \(e_{l^\infty}\). We define this vector-based error as
\begin{equation}
  e_{l^\infty} = \frac{
    {\lVert {\left\{ f_{\mathrm{ex}}|_n-f_h|_n \right\}}_{n \in \eta} \rVert}_{l^\infty(\eta)}
    }{\max{\left\{ f_{\mathrm{ex}}|_n\right\}}_{n \in \eta}},
\end{equation}
while \(l^\infty\) is used to indicate that this error only considers point-wise errors based on mesh node values. If \(e_{l^\infty}\) decays to zero for refined meshes, we have ensured that for all points of interest \textendash\ i.e., the mesh nodes \textendash\ the solution converges towards the exact solution.

\subsection{Homogenous Flow Around Cylinder}
The test case considers a flow scenario with inflow and outflow boundary conditions, similar to~\cite{torrilhon2017hierarchical}. The velocity prescription coefficient, therefore, is \(\epsilon^{\mathrm{w}} \ne 0\). The outer wall is not impermeable but only acts as a cut-off from a larger homogenous velocity field. The inner cylinder wall is modeled as a non-rotating impermeable wall with zero velocity in normal direction \(u_n^{\mathrm{w}}|_{\Gamma_1}=0\) and zero tangential velocity \(u_t^{\mathrm{w}}|_{\Gamma_1}=0\) prescribed. A temperature difference is applied between the inner and the outer cylinder walls with \(\theta^\mathrm{w}|_{\Gamma_1} = 1\) and \(\theta^\mathrm{w}|_{\Gamma_2} = 2\) to render the case more complicated.
\par
A pressure difference drives the flow at the outer cylinder wall with \(p^{\mathrm{w}}|_{\Gamma_2} = - p_0 n_x\), in which the background pressure is set to \(p_0=0.27\), and \(n_x\) is equal to \(\cos(\phi)\) for the considered geometry. The velocity components at the outer boundary are set to \(u_n^{\mathrm{w}}|_{\Gamma_2} = u_0 n_x\) and \(u_t^{\mathrm{w}}|_{\Gamma_2} = - u_0 n_y\) with background velocity \(u_0 = 1\) and \(n_y = \sin(\phi)\). The remaining parameters read \(\epsilon^{\mathrm{w}}|_{\Gamma_1} = 10^{-3}\) to focus on velocity prescription, \(\epsilon^{\mathrm{w}}|_{\Gamma_2} = 10^{3}\) to focus on pressure prescription, \(\Knud = 1\) as flow characterization, and \(\tilde{\chi}=1\) as wall parameter.
\par
We will first consider \(\mathbb{P}_2\mathbb{P}_1\mathbb{P}_1\mathbb{P}_2\mathbb{P}_1\) elements (corresponding to the fields \(\tee{\sigma},\te{u},p,\te{s},\theta\) in order) without CIP stabilization. One could see this element combination as a generalization to classical Taylor--Hood elements with two hierarchies of moment sets: the heat-system and the stress-system variables. We choose \(k+1\) for both systems as the polynomial order for the highest-order moments (\(\tee{\sigma}\) and \(\te{s}\)) and \(k\) for all other fields. The resulting errors in \cref{fig:r13_1_coeffs_nosources_norot_inflow_p1p2p1p1p2_nostab} show an almost optimal convergence in the \(L^2\)-error measure. The node-values in the \(l^\infty\)-error also convergence but with a decreased rate for velocity and temperature fields.
\par
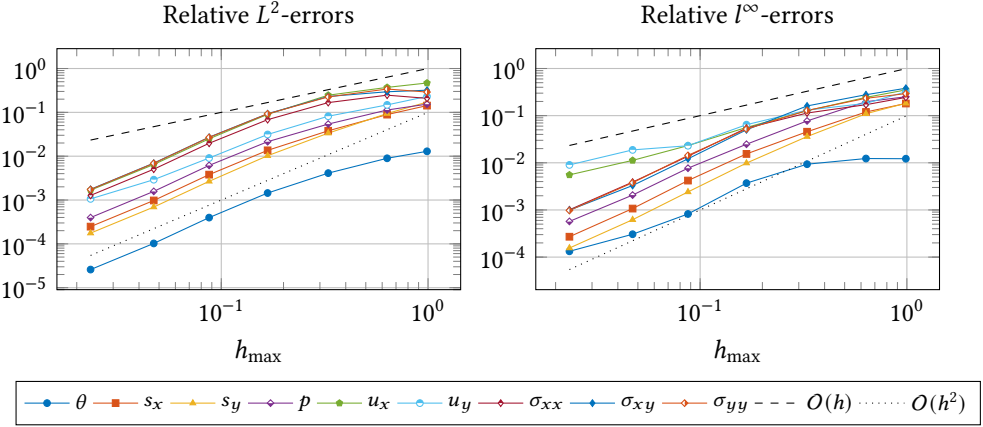
\begin{figure}[t]
  \newcommand{\datapath}{./data}%
  \newcommand{\firstOrderFactor}{1}%
  \newcommand{\secondOrderFactor}{0.1}%
  \newcommand{\errorfile}{convergence/article/r13_1_coeffs_nosources_norot_inflow_p1p2p1p1p2_nostab/errors.csv}%
  \newcommand{\fieldList}{theta/\(\theta\),sx/\(s_x\),sy/\(s_y\),p/\(p\),ux/\(u_x\),uy/\(u_y\),sigmaxx/\(\sigma_{xx}\),sigmaxy/\(\sigma_{xy}\),sigmayy/\(\sigma_{yy}\)}%
  \centering
  \begin{tikzpicture}[
]
  \providecommand{\datapath}{../../../data}
  \providecommand{\errortype}{L_2}
  \providecommand{\errorfile}{convergence/r13_1_coeffs_nosources_norot_inflow_p1p1p1p1p1_stab/errors.csv}
  \providecommand{\fieldList}{theta/\(\theta\),sx/\(s_x\),sy/\(s_y\),p/\(p\),ux/\(u_x\),uy/\(u_y\),sigmaxx/\(\sigma_{xx}\),sigmaxy/\(\sigma_{xy}\),sigmayy/\(\sigma_{yy}\)}
  \providecommand{\firstOrderFactor}{1}
  \providecommand{\secondOrderFactor}{0.2}

  \definecolor{color1}{rgb}{0,    ,0.4470,0.7410}
  \definecolor{color2}{rgb}{0.8500,0.3250,0.0980}
  \definecolor{color3}{rgb}{0.9290,0.6940,0.1250}
  \definecolor{color4}{rgb}{0.4940,0.1840,0.5560}
  \definecolor{color5}{rgb}{0.4660,0.6740,0.1880}
  \definecolor{color6}{rgb}{0.3010,0.7450,0.9330}
  \definecolor{color7}{rgb}{0.6350,0.0780,0.1840}
  \pgfplotscreateplotcyclelist{matlab}{
    color1,every mark/.append style={solid},mark=*\\
    color2,every mark/.append style={solid},mark=square*\\
    color3,every mark/.append style={solid},mark=triangle*\\
    color4,every mark/.append style={solid},mark=halfsquare*\\
    color5,every mark/.append style={solid},mark=pentagon*\\
    color6,every mark/.append style={solid},mark=halfcircle*\\
    color7,every mark/.append style={solid,rotate=180},mark=halfdiamond*\\
    color1,every mark/.append style={solid},mark=diamond*\\
    color2,every mark/.append style={solid},mark=halfsquare right*\\
    color3,every mark/.append style={solid},mark=halfsquare left*\\
  }

  \tikzset{
    mark options={
      mark size=1.25,
    }
  }

  \pgfplotstableread[col sep=comma, sci, precision=10]{\datapath/\errorfile}{\datatable}
  \pgfplotstablegetrowsof{\datatable} 
  \pgfmathsetmacro{\numRows}{\pgfplotsretval-1}
  \pgfmathsetmacro{\numRowsM}{\numRows-1}
  \pgfmathsetmacro{\errMax}{0}
  \pgfmathsetmacro{\errMin}{10000}

  \pgfplotstablegetelem{0}{h}\of{\datatable}
  \pgfmathsetmacro{\hFirst}{\pgfplotsretval}
  \pgfplotstablegetelem{\numRows}{h}\of{\datatable}
  \pgfmathsetmacro{\hLast}{\pgfplotsretval}
  \pgfplotstablegetelem{\numRowsM}{h}\of{\datatable}
  \pgfmathsetmacro{\hLastM}{\pgfplotsretval}

  \begin{groupplot}[
    group style={
      group size= 2 by 1,
    },
    xlabel={\(h_{\max}\)},
    grid=major,
    cycle list name=matlab,
    width=\textwidth/2,
    xmode=log,
    ymode=log,
    ytick distance={10},
    height=4.75cm,
  ]
    \nextgroupplot[
      title={Relative \(L^2\)-errors},
      legend to name={CommonLegend_\errorfile},
      legend style={
        legend columns=-1,
        font=\footnotesize
      }
    ]
      \foreach \i/\j in \fieldList {
        \addplot table[x=h, y=\i_L_2, col sep=comma]{\datatable};
        \addlegendentryexpanded{\j} 
      }
      \addplot [dashed, domain=\hLast:\hFirst] {\firstOrderFactor*x^1};
      \addlegendentryexpanded{$\mathcal{O}(h)$}
      \addplot [dotted, domain=\hLast:\hFirst] {\secondOrderFactor*x^2};
      \addlegendentryexpanded{$\mathcal{O}(h^2)$}
    \nextgroupplot[
      title={Relative \(l^\infty\)-errors}
    ]
      \foreach \i/\j in \fieldList {
        \addplot table[x=h, y=\i_l_inf, col sep=comma]{\datatable};
      }
      \addplot [dashed, domain=\hLast:\hFirst] {\firstOrderFactor*x^1};
      \addplot [dotted, domain=\hLast:\hFirst] {\secondOrderFactor*x^2};

    \end{groupplot}
    \coordinate (c3) at ($(group c1r1)!.5!(group c2r1)$); \node[below] at (c3 |- current bounding box.south) {\pgfplotslegendfromname{CommonLegend_\errorfile}};
\end{tikzpicture}
  \caption{
      Relative errors using unstabilized \(\mathbb{P}_2\mathbb{P}_1\mathbb{P}_1\mathbb{P}_2\mathbb{P}_1\) elements for the homogenous flow around a cylinder: Almost all fields have second-order convergence rates in the \(L^2\)-norm. In the \(l^\infty\)-norm, reduced rates are observed, but at least first-order convergence is guaranteed.
    }\label{fig:r13_1_coeffs_nosources_norot_inflow_p1p2p1p1p2_nostab}
    \Description{
      Relative errors using unstabilized \(\mathbb{P}_2\mathbb{P}_1\mathbb{P}_1\mathbb{P}_2\mathbb{P}_1\) elements for the homogenous flow around a cylinder: Almost all fields have second-order convergence rates in the \(L^2\)-norm. In the \(l^\infty\)-norm, reduced rates are observed, but at least first-order convergence is guaranteed.
    }
\end{figure}
\par
However, an increased discretization order for the highest-order fields is often not desired from an engineering perspective. For example, knowing the flow field's velocity gives more practical insight than knowing its stress tensor. We, therefore, aim also to use equal-order \(\mathbb{P}_1\) elements using the proposed CIP stabilization. The resulting errors for the same test case, using the stabilized setup, are presented in \cref{fig:r13_1_coeffs_nosources_norot_inflow_p1p1p1p1p1_stab}. The set of stabilization parameters \(\delta_\theta=1,\delta_{\te{u}}=1, \delta_p=0.01\) stems from~\cite{westerkamp2019finite}. We observe a decrease in relative accuracy and convergence order for the \(\theta\)-field Compared to the unstabilized setup. Parameter tuning for the \(\delta_\star\)-values might improve the numerical properties even more.
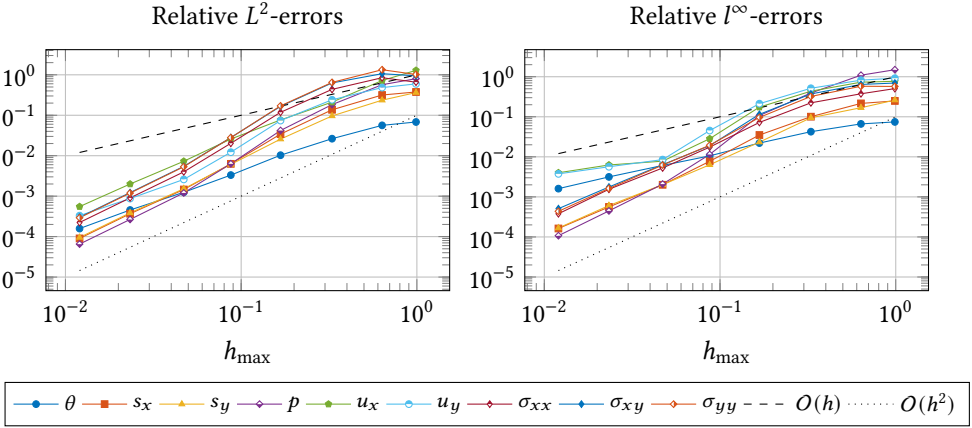
\begin{figure}[t]
  \newcommand{\datapath}{./data}%
  \newcommand{\firstOrderFactor}{1}%
  \newcommand{\secondOrderFactor}{0.1}%
  \newcommand{\errorfile}{convergence/article/r13_1_coeffs_nosources_norot_inflow_p1p1p1p1p1_stab/errors.csv}%
  \newcommand{\fieldList}{theta/\(\theta\),sx/\(s_x\),sy/\(s_y\),p/\(p\),ux/\(u_x\),uy/\(u_y\),sigmaxx/\(\sigma_{xx}\),sigmaxy/\(\sigma_{xy}\),sigmayy/\(\sigma_{yy}\)}%
  \centering
  \begin{tikzpicture}[
]
  \providecommand{\datapath}{../../../data}
  \providecommand{\errortype}{L_2}
  \providecommand{\errorfile}{convergence/r13_1_coeffs_nosources_norot_inflow_p1p1p1p1p1_stab/errors.csv}
  \providecommand{\fieldList}{theta/\(\theta\),sx/\(s_x\),sy/\(s_y\),p/\(p\),ux/\(u_x\),uy/\(u_y\),sigmaxx/\(\sigma_{xx}\),sigmaxy/\(\sigma_{xy}\),sigmayy/\(\sigma_{yy}\)}
  \providecommand{\firstOrderFactor}{1}
  \providecommand{\secondOrderFactor}{0.2}

  \definecolor{color1}{rgb}{0,    ,0.4470,0.7410}
  \definecolor{color2}{rgb}{0.8500,0.3250,0.0980}
  \definecolor{color3}{rgb}{0.9290,0.6940,0.1250}
  \definecolor{color4}{rgb}{0.4940,0.1840,0.5560}
  \definecolor{color5}{rgb}{0.4660,0.6740,0.1880}
  \definecolor{color6}{rgb}{0.3010,0.7450,0.9330}
  \definecolor{color7}{rgb}{0.6350,0.0780,0.1840}
  \pgfplotscreateplotcyclelist{matlab}{
    color1,every mark/.append style={solid},mark=*\\
    color2,every mark/.append style={solid},mark=square*\\
    color3,every mark/.append style={solid},mark=triangle*\\
    color4,every mark/.append style={solid},mark=halfsquare*\\
    color5,every mark/.append style={solid},mark=pentagon*\\
    color6,every mark/.append style={solid},mark=halfcircle*\\
    color7,every mark/.append style={solid,rotate=180},mark=halfdiamond*\\
    color1,every mark/.append style={solid},mark=diamond*\\
    color2,every mark/.append style={solid},mark=halfsquare right*\\
    color3,every mark/.append style={solid},mark=halfsquare left*\\
  }

  \tikzset{
    mark options={
      mark size=1.25,
    }
  }

  \pgfplotstableread[col sep=comma, sci, precision=10]{\datapath/\errorfile}{\datatable}
  \pgfplotstablegetrowsof{\datatable} 
  \pgfmathsetmacro{\numRows}{\pgfplotsretval-1}
  \pgfmathsetmacro{\numRowsM}{\numRows-1}
  \pgfmathsetmacro{\errMax}{0}
  \pgfmathsetmacro{\errMin}{10000}

  \pgfplotstablegetelem{0}{h}\of{\datatable}
  \pgfmathsetmacro{\hFirst}{\pgfplotsretval}
  \pgfplotstablegetelem{\numRows}{h}\of{\datatable}
  \pgfmathsetmacro{\hLast}{\pgfplotsretval}
  \pgfplotstablegetelem{\numRowsM}{h}\of{\datatable}
  \pgfmathsetmacro{\hLastM}{\pgfplotsretval}

  \begin{groupplot}[
    group style={
      group size= 2 by 1,
    },
    xlabel={\(h_{\max}\)},
    grid=major,
    cycle list name=matlab,
    width=\textwidth/2,
    xmode=log,
    ymode=log,
    ytick distance={10},
    height=4.75cm,
  ]
    \nextgroupplot[
      title={Relative \(L^2\)-errors},
      legend to name={CommonLegend_\errorfile},
      legend style={
        legend columns=-1,
        font=\footnotesize
      }
    ]
      \foreach \i/\j in \fieldList {
        \addplot table[x=h, y=\i_L_2, col sep=comma]{\datatable};
        \addlegendentryexpanded{\j} 
      }
      \addplot [dashed, domain=\hLast:\hFirst] {\firstOrderFactor*x^1};
      \addlegendentryexpanded{$\mathcal{O}(h)$}
      \addplot [dotted, domain=\hLast:\hFirst] {\secondOrderFactor*x^2};
      \addlegendentryexpanded{$\mathcal{O}(h^2)$}
    \nextgroupplot[
      title={Relative \(l^\infty\)-errors}
    ]
      \foreach \i/\j in \fieldList {
        \addplot table[x=h, y=\i_l_inf, col sep=comma]{\datatable};
      }
      \addplot [dashed, domain=\hLast:\hFirst] {\firstOrderFactor*x^1};
      \addplot [dotted, domain=\hLast:\hFirst] {\secondOrderFactor*x^2};

    \end{groupplot}
    \coordinate (c3) at ($(group c1r1)!.5!(group c2r1)$); \node[below] at (c3 |- current bounding box.south) {\pgfplotslegendfromname{CommonLegend_\errorfile}};
\end{tikzpicture}
  \caption{
      Relative errors using stabilized \(\mathbb{P}_1\) equal-order elements for the homogenous flow around a cylinder: All fields except for \(\theta\) have second-order convergence rates in the \(L^2\)-norm. In the \(l^\infty\)-norm, reduced rates are observed for the stabilized fields \(\theta,\te{u}\).
    }\label{fig:r13_1_coeffs_nosources_norot_inflow_p1p1p1p1p1_stab}
    \Description{
      Relative errors using stabilized \(\mathbb{P}_1\) equal-order elements for the homogenous flow around a cylinder: All fields except for \(\theta\) have second-order convergence rates in the \(L^2\)-norm. In the \(l^\infty\)-norm, reduced rates are observed for the stabilized fields \(\theta,\te{u}\).
    }
\end{figure}
\par
The convergence behavior is improved using \(\mathbb{P}_2\) equal-order elements in \cref{fig:r13_1_coeffs_nosources_norot_inflow_p2p2p2p2p2_stab}. We now have second-order convergence for all fields in \(L^2\). However, using only a first-order boundary approximation for a curved domain limits the convergence rate, as discussed in~\cite{westerkamp2019finite}. An inspection of the discrete solution reveals the dominant error at the inner curved boundary, confirming the above considerations.
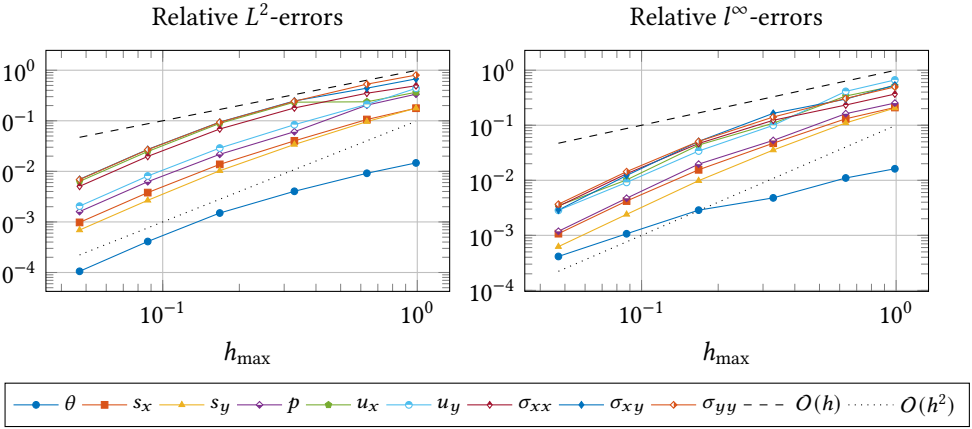
\begin{figure}[t]
  \newcommand{\datapath}{./data}%
  \newcommand{\firstOrderFactor}{1}%
  \newcommand{\secondOrderFactor}{0.1}%
  \newcommand{\errorfile}{convergence/article/r13_1_coeffs_nosources_norot_inflow_p2p2p2p2p2_stab/errors.csv}%
  \newcommand{\fieldList}{theta/\(\theta\),sx/\(s_x\),sy/\(s_y\),p/\(p\),ux/\(u_x\),uy/\(u_y\),sigmaxx/\(\sigma_{xx}\),sigmaxy/\(\sigma_{xy}\),sigmayy/\(\sigma_{yy}\)}%
  \centering
  \begin{tikzpicture}[
]
  \providecommand{\datapath}{../../../data}
  \providecommand{\errortype}{L_2}
  \providecommand{\errorfile}{convergence/r13_1_coeffs_nosources_norot_inflow_p1p1p1p1p1_stab/errors.csv}
  \providecommand{\fieldList}{theta/\(\theta\),sx/\(s_x\),sy/\(s_y\),p/\(p\),ux/\(u_x\),uy/\(u_y\),sigmaxx/\(\sigma_{xx}\),sigmaxy/\(\sigma_{xy}\),sigmayy/\(\sigma_{yy}\)}
  \providecommand{\firstOrderFactor}{1}
  \providecommand{\secondOrderFactor}{0.2}

  \definecolor{color1}{rgb}{0,    ,0.4470,0.7410}
  \definecolor{color2}{rgb}{0.8500,0.3250,0.0980}
  \definecolor{color3}{rgb}{0.9290,0.6940,0.1250}
  \definecolor{color4}{rgb}{0.4940,0.1840,0.5560}
  \definecolor{color5}{rgb}{0.4660,0.6740,0.1880}
  \definecolor{color6}{rgb}{0.3010,0.7450,0.9330}
  \definecolor{color7}{rgb}{0.6350,0.0780,0.1840}
  \pgfplotscreateplotcyclelist{matlab}{
    color1,every mark/.append style={solid},mark=*\\
    color2,every mark/.append style={solid},mark=square*\\
    color3,every mark/.append style={solid},mark=triangle*\\
    color4,every mark/.append style={solid},mark=halfsquare*\\
    color5,every mark/.append style={solid},mark=pentagon*\\
    color6,every mark/.append style={solid},mark=halfcircle*\\
    color7,every mark/.append style={solid,rotate=180},mark=halfdiamond*\\
    color1,every mark/.append style={solid},mark=diamond*\\
    color2,every mark/.append style={solid},mark=halfsquare right*\\
    color3,every mark/.append style={solid},mark=halfsquare left*\\
  }

  \tikzset{
    mark options={
      mark size=1.25,
    }
  }

  \pgfplotstableread[col sep=comma, sci, precision=10]{\datapath/\errorfile}{\datatable}
  \pgfplotstablegetrowsof{\datatable} 
  \pgfmathsetmacro{\numRows}{\pgfplotsretval-1}
  \pgfmathsetmacro{\numRowsM}{\numRows-1}
  \pgfmathsetmacro{\errMax}{0}
  \pgfmathsetmacro{\errMin}{10000}

  \pgfplotstablegetelem{0}{h}\of{\datatable}
  \pgfmathsetmacro{\hFirst}{\pgfplotsretval}
  \pgfplotstablegetelem{\numRows}{h}\of{\datatable}
  \pgfmathsetmacro{\hLast}{\pgfplotsretval}
  \pgfplotstablegetelem{\numRowsM}{h}\of{\datatable}
  \pgfmathsetmacro{\hLastM}{\pgfplotsretval}

  \begin{groupplot}[
    group style={
      group size= 2 by 1,
    },
    xlabel={\(h_{\max}\)},
    grid=major,
    cycle list name=matlab,
    width=\textwidth/2,
    xmode=log,
    ymode=log,
    ytick distance={10},
    height=4.75cm,
  ]
    \nextgroupplot[
      title={Relative \(L^2\)-errors},
      legend to name={CommonLegend_\errorfile},
      legend style={
        legend columns=-1,
        font=\footnotesize
      }
    ]
      \foreach \i/\j in \fieldList {
        \addplot table[x=h, y=\i_L_2, col sep=comma]{\datatable};
        \addlegendentryexpanded{\j} 
      }
      \addplot [dashed, domain=\hLast:\hFirst] {\firstOrderFactor*x^1};
      \addlegendentryexpanded{$\mathcal{O}(h)$}
      \addplot [dotted, domain=\hLast:\hFirst] {\secondOrderFactor*x^2};
      \addlegendentryexpanded{$\mathcal{O}(h^2)$}
    \nextgroupplot[
      title={Relative \(l^\infty\)-errors}
    ]
      \foreach \i/\j in \fieldList {
        \addplot table[x=h, y=\i_l_inf, col sep=comma]{\datatable};
      }
      \addplot [dashed, domain=\hLast:\hFirst] {\firstOrderFactor*x^1};
      \addplot [dotted, domain=\hLast:\hFirst] {\secondOrderFactor*x^2};

    \end{groupplot}
    \coordinate (c3) at ($(group c1r1)!.5!(group c2r1)$); \node[below] at (c3 |- current bounding box.south) {\pgfplotslegendfromname{CommonLegend_\errorfile}};
\end{tikzpicture}
  \caption{
      Relative errors using stabilized \(\mathbb{P}_2\) equal-order elements for the homogenous flow around a cylinder: All fields have second-order convergence rates in the \(L^2\)-norm. In the \(l^\infty\)-norm, reduced rates are observed for the \(\theta\)-field, but first-order convergence is guaranteed. Note the different \(h_{\max}\)-axis compared to \cref{fig:r13_1_coeffs_nosources_norot_inflow_p1p1p1p1p1_stab}.
    }\label{fig:r13_1_coeffs_nosources_norot_inflow_p2p2p2p2p2_stab}
    \Description{
      Relative errors using stabilized \(\mathbb{P}_2\) equal-order elements for the homogenous flow around a cylinder: All fields have second-order convergence rates in the \(L^2\)-norm. In the \(l^\infty\)-norm, reduced rates are observed for the \(\theta\)-field, but first-order convergence is guaranteed. Note the different \(h_{\max}\)-axis compared to \cref{r13_1_coeffs_nosources_norot_inflow_p1p1p1p1p1_stab}.
    }
\end{figure}
\par
The \cref{fig:r13_1_coeffs_nosources_norot_inflow_p2p2p2p2p2_stab_schematic} shows schematic results for this test case using \(h_{\max}=0.09\). The homogenous outer flow field enters the computational domain in \cref{sfig:r13_1_coeffs_nosources_norot_inflow_p2p2p2p2p2_stab_stress_schematic} in parallel. The shear-stress component \(\sigma_{xy}\) has a greater magnitude at areas where the flow field is parallel to the inner cylinder walls. As expected with the given set of heat boundary conditions, heat flux from the warm outer cylinder wall to the cold inner wall is present in \cref{sfig:r13_1_coeffs_nosources_norot_inflow_p2p2p2p2p2_stab_heat_schematic}. However, the flow advects the temperature field in the flow direction leading to a cold gas region only behind the cylinder. In contrast, the region before the cylinder is warm.
\begin{figure}[t]
  \begin{subfigure}[c]{0.49\textwidth}
    \includegraphics[width=\textwidth]{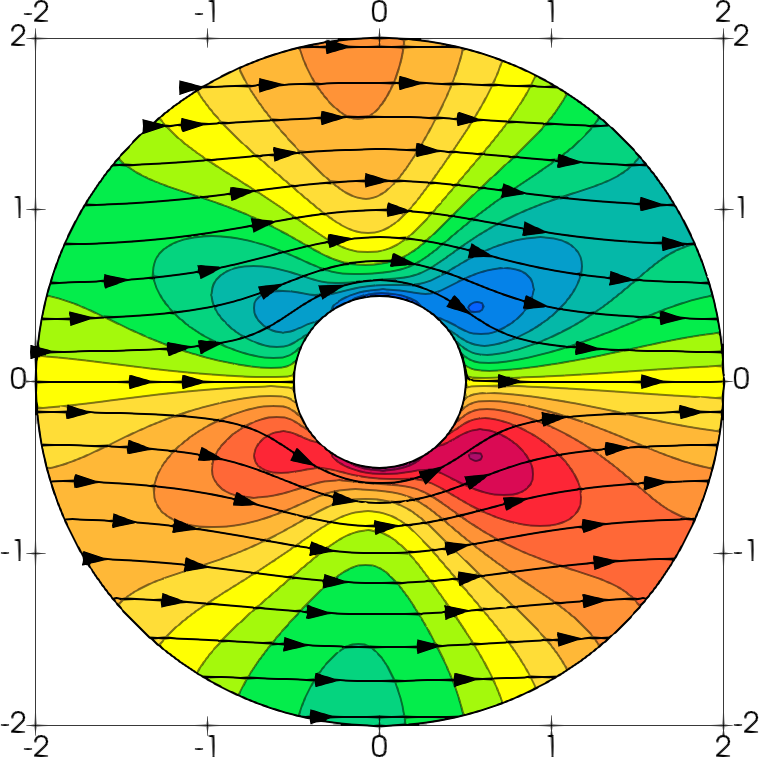}
    \subcaption{
      Shear stress \(\sigma_{xy}\), velocity streamlines \(u_{i}\).
    }\label{sfig:r13_1_coeffs_nosources_norot_inflow_p2p2p2p2p2_stab_stress_schematic}
  \end{subfigure}
  \hfill
  \begin{subfigure}[c]{0.49\textwidth}
    \includegraphics[width=\textwidth]{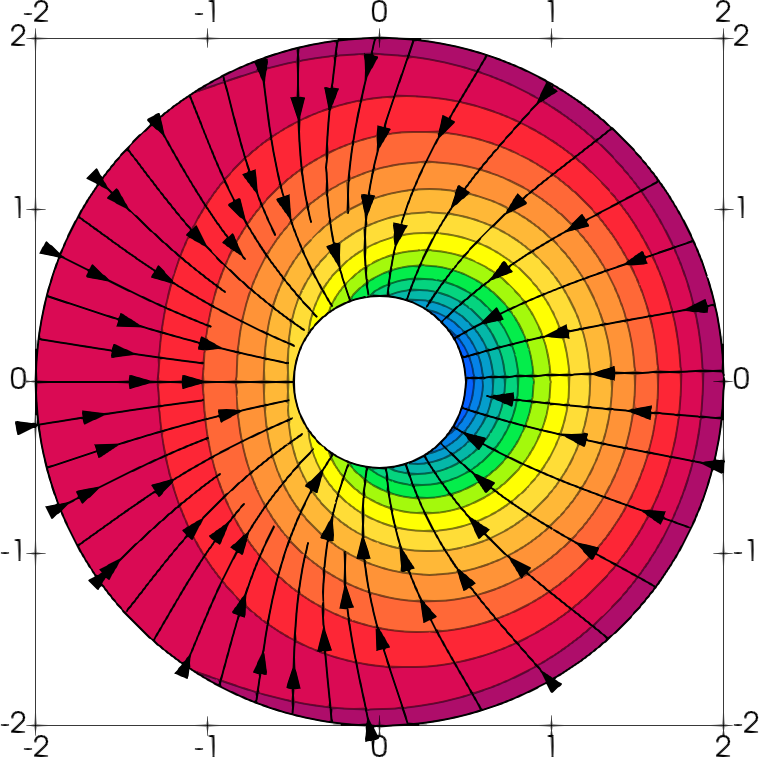}
    \subcaption{
      Temperature \(\theta\), heat flux streamlines \(s_{i}\).
    }\label{sfig:r13_1_coeffs_nosources_norot_inflow_p2p2p2p2p2_stab_heat_schematic}
  \end{subfigure}
  \caption{
    Schematic results of the homogenous flow around a cylinder for \(h_{\max}=0.047\): The flow past the cylinder in \cref{sfig:r13_1_coeffs_nosources_norot_inflow_p2p2p2p2p2_stab_stress_schematic} induces higher \(\abs{\sigma_{xy}}\) values above and below the cylinder. The flow direction is tangential to the inner cylinder walls. The temperature distribution \cref{sfig:r13_1_coeffs_nosources_norot_inflow_p2p2p2p2p2_stab_heat_schematic} reveals a colder region behind the cylinder due to the present flow field.
  }\label{fig:r13_1_coeffs_nosources_norot_inflow_p2p2p2p2p2_stab_schematic}
  \Description{
    Schematic results of the homogenous flow around a cylinder for \(h_{\max}=0.047\): The flow past the cylinder in \cref{sfig:r13_1_coeffs_nosources_norot_inflow_p2p2p2p2p2_stab_stress_schematic} induces higher \(\abs{\sigma_{xy}}\) values above and below the cylinder. The flow direction is tangential to the inner cylinder walls. The temperature distribution \cref{sfig:r13_1_coeffs_nosources_norot_inflow_p2p2p2p2p2_stab_heat_schematic} reveals a colder region behind the cylinder due to the present flow field.
  }
\end{figure}

\section{Application Cases}\label{s_applications}
To the test solver's non-equilibrium capabilities, we now discuss three application cases with typical rarefaction effects. The channel flow example in \cref{sec_channelFlow} shows the expected Knudsen paradox behavior. In \cref{sec_knudsenPump}, a temperature gradient at the domain walls induces a thermal transpiration flow without the presence of gravity. In \cref{sec_thermalEdgeFlow}, we compare to the existing literature by considering a thermally-induced edge flow. These three application cases would not be possible using a classical NSF solver and justify using the R13 equations to predict rarefaction effects in non-standard flow situations.

\subsection{Knudsen Paradox in a Channel Flow}\label{sec_channelFlow}
A classic example is a microchannel flow similar to the one-dimensional case discussed in~\cite{torrilhon2016modeling}. In our artificial two-dimensional setting, the flow domain \(\Omega \in \mathbb{R}^2\) is between two infinitely large plates. The domain length is \(L=4\), and the plate distance reads \(H=1\). Inside the domain \(\Omega\), a body force \(\te{b}={(1,0)}^T\) induces a flow in positive \(x\)-direction. We prescribe no pressure gradients with \(p^{\mathrm{w}}|_{\Gamma_i}=0\). No additional inflow or outflow velocities are assumed with \(u_n^{\mathrm{w}}|_{\Gamma_i}, u_t^{\mathrm{w}}|_{\Gamma_i}=0\). A uniform temperature \(\theta^{\mathrm{w}}|_{\Gamma_i}=1\) is applied to all boundaries. The upper and lower impermeable walls have \(\epsilon^{\mathrm{w}}|_{\Gamma_1,\Gamma_3}=10^{-3}\) while the in- and outflow walls are modeled with the parameter \(\epsilon^{\mathrm{w}}|_{\Gamma_2,\Gamma_4}=10^{3}\) to allow a velocity through these boundaries. The \cref{sfig:applicationsDomainChannel} presents the overall setup.
\begin{figure}[t]
  \begin{subfigure}[c]{0.54\textwidth}
    \newcommand{\stylefile}{figs/tikz/sketches/style.tex}



\begin{tikzpicture}[scale=1.5, rotate=0]

  \input{\stylefile}

  \pgfmathsetmacro{\height}{1};
  \pgfmathsetmacro{\sidelength}{4};
  \pgfmathsetmacro{\rOuter}{2};
  \pgfmathsetmacro{\solidThickness}{0.15};

  \draw [fRegion] (0,0) rectangle (\sidelength,\height);
  \draw node at ({0.5*\sidelength},{0.5*\height}) {\(\Omega\)};
  \draw node [above] at (0.5*\sidelength,0) {\(\Gamma_1\)};
  \draw node [left] at (\sidelength,0.5*\height) {\(\Gamma_2\)};
  \draw node [below] at (0.5*\sidelength,\height) {\(\Gamma_3\)};
  \draw node [right] at (0,0.5*\height) {\(\Gamma_4\)};

  \draw [bRegion] (0,\height) rectangle (\sidelength,\height+\solidThickness);
  \draw [bRegion] (0,-\solidThickness) rectangle (\sidelength,0);

  \pgfmathsetmacro{\dist}{0.3};
  \draw[dim] (0,-\dist) -- (\sidelength,-\dist);
  \draw node [below] at (0.5*\sidelength,-\dist) {\(L\)};
  \draw[dim] (0.75*\sidelength,0) -- (0.75*\sidelength,\height);
  \draw node [right] at (0.65*\sidelength,0.5*\height) {\(H\)};

  \pgfmathsetmacro{\velLength}{0.3};
  \pgfmathsetmacro{\velOffset}{0.1};
  \pgfmathsetmacro{\num}{4};
  \pgfmathsetmacro{\dist}{\height/\num};
  \foreach \yPos in {0,...,\num}{
    \draw[arrow,->] ({-(\velOffset+\velLength)},\yPos*\dist) -- ({-(\velOffset)},\yPos*\dist);
    \draw[arrow,->] ({\sidelength+(\velOffset)},\yPos*\dist) -- ({\sidelength+(\velOffset+\velLength)},\yPos*\dist);
  }

  \draw[arrow,->] (0.75,0.25) -- ({0.75+2*\velLength},0.25) node[above, midway] {\(\boldsymbol{b}\)};

\end{tikzpicture}

    \subcaption{
      Geometry.
    }\label{sfig:applicationsDomainChannel}
  \end{subfigure}
  \hfill
  \begin{subfigure}[c]{0.44\textwidth}
    \vspace{0.3cm}





\begin{tikzpicture}

  \definecolor{color1}{rgb}{0,    ,0.4470,0.7410}
  \definecolor{color2}{rgb}{0.8500,0.3250,0.0980}
  \definecolor{color3}{rgb}{0.9290,0.6940,0.1250}
  \definecolor{color4}{rgb}{0.4940,0.1840,0.5560}
  \definecolor{color5}{rgb}{0.4660,0.6740,0.1880}
  \definecolor{color6}{rgb}{0.3010,0.7450,0.9330}
  \definecolor{color7}{rgb}{0.6350,0.0780,0.1840}
  \pgfplotscreateplotcyclelist{matlab}{
    color1,mark=*\\
    color2,mark=square*\\
    color3,mark=triangle*\\
    color4,mark=halfsquare*\\
    color5,mark=pentagon*\\
    color6,mark=halfcircle*\\
    color7,mark=halfdiamond*\\
    color1,mark=otimes*\\
    color2,mark=diamond*\\
    color3,mark=halfsquare right*\\
    color4,mark=halfsquare left*\\
    color5,mark=square*\\
    color6,mark=square*\\
    color7,mark=square*\\
  }

  \tikzset{
    mark options={
      mark size=1.25,
    }
  }

  \pgfplotstableread[col sep=comma, sci, precision=10]{
    0.03125, 3.258048290000011
    0.0625, 1.9681735520439756
    0.125, 1.3755410480851753
    0.25, 1.1749283495650311
    0.5, 1.2471564053606263
    1.0, 1.5838568786925513
    2.0, 2.21831328293239
  }{\datatable}

  \begin{axis}[
    xmode=log,
    height=3.1cm,
    width=7cm, 
    xlabel={\(\text{Kn}\)},
    grid=major,
    cycle list name=matlab,
    legend style={at={(0.5,0.98)},anchor=north},
    legend entries={Mass flow \(\hat{J}_{\Gamma_2}\)},
  ]

  \addplot table[]{\datatable};
  \end{axis}

\end{tikzpicture}

    \subcaption{
      Knudsen paradox.
    }\label{sfig:applicationsKnudsenParadox}
  \end{subfigure}
  \caption{
    Geometry and Knudsen paradox in a channel flow. \cref{sfig:applicationsDomainChannel}: A dimensionless body force drives the flow in positive \(x\)-direction. The upper and lower boundaries are modeled as impermeable walls, while the inflow and outflow boundaries allow a mass flow. \cref{sfig:applicationsKnudsenParadox}: With increasing Knudsen number, the channel's mass flow first decreases to a minimum before increasing. Increasing the Knudsen number \(\Knud = \frac{\lambda}{H}\) either means diluting the gas or decreasing the channel width \(H\) for this particular context.
  }
  \Description{
    Geometry and Knudsen paradox in a channel flow. \cref{sfig:applicationsDomainChannel}: A dimensionless body force drives the flow in positive \(x\)-direction. The upper and lower boundaries are modeled as impermeable walls, while the inflow and outflow boundaries allow a mass flow. \cref{sfig:applicationsKnudsenParadox}: With increasing Knudsen number, the channel's mass flow first decreases to a minimum before increasing. Increasing the Knudsen number \(\Knud = \frac{\lambda}{H}\) either means diluting the gas or decreasing the channel width \(H\) for this particular context.
  }
\end{figure}
\par
The resulting computational mesh consists of 10712 uniform but unstructured triangles with 5517 nodes. We discretize the domain using \(\mathbb{P}_1\) equal-order finite elements and CIP stabilization with \(\delta_\theta=1,\delta_{\te{u}}=1, \delta_p=0.1\). The \cref{fig:applicationsResultsChannel} corresponds to a Knudsen number of \(\Knud=0.1\). In \cref{sfig:channelStress}, the flow field is almost parallel to the outer walls. The deviatoric stress component \(\sigma_{xy}\) has its maxima at both outer channel walls. In \cref{sfig:channelHeat}, the heat flux is nonzero, although no temperature gradient is applied.
\begin{figure}[t]
  \begin{subfigure}[c]{0.49\textwidth}
    \includegraphics[width=\textwidth]{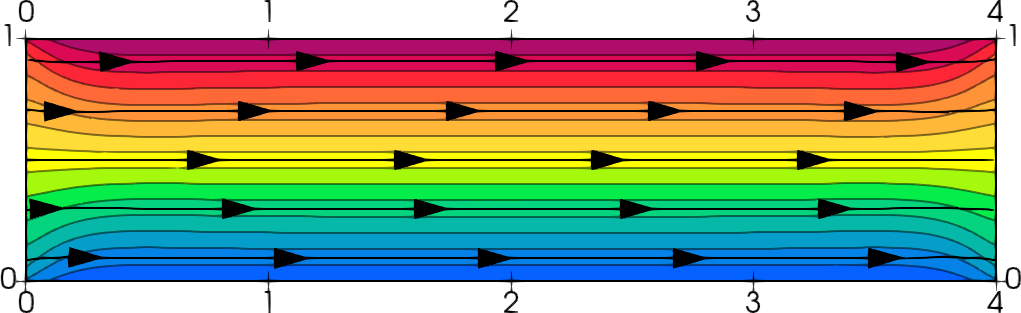}
    \subcaption{
      Shear stress \(\sigma_{xy}\), velocity streamlines \(u_{i}\).
    }\label{sfig:channelStress}
  \end{subfigure}
  \hfill
  \begin{subfigure}[c]{0.49\textwidth}
    \includegraphics[width=\textwidth]{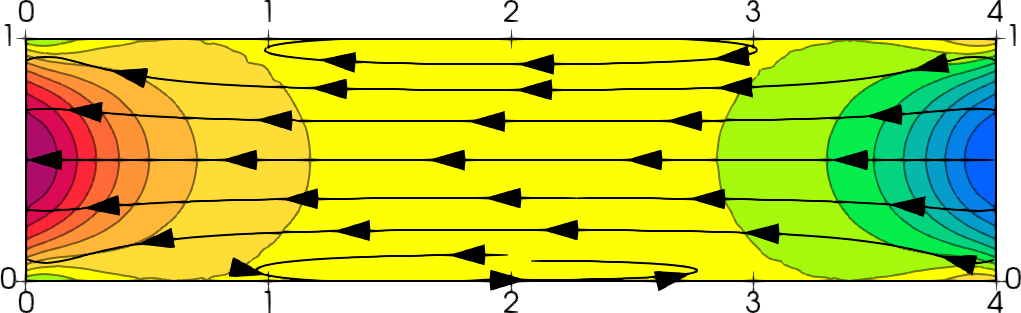}
    \subcaption{
      Temperature \(\theta\), heat flux streamlines \(s_{i}\).
    }\label{sfig:channelHeat}
  \end{subfigure}
  \caption{
    Schematic results of the channel flow application case for \(\Knud=0.1\): In \cref{sfig:channelStress}, the velocity field is almost parallel to the outer walls. The \cref{sfig:channelHeat} reveals a heat flux in the inflow direction.
  }\label{fig:applicationsResultsChannel}
  \Description{
    Schematic results of the channel flow application case for \(\Knud=0.1\): In \cref{sfig:channelStress}, the velocity field is almost parallel to the outer walls. The \cref{sfig:channelHeat} reveals a heat flux in the inflow direction.
  }
\end{figure}
\par
A parameter study for the Knudsen number further validates the solver capability to predict rarefaction effects using a range of \(\Knud \in [0.03,2.0]\) and the same overall problem setup. We define the dimensionless mass flow rate through the outflow boundary as
\begin{equation}
  \hat{J}_{\Gamma_2} = \int_{\Gamma_2} \te{u} \cdot \te{n} \dd l,
\end{equation}
similar to the one-dimensional considerations in~\cite{torrilhon2016modeling}. The mass flow rate reduces for increasing Knudsen numbers. However, the dimensionless mass flow rate has a minimum at about \(\Knud \approx 0.3\), followed by a subsequent increase. This phenomenon is known as the \textit{Knudsen paradox}, following~\cite{torrilhon2016modeling}, and measurements observed this effect, e.g., in~\cite{dongari2009pressure}. The \cref{sfig:applicationsKnudsenParadox} presents the relation between the dimensionless mass flow and the Knudsen number.

\subsection{Thermal Transpiration Flow in a Knudsen Pump}\label{sec_knudsenPump}
We observe another rarefaction effect in a \textit{Knudsen pump} test case, which is inspired by~\cite{westerkamp2014stabilization,aoki2007numerical,leontidis2014numerical}. Without a body force acting, a flow is solely induced by a temperature gradient at the outer walls. For the test case, we consider a racetrack-shaped geometry created by slicing a ring with inner and outer radii \(R_1\) and \(R_2\) into two halves and placing two rectangular connections between these elements. The two connection elements have side lengths \(2L\) and \(R_2-R_1\) with \(L=1\), \(R_1=1/2\), and \(R_2=2\). We prescribe a linear temperature profile at both boundaries with four control points, as presented in \cref{sfig:applicationsDomainPump}. The temperatures \(\theta_0=0.5\) and \(\theta_1=1.5\) define the initial temperature difference of \(\Delta \theta = 1\).
\begin{figure}
  \subcaptionbox{
    Knudsen pump geometry.\label{sfig:applicationsDomainPump}
  }[0.49\textwidth]{
    \newcommand{\stylefile}{figs/tikz/sketches/style.tex}


\begin{tikzpicture}[scale=1.0, rotate=0]

  \input{\stylefile}

  \pgfmathsetmacro{\rInner}{0.5};
  \pgfmathsetmacro{\rOuter}{2};
  \pgfmathsetmacro{\height}{1};
  \pgfmathsetmacro{\length}{1};
  \pgfmathsetmacro{\rOuter}{2};
  \pgfmathsetmacro{\solidThickness}{0.25};

  \draw [fRegion]
    (-\length,-\rOuter) rectangle (\length,\rOuter)
    (-\length,-\rOuter) arc (270:90:\rOuter)
    (\length,-\rOuter) arc (-90:90:\rOuter);
  \draw [fRegion]
    (-\length,-\rInner) rectangle (\length,\rInner)
    (-\length,-\rInner) arc (270:90:\rInner)
    (\length,-\rInner) arc (-90:90:\rInner);

  \draw [bRegion]
    (-\length,-\rOuter-\solidThickness) rectangle (\length,\rOuter+\solidThickness)
    (-\length,-\rOuter-\solidThickness) arc (270:90:\rOuter+\solidThickness)
    (\length,-\rOuter-\solidThickness) arc (-90:90:\rOuter+\solidThickness)
    (-\length,-\rOuter) rectangle (\length,\rOuter)
    (-\length,-\rOuter) arc (270:90:\rOuter)
    (\length,-\rOuter) arc (-90:90:\rOuter);
  \draw [bRegion]
    (-\length,-\rInner+\solidThickness) rectangle (\length,\rInner-\solidThickness)
    (-\length,-\rInner+\solidThickness) arc (270:90:\rInner-\solidThickness)
    (\length,-\rInner+\solidThickness) arc (-90:90:\rInner-\solidThickness)
    (-\length,-\rInner) rectangle (\length,\rInner)
    (-\length,-\rInner) arc (270:90:\rInner)
    (\length,-\rInner) arc (-90:90:\rInner);

  \draw[point] (\length,\rInner) circle (0.05) node [above right] {\(\theta_1\)};
  \draw[point] (-\length,\rInner) circle (0.05) node [above left] {\(\theta_0\)};
  \draw[point] (-\length,-\rInner) circle (0.05) node [below left] {\(\theta_1\)};
  \draw[point] (\length,-\rInner) circle (0.05) node [below right] {\(\theta_0\)};
  \draw[point] (\length,\rOuter) circle (0.05) node [below right] {\(\theta_1\)};
  \draw[point] (-\length,\rOuter) circle (0.05) node [below left] {\(\theta_0\)};
  \draw[point] (-\length,-\rOuter) circle (0.05) node [above left] {\(\theta_1\)};
  \draw[point] (\length,-\rOuter) circle (0.05) node [above right] {\(\theta_0\)};

  \draw[point] (0,0) circle (0.05);

  \draw node at ({\length+\rInner+0.5*(\rOuter-\rInner)},0) {\(\Omega\)};

  \draw node [right] at ({\length+\rInner},0) {\(\Gamma_1\)};
  \draw node [above] at (0,\rInner) {\(\Gamma_2\)};
  \draw node [left] at ({-(\length+\rInner)},0) {\(\Gamma_3\)};
  \draw node [below] at (0,-\rInner) {\(\Gamma_4\)};
  \draw node [left] at ({\length+\rOuter},0) {\(\Gamma_5\)};
  \draw node [below] at (0,\rOuter) {\(\Gamma_6\)};
  \draw node [right] at ({-(\length+\rOuter)},0) {\(\Gamma_7\)};
  \draw node [above] at (0,-\rOuter) {\(\Gamma_8\)};

  \draw node [above, fill=white, inner sep=1pt] at (0.5*\length,0) {\(L\)};
  \draw[dim] (0,0) -- (\length,0);

  \draw node [right, fill=white, inner sep=1pt] at (-0.65*\length,-0.5*\rInner) {\(R_1\)};
  \draw[dim] (-0.75*\length,0) -- (-0.75*\length,-\rInner);
  \draw node [right] at (-0.75*\length,0.5*\rOuter) {\(R_2\)};
  \draw[dim] (-0.75*\length,0) -- (-0.75*\length,\rOuter);

\end{tikzpicture}
  }
  \hfill
  \subcaptionbox{
    Knudsen pump discretization with \(h_{\max}=0.25\).\label{sfig:applicationsMeshPump}
  }[0.49\textwidth]{
    \includegraphics[width=\linewidth]{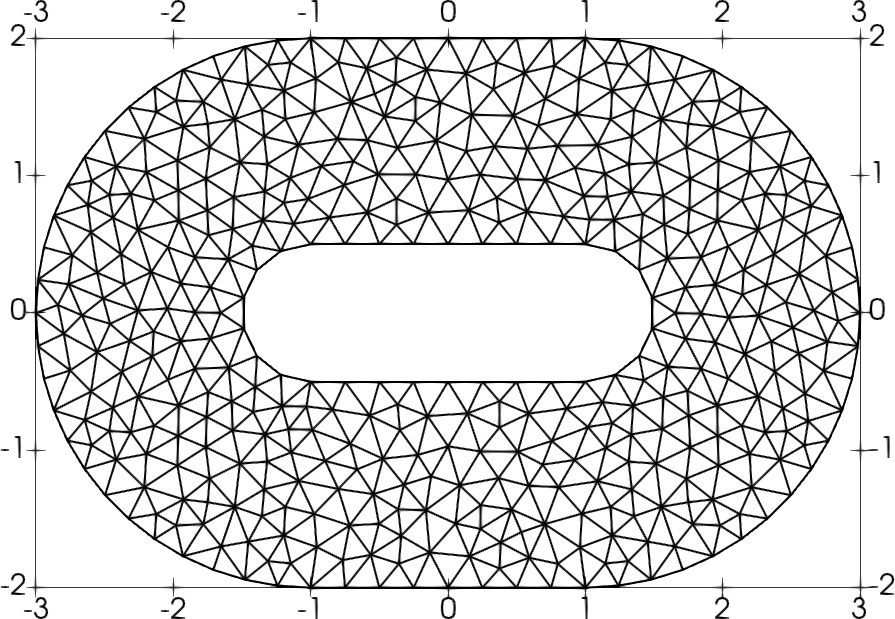}
  }
  \caption{
    Computational domain and spatial discretization for the Knudsen pump application case. \cref{sfig:applicationsDomainPump}: A temperature gradient at the outer walls drives the flow. All walls act as impermeable. The starting points of both half rings act as control points for the prescribed temperatures \(\theta_0\) and \(\theta_1\). In between these points, the temperature value follows a linear profile. \cref{sfig:applicationsMeshPump}: The maximum cell size \(h_{\max}\) characterizes the discretization with unstructured triangles.
  }
  \Description{
    Computational domain and spatial discretization for the Knudsen pump application case. \cref{sfig:applicationsDomainPump}: A temperature gradient at the outer walls drives the flow. All walls act as impermeable. The starting points of both half rings act as control points for the prescribed temperatures \(\theta_0\) and \(\theta_1\). In between these points, the temperature value follows a linear profile. \cref{sfig:applicationsMeshPump}: The maximum cell size \(h_{\max}\) characterizes the discretization with unstructured triangles.
  }
\end{figure}
\par
We want to prescribe a linear temperature profile for all the boundary paths between \(\theta_0\) and \(\theta_1\).
To derive the corresponding expressions for \(\theta^{\mathrm{w}}\), we use the polar angle function \(\operatorname{atan2}(x,y): \mathbb{R}^2 \rightarrow (-\pi,\pi]\). This function returns the polar angle, such that, e.g., \(\operatorname{atan2}(1,1) = \frac{\pi}{4}\). This function is available in most programming languages, in Python with switched arguments as ``\texttt{atan2(y,x)}''. The \cref{tab:applicationsKnudsenPumpTemperatureExpressions} shows the corresponding boundary expressions for the temperature at the inner wall. For the outer wall, the same expressions as for the inner wall are used (\(\theta^{\mathrm{}}|{_{\Gamma_5}} = \theta^{\mathrm{}}|{_{\Gamma_1}},\ldots\)). The remaining parameters \(u_n^{\mathrm{w}}\), \(u_t^{\mathrm{w}}\), and \(\epsilon^{\mathrm{w}}\), are all set to zero to model impermeable walls. Ee use \(\mathbb{P}_1\) equal-order elements and CIP stabilization with \(\delta_\theta=1, \delta_{\te{u}}=1,\delta_p=0.1\) for the numerical computation. 
\begin{table}[t]
  \centering
  \caption{Temperature boundary conditions for the Knudsen pump: The four boundary paths define the linear and continuous temperature profile for the outer walls.}\label{tab:applicationsKnudsenPumpTemperatureExpressions}
  \begin{tabular}{rcccc}
    \toprule
    \(\Gamma_i\) | &
    \(\Gamma_1\) &
    \(\Gamma_2\) &
    \(\Gamma_3\) &
    \(\Gamma_4\) \\
    \midrule
    \(\theta(x,y)|{_{\Gamma_i}}\) | &
    \(\frac{1}{2} \frac{2}{\pi} \operatorname{atan2}(x-1,y) + 1\) &
    \(\frac{1}{2} x + 1\) &
    \(- \frac{1}{2} \frac{2}{\pi} \operatorname{atan2}(1-x,y) + 1\) &
    \(-\frac{1}{2} x + 1\) \\
    \bottomrule
  \end{tabular}%
\end{table}
\par
To study the mesh sensitivity, we perform a series of computations on uniform unstructured grids with characterizing \(h_{\max}=1/2^i\) for \(i=4,\dots,7\). All meshes result from a complete re-meshing procedure implying additional refinement at curved boundaries. The \cref{sfig:applicationsMeshPump} presents a schematic mesh for \(h_{\max}=1/2^2\) and \cref{tab:applicationsPumpConvergenceTable} reports the convergence study results. No further change in the mean cross-section \(x\)-velocity \(3/2 \int_{-2}^{-1/2} (|u_x(y)|)|_{x=0} \dd y\) indicates the results' accuracy.
\par
The \cref{fig:applicationsResultsPump} presents the discrete solution for a Knudsen number of \(\Knud=0.1\) on the most refined mesh. We observe the counter-clockwise gas flow in \cref{sfig:PumpStress}, and the stress component \(\sigma_{xy}\) has higher values at the more curved parts of the geometry near the inner wall. The linearly applied temperature profile at the wall is visible in \cref{sfig:PumpHeat}. However, this temperature profile changes throughout the pump width due to diffusion. Intuitively, heat flux occurs in between warm and cold regions.
\begin{figure}
  \begin{subfigure}[c]{0.49\textwidth}
    \includegraphics[width=\textwidth]{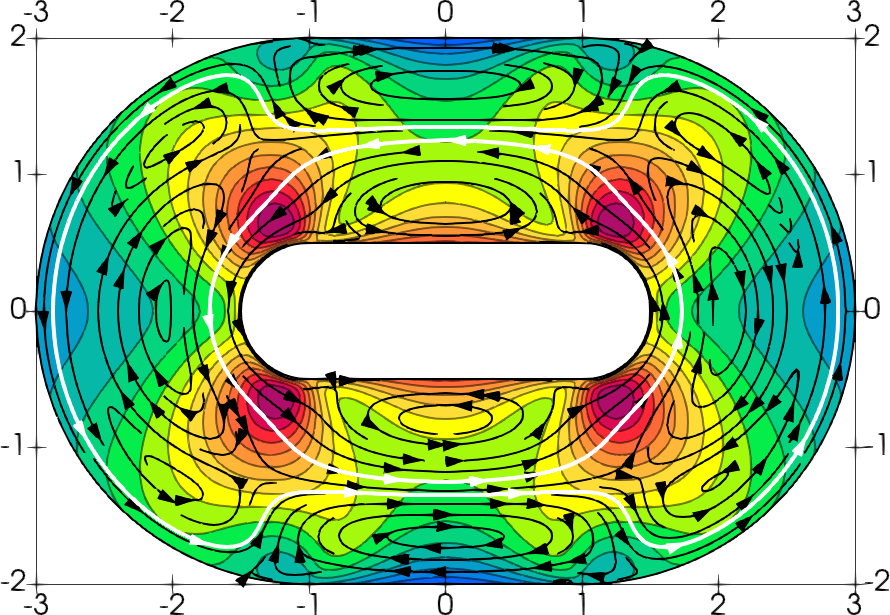}
    \subcaption{
      Shear stress \(\sigma_{xy}\), velocity streamlines \(u_{i}\).
    }\label{sfig:PumpStress}
  \end{subfigure}
  \hfill
  \begin{subfigure}[c]{0.49\textwidth}
    \includegraphics[width=\textwidth]{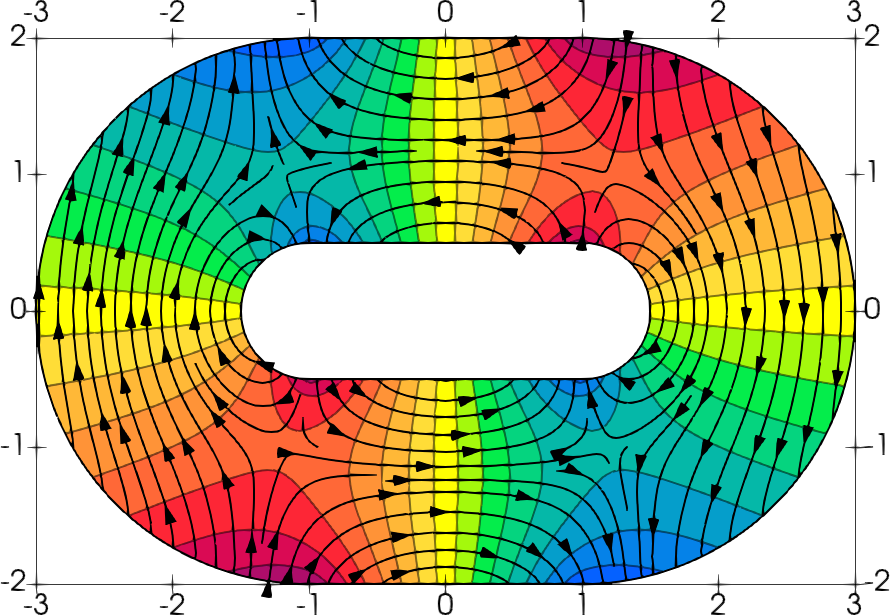}
    \subcaption{
      Temperature \(\theta\), heat flux streamlines \(s_{i}\).
    }\label{sfig:PumpHeat}
  \end{subfigure}
  \caption{
    Schematic results of the Knudsen pump application case for \(h_{\max}=1/2^{7}\): The gas flow rotates in the counter-clockwise direction, as observed in \cref{sfig:PumpStress}, similar to the results obtained in~\cite{westerkamp2014stabilization,aoki2007numerical}. The \cref{sfig:PumpHeat} shows the linear temperature gradient at both boundary walls, resulting in heat flux from warm to cold regions.
  }\label{fig:applicationsResultsPump}
  \Description{
    Schematic results of the Knudsen pump application case for \(h_{\max}=1/2^{7}\): The gas flow rotates in the counter-clockwise direction, as observed. The \cref{sfig:PumpHeat} shows the linear temperature gradient at both boundary walls, resulting in heat flux from warm to cold regions.
  }
\end{figure}
\begin{table}[t]
  \centering
  \caption{Summary of computations for the Knudsen pump, including: Number of triangles \(N_t\), number of nodes \(N_n\), maximum cell size \(h_{\max}\), mean cross-section \(x\)-velocity \(3/2 \int_{-2}^{-1/2} (|u_x(y)|)|_{x=0} \dd y\), wall time for FFC \(t_{\text{FFC}}\), wall time for system assembly \(t_{\text{a}}\), wall time for solution routine \(t_{\text{s}}\). The wall times were measured on an Amazon EC2 R5a instance using 64 vCPUs, 512 GiB memory, and MPI parallelization.}\label{tab:applicationsPumpConvergenceTable}
  \pgfplotstabletypeset[
    col sep = comma,
    every head row/.style={before row=\toprule,after row=\midrule},
    every last row/.style={after row=\bottomrule},
    columns/0/.style={column name=\(N_{\text{t}}\)},
    columns/1/.style={column name=\(N_{\text{n}}\)},
    columns/2/.style={column name=\(h_{\max}\)},
    columns/3/.style={column name=\(3/2 \int_{-2}^{-1/2} (|u_x(y)|)|_{x=0} \dd y\)},
    columns/4/.style={column name=\(t_{\text{FFC}}\) [s]},
    columns/5/.style={column name=\(t_{\text{a}}\) [s]},
    columns/6/.style={column name=\(t_{\text{s}}\) [s]},
  ]
  {./data/knudsen_pump/convergence.csv}
\end{table}
\par
To further validate the obtained results, we also present velocity and temperature profiles at characteristic positions in \cref{fig:applicationsPumpPlotLines} for all considered meshes. The velocity profile at \(x=0\) shows a dominant flow in the middle of the domain. The temperature at \(x=1\) has its maximum not in the middle of the domain but around \(y \approx -1.1\).
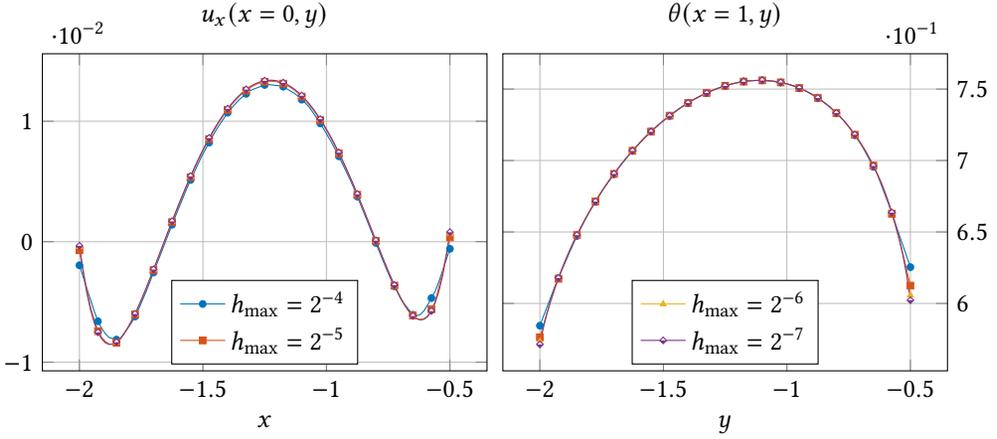
\begin{figure}[t]
  \newcommand{\datapath}{./data}%
  \centering




\begin{tikzpicture}

  \definecolor{color1}{rgb}{0,    ,0.4470,0.7410}
  \definecolor{color2}{rgb}{0.8500,0.3250,0.0980}
  \definecolor{color3}{rgb}{0.9290,0.6940,0.1250}
  \definecolor{color4}{rgb}{0.4940,0.1840,0.5560}
  \definecolor{color5}{rgb}{0.4660,0.6740,0.1880}
  \definecolor{color6}{rgb}{0.3010,0.7450,0.9330}
  \definecolor{color7}{rgb}{0.6350,0.0780,0.1840}
  \pgfplotscreateplotcyclelist{matlabmarkreordered}{
    color1,every mark/.append style={solid},mark=*\\
    color2,every mark/.append style={solid},mark=square*\\
    color3,every mark/.append style={solid},mark=triangle*\\
    color4,every mark/.append style={solid},mark=halfsquare*\\
    color5,every mark/.append style={solid},mark=pentagon*\\
    color6,every mark/.append style={solid},mark=halfcircle*\\
    color7,every mark/.append style={solid,rotate=180},mark=halfdiamond*\\
    color1,every mark/.append style={solid},mark=diamond*\\
    color2,every mark/.append style={solid},mark=halfsquare right*\\
    color3,every mark/.append style={solid},mark=halfsquare left*\\
  }

  \tikzset{
    mark options={
      mark size=1.25,
      mark repeat=5
    }
  }

  \begin{groupplot}[
    group style={
      group size=2 by 1,
      horizontal sep=6pt,
    },
    height=5.75cm,
    width=7.5cm,
    grid=major,
    cycle list name=matlabmarkreordered,
    legend pos=north east,
  ]
    \nextgroupplot[
      xlabel={\(x\)},
      legend entries={\(h_{\max}=2^{-4}\),\(h_{\max}=2^{-5}\),,},
      legend style={at={(0.5,0.02)},anchor=south},
      title={\(u_x(x=0,y)\)},
    ]
      \foreach \i in {0, ..., 3}
      {
        \pgfplotstableread[col sep=comma, sci, precision=10]{\datapath/knudsen_pump/ux-\i.csv}{\i}
        \addplot table[
          x=Points_1,
          y=u_0,
        ]{\i};
      }
    \nextgroupplot[
      xlabel={\(y\)},
      legend entries={,,\(h_{\max}=2^{-6}\),\(h_{\max}=2^{-7}\)},
      legend style={at={(0.5,0.02)},anchor=south},
      legend columns=1,
      title={\(\theta(x=1,y)\)},
      yticklabel pos=right,
      scaled y ticks=base 10:1,
    ]
      \foreach \i in {0, ..., 3}
      {
        \pgfplotstableread[col sep=comma, sci, precision=10]{\datapath/knudsen_pump/theta-\i.csv}{\i}
        \addplot table[
          x=Points_1,
          y=theta,
        ]{\i};
      }
  \end{groupplot}

\end{tikzpicture}

  \caption{
      Cross-section flow velocity and temperature profiles in a Knudsen pump for different spatial discretizations: The velocity profile shows a dominant counter-clockwise flow indicated by the positive mean velocity. We observe a significant stream in the domain's center and two opposite-directed streams near the boundaries. The temperature at \(x=1\) is maximal at \(y \approx -1.1 \ne 5/2\).
    }\label{fig:applicationsPumpPlotLines}
    \Description{
      Cross-section flow velocity and temperature profiles in a Knudsen pump for different spatial discretizations: The velocity profile shows a dominant counter-clockwise flow indicated by the positive mean velocity. We observe a significant stream in the domain's center and two opposite-directed streams near the boundaries. The temperature at \(x=1\) is maximal at \(y \approx -1.1 \ne 5/2\).
    }
\end{figure}

\subsection{Thermally-Induced Edge Flow}\label{sec_thermalEdgeFlow}
We finish the applications section by considering a geometry with a hot beam in a cold chamber, summarized in~\cref{fig:thermalEdgeflow,sfig:thermalEdgeflowGeometry}. We model both boundaries with \(\tilde{\chi}=1\) and apply a temperature difference with \(\theta|_{\Gamma_2}=1\) and \(\theta|_{\Gamma_1}=0\). This difference induces a flow field with \(\Knud=0.001\). Test cases like this are common for rarefied gas applications~\cite{su2020fast,su2020implicit}. For example, \citeauthor{su2020fast} used a different numerical method for the same geometry and a similar rarefied flow situation.
\begin{figure}
  \subcaptionbox{
    Geometry.\label{sfig:thermalEdgeflowGeometry}
  }[0.32\textwidth]{
    \newcommand{\stylefile}{figs/tikz/sketches/style.tex}



\begin{tikzpicture}[scale=0.52, rotate=0]

  \input{\stylefile}

  \pgfmathsetmacro{\sidelengthOuter}{8};
  \pgfmathsetmacro{\sidelengthInner}{2};
  \pgfmathsetmacro{\distInner}{1};
  \pgfmathsetmacro{\solidThickness}{0.29};

  \draw [fRegion] (0,0) rectangle (\sidelengthOuter,\sidelengthOuter);
  \draw node at ({0.5*\sidelengthOuter},{0.65*\sidelengthOuter}) {\(\Omega: \mathrm{Kn}\)};
  \draw node [below] at (0.5*\sidelengthOuter,\sidelengthOuter) {\(\Gamma_1: \theta^{\mathrm{w}}\)};

  \draw [fRegion] (\distInner,\distInner) rectangle (\distInner+\sidelengthInner,\distInner+\sidelengthInner);
  \draw node [above] at (\distInner+0.5*\sidelengthInner,\distInner+\sidelengthInner) {\(\Gamma_2: \theta^{\mathrm{w}}\)};

  \draw [bRegion] (-\solidThickness,-\solidThickness) rectangle (\sidelengthOuter+\solidThickness,0);
  \draw [bRegion] (-\solidThickness,0) rectangle (0,\sidelengthOuter);
  \draw [bRegion] (+\sidelengthOuter,0) rectangle (\sidelengthOuter+\solidThickness,\sidelengthOuter);
  \draw [bRegion] (0-\solidThickness,\sidelengthOuter) rectangle (\sidelengthOuter+\solidThickness,\sidelengthOuter+\solidThickness);

  \draw [bRegion] (\distInner,\distInner) rectangle (\distInner+\sidelengthInner,\distInner+\solidThickness);
  \draw [bRegion] (\distInner+\sidelengthInner-\solidThickness,\distInner) rectangle (\distInner+\sidelengthInner,\distInner+\sidelengthInner);
  \draw [bRegion] (\distInner,\distInner+\sidelengthInner-\solidThickness) rectangle (\distInner+\sidelengthInner,\distInner+\sidelengthInner);
  \draw [bRegion] (\distInner,\distInner) rectangle (\distInner+\solidThickness,\distInner+\sidelengthInner);

  \pgfmathsetmacro{\dist}{0.3};
  \draw[dim] (0,+\dist) -- (\sidelengthOuter,+\dist);
  \draw node [above] at (0.5*\sidelengthOuter,+\dist) {\(L\)};
  \draw[dim] (\distInner,\distInner+\dist) -- (\distInner+\sidelengthInner,\distInner+\dist);
  \draw node [above] at (\distInner+0.5*\sidelengthInner,\distInner+\dist) {\(l\)};

  \draw[dim] (0,\distInner+\sidelengthInner-\dist) -- (\distInner,\distInner+\sidelengthInner-\dist);
  \draw node [below] at (0.5*\distInner,\distInner+\sidelengthInner-\dist) {\(d\)};


\end{tikzpicture}

  }
  \hfill
  \subcaptionbox{
    Base mesh for \(s=0\).\label{sfig:thermalEdgeflowMesh}
  }[0.32\textwidth]{
    \includegraphics[width=\linewidth]{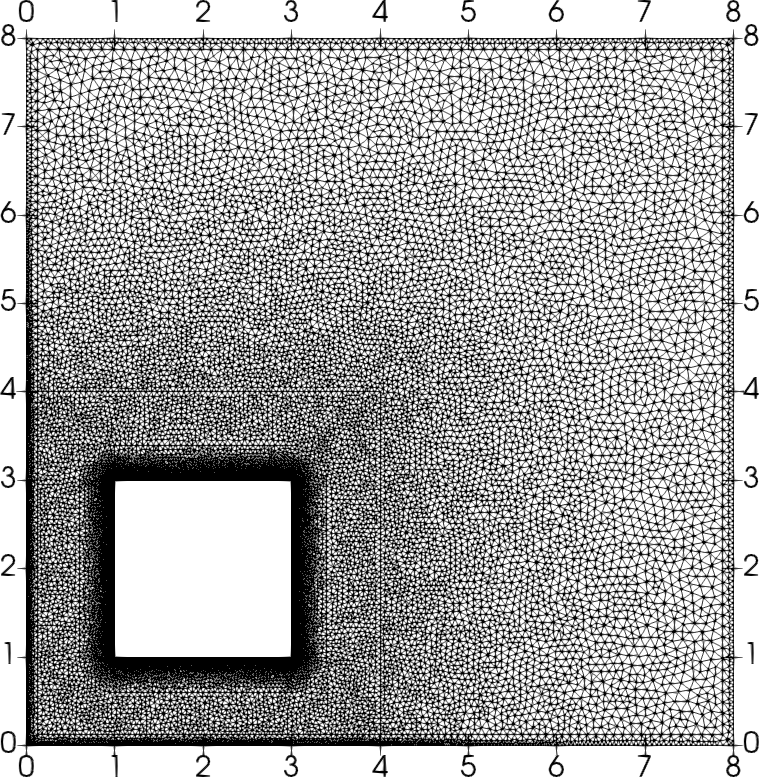}
  }
  \hfill
  \subcaptionbox{
    Schematic heat distribution.\label{sfig:thermalEdgeflowTemperatureHeatflux}
  }[0.32\textwidth]{
    \includegraphics[width=\linewidth]{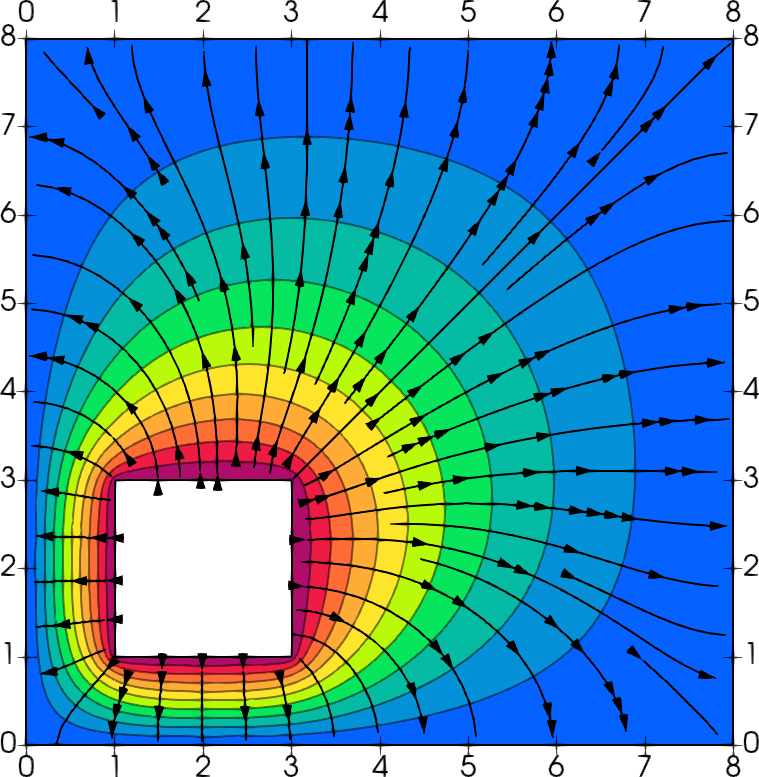}
  }
  \caption{
    Thermal edge flow for \(\Knud = 0.001\). \cref{sfig:thermalEdgeflowGeometry}: The geometry with \(L=8\), \(l=2\), \(d=1\) models a hot beam in a surrounding cold chamber. \cref{sfig:thermalEdgeflowMesh}: The base mesh with \(s=0\) has spatial refinement near the beam corners and near the outer walls. \cref{sfig:thermalEdgeflowTemperatureHeatflux}: The \(\theta\)-field and the \(\te{s}\)-streamlines show heat flux from the hot beam to the cold chamber walls. The temperature plot color consists of 10 uniform intervals with \(\theta \in [0,1]\).
  }
  \Description{
    Thermal edge flow for \(\Knud = 0.001\). \cref{sfig:thermalEdgeflowGeometry}: The geometry with \(L=8\), \(l=2\), \(d=1\) models a hot beam in a surrounding cold chamber. \cref{sfig:thermalEdgeflowMesh}: The base mesh with \(s=0\) has spatial refinement near the beam corners and near the outer walls. \cref{sfig:thermalEdgeflowTemperatureHeatflux}: The \(\theta\)-field and the \(\te{s}\)-streamlines show heat flux from the hot beam to the cold chamber walls. The temperature plot color consists of 10 uniform intervals with \(\theta \in [0,1]\).
  }\label{fig:thermalEdgeflow}
\end{figure}
\par
We perform a series of computations on unstructured triangle meshes with \(\mathbb{P}_1\) equal-order elements and CIP stabilization (\(\delta_\theta=1, \delta_{\te{u}}=1,\delta_p=0.1\)). The base mesh with \(s=0\) in~\cref{sfig:thermalEdgeflowMesh} applies refinement at the inner corners and the outer walls. All meshes result from a complete re-meshing procedure where we multiply the overall mesh resolution with the split factor \(2^{-s}\). In \cref{tab:applicationsEdgeflowConvergenceTable}, we summarize the corresponding mesh characterizations together with other computational statistics.
\begin{table}[t]
  \centering
  \caption{Summary of computations for the thermal edge flow, including: Split number \(s\), number of triangles \(N_t\), number of nodes \(N_n\), minimum cell size \(h_{\min}\), mean cross-section \(y\)-velocity \(1/8 \int_{0}^{8} (|u_y(x)|)|_{y=0.5} \dd x\), wall time for FFC \(t_{\text{FFC}}\), wall time for system assembly \(t_{\text{a}}\), wall time for solution routine \(t_{\text{s}}\). The wall times were measured on an Amazon EC2 R5a instance using 64 vCPUs, 512 GiB memory, and MPI parallelization.}\label{tab:applicationsEdgeflowConvergenceTable}
  \pgfplotstabletypeset[
    col sep = comma,
    every head row/.style={before row=\toprule,after row=\midrule},
    every last row/.style={after row=\bottomrule},
    columns/0/.style={column name=\(s\)},
    columns/1/.style={column name=\(N_{\text{t}}\)},
    columns/2/.style={column name=\(N_{\text{n}}\)},
    columns/3/.style={column name=\(h_{\min}\)},
    columns/4/.style={column name=\(1/8 \int_{0}^{8} (|u_y(x)|)|_{y=0.5} \dd x\)},
    columns/5/.style={column name=\(t_{\text{FFC}}\) [s]},
    columns/6/.style={column name=\(t_{\text{a}}\) [s]},
    columns/7/.style={column name=\(t_{\text{s}}\) [s]},
  ]
  {./data/thermal_edge_flow/convergence.csv}
\end{table}
\par
The \cref{sfig:thermalEdgeflowTemperatureHeatflux} shows the resulting heat distribution following Fourier's law. On the contrary, the flow field develops an edge flow rarefaction effect. Along the horizontal lines \(y=0.5\) and \(y=4.5\) in~\cref{fig:applicationsThermalEdgeFlowPlotLines}, we observe the same characteristic extremal points in the vertical velocity profile for all mesh resolutions. To further validate the results, we inspect the velocity's streamlines and magnitude in \cref{fig:edgeflowDetailedPlot}. We observe no significant change in the flow field with further refinement in \cref{sfig:uStreamlines2,sfig:uStreamlines3,sfig:uMagnitude2,sfig:uMagnitude3}. For all meshes, the flow field develops the characteristic eight vortices pattern. Altogether, the obtained results are in notable agreement with the high-resolution results of~\cite{su2020fast}. Together with the comparison to exact solutions in \cref{ss_convergenceStudy}, this fact underlines the method's correctness and validity.
\begin{figure}[t]
  \newcommand{\datapath}{./data}%
  \centering




\begin{tikzpicture}

  \definecolor{color1}{rgb}{0,    ,0.4470,0.7410}
  \definecolor{color2}{rgb}{0.8500,0.3250,0.0980}
  \definecolor{color3}{rgb}{0.9290,0.6940,0.1250}
  \definecolor{color4}{rgb}{0.4940,0.1840,0.5560}
  \definecolor{color5}{rgb}{0.4660,0.6740,0.1880}
  \definecolor{color6}{rgb}{0.3010,0.7450,0.9330}
  \definecolor{color7}{rgb}{0.6350,0.0780,0.1840}
  \pgfplotscreateplotcyclelist{matlabmarkreordered}{
    color1,every mark/.append style={solid},mark=*\\
    color2,every mark/.append style={solid},mark=square*\\
    color3,every mark/.append style={solid},mark=triangle*\\
    color4,every mark/.append style={solid},mark=halfsquare*\\
    color5,every mark/.append style={solid},mark=pentagon*\\
    color6,every mark/.append style={solid},mark=halfcircle*\\
    color7,every mark/.append style={solid,rotate=180},mark=halfdiamond*\\
    color1,every mark/.append style={solid},mark=diamond*\\
    color2,every mark/.append style={solid},mark=halfsquare right*\\
    color3,every mark/.append style={solid},mark=halfsquare left*\\
  }

  \tikzset{
    mark options={
      mark size=1.25,
      mark repeat=5
    }
  }

  \begin{groupplot}[
    group style={
      group size=2 by 1,
      horizontal sep=6pt,
    },
    height=5.75cm,
    width=7.5cm,
    grid=major,
    cycle list name=matlabmarkreordered,
    legend pos=north east,
  ]
    \nextgroupplot[
      xlabel={\(x\)},
      legend entries={\(s=0\),\(s=1\),,},
      legend pos=south east,
      legend columns=1,
      title={\(u_y(x,y=0.5)\)},
    ]
      \foreach \i in {0, ..., 3}
      {
        \pgfplotstableread[col sep=comma, sci, precision=10]{\datapath/thermal_edge_flow/0.5-\i.csv}{\i}
        \addplot table[
          x=Points_0,
          y=u_1,
        ]{\i};
      }
    \nextgroupplot[
      xlabel={\(x\)},
      legend entries={,,\(s=2\),\(s=3\)},
      legend pos=south east,
      legend columns=1,
      title={\(u_y(x,y=4.5)\)},
      yticklabel pos=right,
    ]
      \foreach \i in {0, ..., 3}
      {
        \pgfplotstableread[col sep=comma, sci, precision=10]{\datapath/thermal_edge_flow/4.5-\i.csv}{\i}
        \addplot table[
          x=Points_0,
          y=u_1,
        ]{\i};
      }
  \end{groupplot}

\end{tikzpicture}

  \caption{
    Flow velocity profiles for the thermal edge flow case on the different spatial discretization. The (\(y=0.5\))-profile reveals a local maximum at \(x=2\) surrounded by global minima at \(x \approx 1.3\) and \(x \approx 2.7\). Global maxima are at \(x \approx 0.8\) and \(x \approx 3.2\). The (\(y=4.5\))-profile reveals a characteristic maximum at the position \(x=2\), surrounded by minima in \(x \approx 0.55\) and \(x \approx 3.75\).
  }\label{fig:applicationsThermalEdgeFlowPlotLines}
  \Description{
    Flow velocity profiles for the thermal edge flow case on the different spatial discretization. The (\(y=0.5\))-profile reveals a local maximum at \(x=2\) surrounded by global minima at \(x \approx 1.3\) and \(x \approx 2.7\). Global maxima are at \(x \approx 0.8\) and \(x \approx 3.2\). The (\(y=4.5\))-profile reveals a characteristic maximum at the position \(x=2\), surrounded by minima in \(x \approx 0.55\) and \(x \approx 3.75\).
  }
\end{figure}
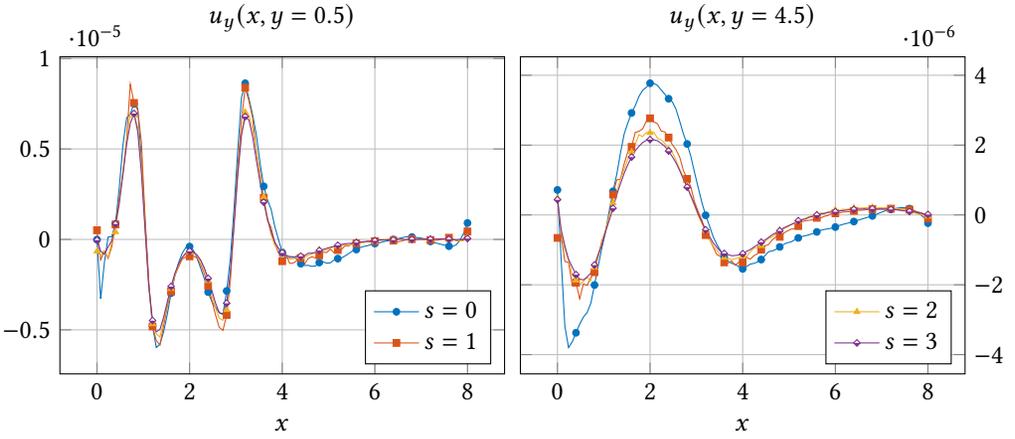
\begin{figure}
  \begin{subfigure}[c]{0.24\textwidth}
    \includegraphics[width=\textwidth]{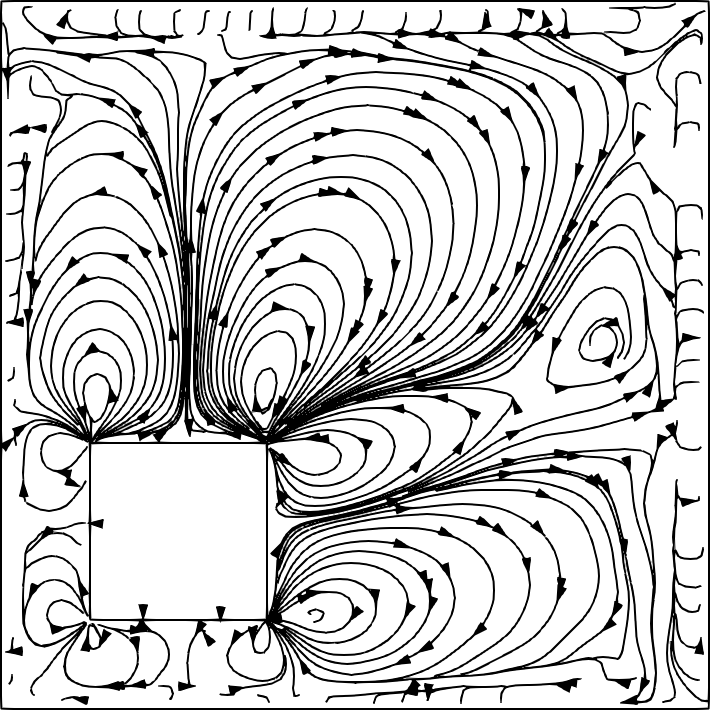}
    \subcaption{
      \(\te{u}\)-streamlines, \(s=0\).
    }\label{sfig:uStreamlines0}
  \end{subfigure}
  \hfill
  \begin{subfigure}[c]{0.24\textwidth}
    \includegraphics[width=\textwidth]{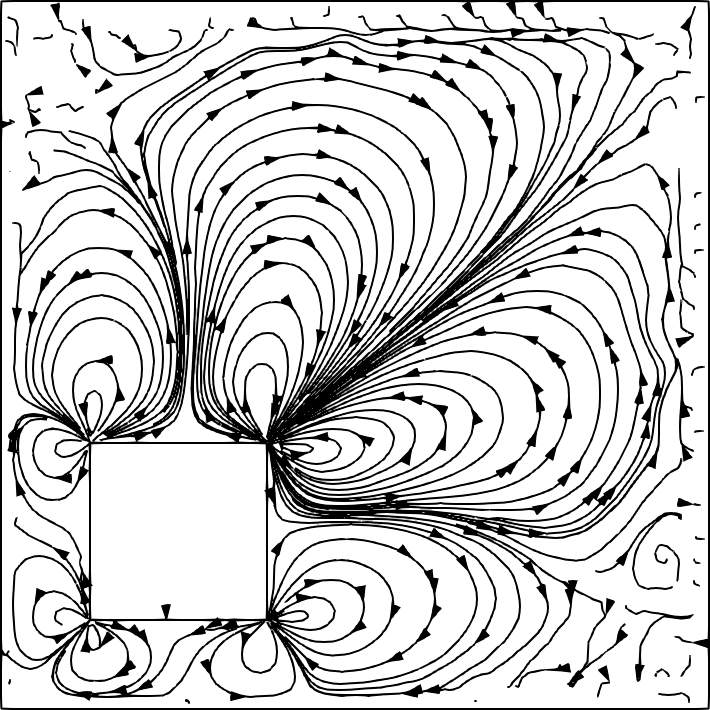}
    \subcaption{
      \(\te{u}\)-streamlines, \(s=1\).
    }\label{sfig:uStreamlines1}
  \end{subfigure}
  \hfill
  \begin{subfigure}[c]{0.24\textwidth}
    \includegraphics[width=\textwidth]{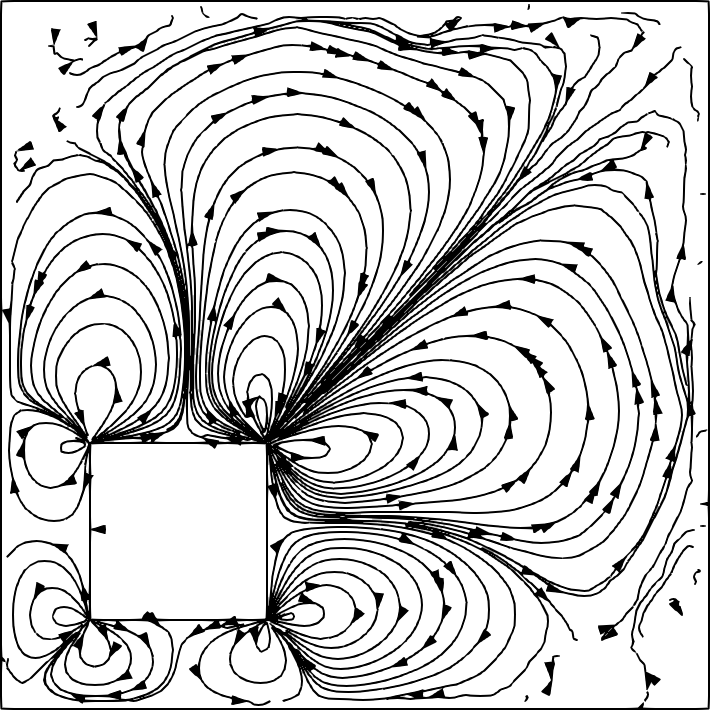}
    \subcaption{
      \(\te{u}\)-streamlines, \(s=2\).
    }\label{sfig:uStreamlines2}
  \end{subfigure}
  \hfill
  \begin{subfigure}[c]{0.24\textwidth}
    \includegraphics[width=\textwidth]{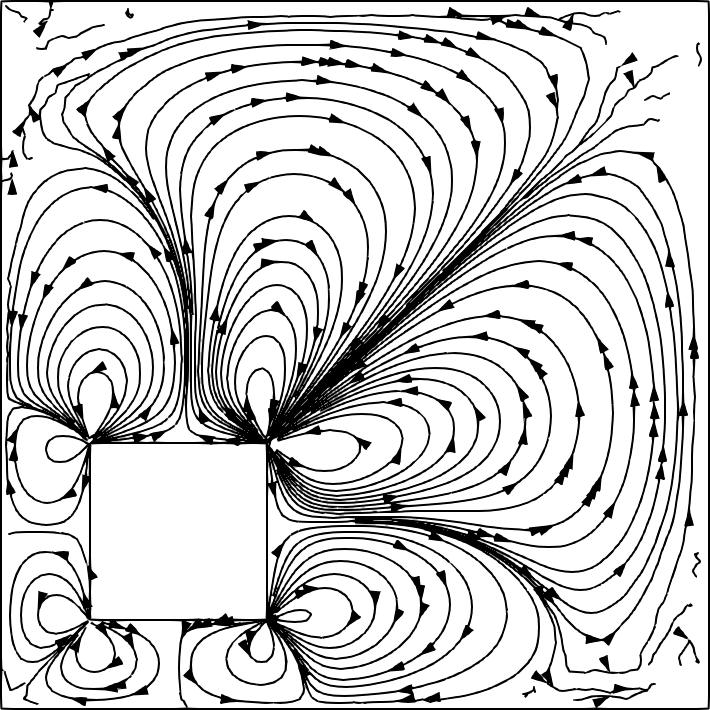}
    \subcaption{
      \(\te{u}\)-streamlines, \(s=3\).
    }\label{sfig:uStreamlines3}
  \end{subfigure}
  \begin{subfigure}[c]{0.24\textwidth}
    \includegraphics[width=\textwidth]{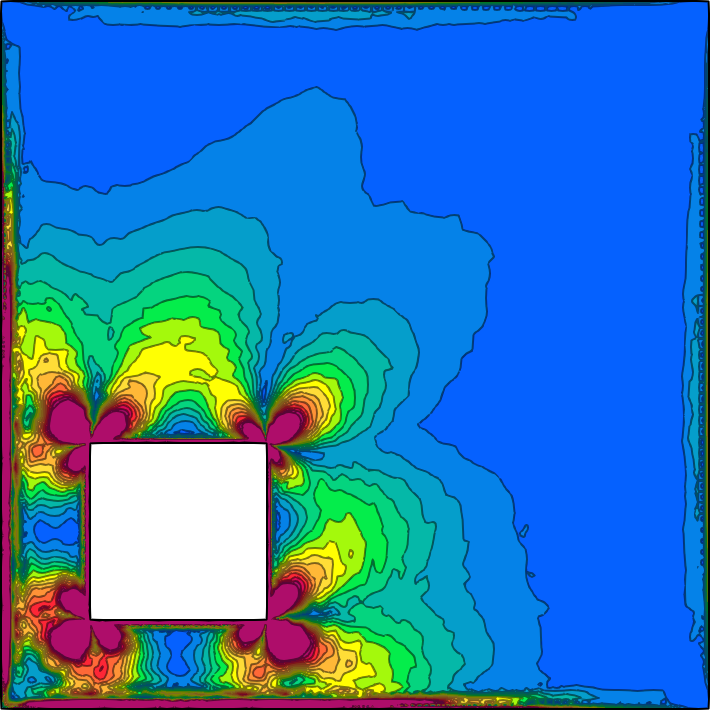}
    \subcaption{
      \(|\te{u}|\)-magnitude, \(s=0\).
    }\label{sfig:uMagnitude0}
  \end{subfigure}
  \hfill
  \begin{subfigure}[c]{0.24\textwidth}
    \includegraphics[width=\textwidth]{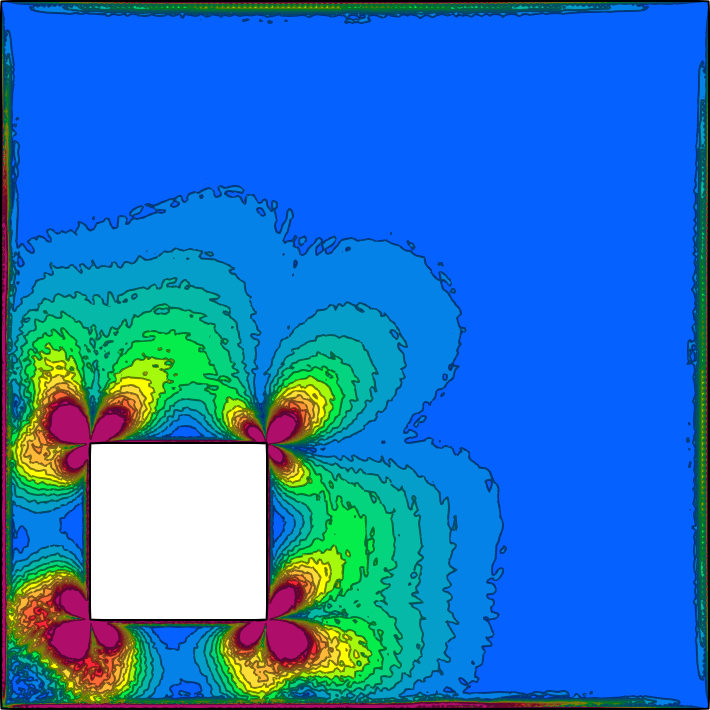}
    \subcaption{
      \(|\te{u}|\)-magnitude, \(s=1\).
    }\label{sfig:uMagnitude1}
  \end{subfigure}
  \hfill
  \begin{subfigure}[c]{0.24\textwidth}
    \includegraphics[width=\textwidth]{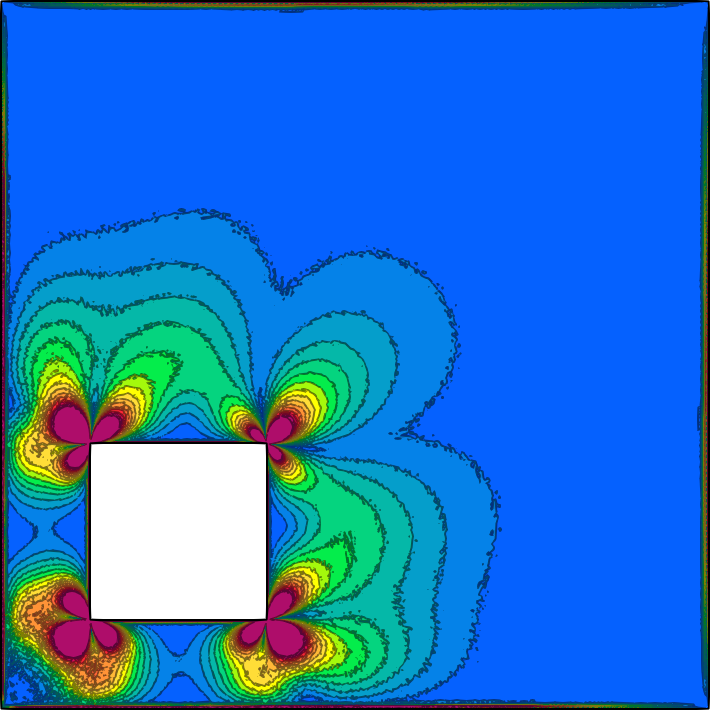}
    \subcaption{
      \(|\te{u}|\)-magnitude, \(s=2\).
    }\label{sfig:uMagnitude2}
  \end{subfigure}
  \hfill
  \begin{subfigure}[c]{0.24\textwidth}
    \includegraphics[width=\textwidth]{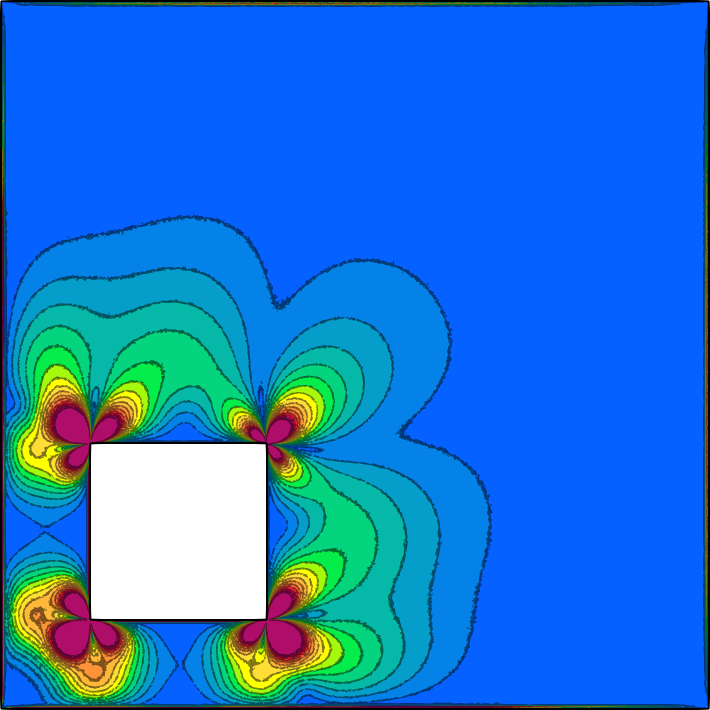}
    \subcaption{
      \(|\te{u}|\)-magnitude, \(s=3\).
    }\label{sfig:uMagnitude3}
  \end{subfigure}
  \caption{
    Detailed visualization of the flow field for the thermal edge flow case on the different spatial discretization: We observe no significant changes in the flow field between the two most refined meshes. \cref{sfig:uStreamlines0,sfig:uStreamlines1,sfig:uStreamlines2,sfig:uStreamlines3}: The velocity streamlines show the characteristic eight vortices flow pattern in agreement with~\cite{su2020fast}. \cref{sfig:uMagnitude0,sfig:uMagnitude1,sfig:uMagnitude2,sfig:uMagnitude3}: Significant peaks in the velocity develop in the vicinity of the beam corners. For better visualization, the color scheme of the \(|\te{u}|\)-magnitude is clipped at \(|\te{u}|=10^{-5}\).
    }\label{fig:edgeflowDetailedPlot}
  \Description{
    Detailed visualization of the flow field for the thermal edge flow case on the different spatial discretization: We observe no significant changes in the flow field between the two most refined meshes. \cref{sfig:uStreamlines0,sfig:uStreamlines1,sfig:uStreamlines2,sfig:uStreamlines3}: The velocity streamlines show the characteristic eight vortices flow pattern in agreement with~\cite{su2020fast}. \cref{sfig:uMagnitude0,sfig:uMagnitude1,sfig:uMagnitude2,sfig:uMagnitude3}: Significant peaks in the velocity develop in the vicinity of the beam corners. For better visualization, the color scheme of the \(|\te{u}|\)-magnitude is clipped at \(|\te{u}|=10^{-5}\).
  }
\end{figure}

\section{Conclusion and Outlook}\label{s_conclusion}
In this work, we presented a mixed finite element solver for the linear R13 equations. The solver implementation, provided publicly in~\cite{theisen2020fenicsr13Zenodo}, uses the tensor-valued description of the model equations. The weak form's derivation revealed the need for differential operators of variables with a tensor rank above two. Using the generality of FEniCS (and its underlying form language) allowed us to implement these operators conveniently in index notation. This abstraction level leads to an almost one-to-one correspondence between the mathematical formulation and its corresponding implementation, improving the source code's readability and maintainability.
\par
Furthermore, a convergence study showed the validity of the proposed method for a stable 5-tuple of elements and a stabilized equal-order combination. Application cases justified using the R13 equation over the traditional Navier--Stokes--Fourier models due to the capabilities to predict rarefaction effects for non-equilibrium gas flows.
\paragraph{UFL Discussion and Outlook}
This work showed the successful application of FEniCS's UFL to tensor-valued model equations. More general use cases involving equations with tensor rank \(r>2\) are possible using the ``\texttt{TensorElement}'' class of~\cite{fenics2020uflRepo} with the ``\texttt{shape=(d1,..,dr)}'' option. In this case, one must implement the required operators using the summation convention analogously to \cref{s_implementationAndValidation}. While we used the CIP surface stabilization, UFL also allows for residual-based volume stabilization, demonstrated, e.g., for the GLS method, in~\cite{helanow2018stabilized}. 
\par
Limitations of UFL include the FFC time \(t_{\text{FFC}}\) of order \(\mathcal{O}(1)\) (as reported in \cref{tab:applicationsPumpConvergenceTable,tab:applicationsEdgeflowConvergenceTable}), which is required once to compile the weak form. This compilation process also introduces another layer in the overall simulation pipeline, which might increase the debugging complexity. While being part of the FEniCS framework, it is also unclear whether one can use the UFL separately in other contexts without connection to FEniCS or the finite element method.
\par
However, for our particular use case, the generalization capabilities outweigh the possible limitations. Especially for quick prototyping of new model equations and their corresponding discretization, the UFL seems to be a practical choice. This fact holds in particular also for other moment models, e.g.~\cite{berghoff2020massively,koellermeier2017numerical,koellermeier2020spline}. Therefore, future work in rarefied gas applications could consider more complicated gas configurations~\cite{sarna2020moment,pavic2013maximum}, allow for more general gas molecules~\cite{cai2020regularized} or use different closures~\cite{mcdonald2013affordable}.
\begin{acks}
  The authors acknowledge financial support by the German Research Foundation (DFG) under grant IRTG 2379. The authors acknowledge the anonymous reviewers for their helpful comments.
\end{acks}

\bibliographystyle{ACM-Reference-Format}
\bibliography{references.bib}


\begin{thebibliography}{52}


\ifx \showCODEN    \undefined \def \showCODEN     #1{\unskip}     \fi
\ifx \showDOI      \undefined \def \showDOI       #1{#1}\fi
\ifx \showISBNx    \undefined \def \showISBNx     #1{\unskip}     \fi
\ifx \showISBNxiii \undefined \def \showISBNxiii  #1{\unskip}     \fi
\ifx \showISSN     \undefined \def \showISSN      #1{\unskip}     \fi
\ifx \showLCCN     \undefined \def \showLCCN      #1{\unskip}     \fi
\ifx \shownote     \undefined \def \shownote      #1{#1}          \fi
\ifx \showarticletitle \undefined \def \showarticletitle #1{#1}   \fi
\ifx \showURL      \undefined \def \showURL       {\relax}        \fi
\providecommand\bibfield[2]{#2}
\providecommand\bibinfo[2]{#2}
\providecommand\natexlab[1]{#1}
\providecommand\showeprint[2][]{arXiv:#2}

\bibitem[\protect\citeauthoryear{Free Software Foundation}{web}{2007}]%
        {web2007lgplv3}
Free Software Foundation \bibinfo{year}{2007}\natexlab{}.
\newblock \bibinfo{booktitle}{\emph{GNU Lesser General Public License
  (LGPLv3)}}.
\newblock Free Software Foundation.
\newblock
\urldef\tempurl%
\url{https://www.gnu.org/licenses/lgpl-3.0.en.html}
\showURL{%
Retrieved 2019-08-26 from \tempurl}


\bibitem[\protect\citeauthoryear{Aln{\ae}s, Logg, {\O}lgaard, Rognes, and
  Wells}{Aln{\ae}s et~al\mbox{.}}{2014}]%
        {alnaes2014unified}
\bibfield{author}{\bibinfo{person}{Martin~S. Aln{\ae}s},
  \bibinfo{person}{Anders Logg}, \bibinfo{person}{Kristian~B. {\O}lgaard},
  \bibinfo{person}{Marie~E. Rognes}, {and} \bibinfo{person}{Garth~N. Wells}.}
  \bibinfo{year}{2014}\natexlab{}.
\newblock \showarticletitle{Unified Form Language: A Domain-Specific Language
  for Weak Formulations of Partial Differential Equations}.
\newblock \bibinfo{journal}{\emph{ACM Transactions on Mathematical Software
  (TOMS)}} \bibinfo{volume}{40}, \bibinfo{number}{2} (\bibinfo{year}{2014}),
  \bibinfo{pages}{9}.
\newblock
\urldef\tempurl%
\url{https://doi.org/10.1145/2566630}
\showDOI{\tempurl}


\bibitem[\protect\citeauthoryear{Alnæs, Blechta, Hake, Johansson, Kehlet,
  Logg, Richardson, Ring, Rognes, and Wells}{Alnæs et~al\mbox{.}}{2015}]%
        {alnaesEtAl2015fenics}
\bibfield{author}{\bibinfo{person}{Martin~S. Alnæs}, \bibinfo{person}{Jan
  Blechta}, \bibinfo{person}{Johan Hake}, \bibinfo{person}{August Johansson},
  \bibinfo{person}{Benjamin Kehlet}, \bibinfo{person}{Anders Logg},
  \bibinfo{person}{Chris Richardson}, \bibinfo{person}{Johannes Ring},
  \bibinfo{person}{Marie~E. Rognes}, {and} \bibinfo{person}{Garth~N. Wells}.}
  \bibinfo{year}{2015}\natexlab{}.
\newblock \showarticletitle{The {FEniCS} Project Version 1.5}.
\newblock \bibinfo{journal}{\emph{Archive of Numerical Software}}
  \bibinfo{volume}{3}, \bibinfo{number}{100} (\bibinfo{year}{2015}).
\newblock
\showISSN{9--23}
\urldef\tempurl%
\url{https://doi.org/10.11588/ans.2015.100.20553}
\showDOI{\tempurl}


\bibitem[\protect\citeauthoryear{Amestoy, Duff, Koster, and
  L'Excellent}{Amestoy et~al\mbox{.}}{2001}]%
        {MUMPS:1}
\bibfield{author}{\bibinfo{person}{Patrick~R. Amestoy},
  \bibinfo{person}{Iain~S. Duff}, \bibinfo{person}{Jacko Koster}, {and}
  \bibinfo{person}{Jean-Yves L'Excellent}.} \bibinfo{year}{2001}\natexlab{}.
\newblock \showarticletitle{A Fully Asynchronous Multifrontal Solver Using
  Distributed Dynamic Scheduling}.
\newblock \bibinfo{journal}{\emph{SIAM J. Matrix Anal. Appl.}}
  \bibinfo{volume}{23}, \bibinfo{number}{1} (\bibinfo{year}{2001}),
  \bibinfo{pages}{15--41}.
\newblock
\urldef\tempurl%
\url{https://doi.org/10.1137/S0895479899358194}
\showDOI{\tempurl}


\bibitem[\protect\citeauthoryear{Amestoy, Guermouche, L'Excellent, and
  Pralet}{Amestoy et~al\mbox{.}}{2006}]%
        {MUMPS:2}
\bibfield{author}{\bibinfo{person}{Patrick~R. Amestoy}, \bibinfo{person}{Abdou
  Guermouche}, \bibinfo{person}{Jean-Yves L'Excellent}, {and}
  \bibinfo{person}{Stéphane Pralet}.} \bibinfo{year}{2006}\natexlab{}.
\newblock \showarticletitle{Hybrid Scheduling for the Parallel Solution of
  Linear Systems}.
\newblock \bibinfo{journal}{\emph{Parallel Comput.}} \bibinfo{volume}{32},
  \bibinfo{number}{2} (\bibinfo{year}{2006}), \bibinfo{pages}{136--156}.
\newblock
\urldef\tempurl%
\url{https://doi.org/10.1016/j.parco.2005.07.004}
\showDOI{\tempurl}


\bibitem[\protect\citeauthoryear{Aoki, Degond, Mieussens, Nishioka, and
  Takata}{Aoki et~al\mbox{.}}{2007}]%
        {aoki2007numerical}
\bibfield{author}{\bibinfo{person}{K. Aoki}, \bibinfo{person}{P. Degond},
  \bibinfo{person}{L. Mieussens}, \bibinfo{person}{M. Nishioka}, {and}
  \bibinfo{person}{S. Takata}.} \bibinfo{year}{2007}\natexlab{}.
\newblock \showarticletitle{Numerical Simulation of a Knudsen Pump Using the
  Effect of Curvature of the Channel}.
\newblock \bibinfo{journal}{\emph{25th International Symposium on Rarefied Gas
  Dynamics}}, \bibinfo{pages}{1079--1084}.
\newblock
\urldef\tempurl%
\url{https://doi.org/10.1002/vipr.200690062}
\showDOI{\tempurl}


\bibitem[\protect\citeauthoryear{Auricchio, Brezzi, and Lovadina}{Auricchio
  et~al\mbox{.}}{2004}]%
        {auricchio2004mixed}
\bibfield{author}{\bibinfo{person}{Ferdinando Auricchio},
  \bibinfo{person}{Franco Brezzi}, {and} \bibinfo{person}{Carlo Lovadina}.}
  \bibinfo{year}{2004}\natexlab{}.
\newblock \showarticletitle{Mixed Finite Element Methods}. In
  \bibinfo{booktitle}{\emph{Encyclopedia of Computational Mechanics}}.
  \bibinfo{publisher}{American Cancer Society}.
\newblock
\showISBNx{9780470091357}
\urldef\tempurl%
\url{https://doi.org/10.1002/0470091355.ecm004}
\showDOI{\tempurl}


\bibitem[\protect\citeauthoryear{Berghoff, Frank, and Seibold}{Berghoff
  et~al\mbox{.}}{2020}]%
        {berghoff2020massively}
\bibfield{author}{\bibinfo{person}{Marco Berghoff}, \bibinfo{person}{Martin
  Frank}, {and} \bibinfo{person}{Benjamin Seibold}.}
  \bibinfo{year}{2020}\natexlab{}.
\newblock \showarticletitle{Massively Parallel Stencil Strategies for Radiation
  Transport Moment Model Simulations}. In
  \bibinfo{booktitle}{\emph{Computational Science -- ICCS 2020}},
  \bibfield{editor}{\bibinfo{person}{Valeria~V. Krzhizhanovskaya},
  \bibinfo{person}{G{\'a}bor Z{\'a}vodszky}, \bibinfo{person}{Michael~H. Lees},
  \bibinfo{person}{Jack~J. Dongarra}, \bibinfo{person}{Peter M.~A. Sloot},
  \bibinfo{person}{S{\'e}rgio Brissos}, {and} \bibinfo{person}{Jo{\~a}o
  Teixeira}} (Eds.). \bibinfo{publisher}{Springer International Publishing},
  \bibinfo{address}{Cham}, \bibinfo{pages}{242--256}.
\newblock
\showISBNx{978-3-030-50436-6}
\urldef\tempurl%
\url{https://doi.org/10.1007/978-3-030-50436-6_18}
\showDOI{\tempurl}


\bibitem[\protect\citeauthoryear{Burman and Hansbo}{Burman and Hansbo}{2006}]%
        {burman2006edgeStabilization}
\bibfield{author}{\bibinfo{person}{Erik Burman} {and} \bibinfo{person}{Peter
  Hansbo}.} \bibinfo{year}{2006}\natexlab{}.
\newblock \showarticletitle{Edge Stabilization for the Generalized Stokes
  Problem: A Continuous Interior Penalty Method}.
\newblock \bibinfo{journal}{\emph{Computer Methods in Applied Mechanics and
  Engineering}} \bibinfo{volume}{195}, \bibinfo{number}{19}
  (\bibinfo{year}{2006}), \bibinfo{pages}{2393--2410}.
\newblock
\showISSN{0045-7825}
\urldef\tempurl%
\url{https://doi.org/10.1016/j.cma.2005.05.009}
\showDOI{\tempurl}


\bibitem[\protect\citeauthoryear{Burman, Quarteroni, and Stamm}{Burman
  et~al\mbox{.}}{2010}]%
        {burman2010interior}
\bibfield{author}{\bibinfo{person}{Erik Burman}, \bibinfo{person}{Alfio
  Quarteroni}, {and} \bibinfo{person}{Benjamin Stamm}.}
  \bibinfo{year}{2010}\natexlab{}.
\newblock \showarticletitle{Interior Penalty Continuous and Discontinuous
  Finite Element Approximations of Hyperbolic Equations}.
\newblock \bibinfo{journal}{\emph{Journal of Scientific Computing}}
  \bibinfo{volume}{43}, \bibinfo{number}{3} (\bibinfo{year}{2010}),
  \bibinfo{pages}{293--312}.
\newblock
\urldef\tempurl%
\url{https://doi.org/10.1007/s10915-008-9232-6}
\showDOI{\tempurl}


\bibitem[\protect\citeauthoryear{Cai and Wang}{Cai and Wang}{2020}]%
        {cai2020regularized}
\bibfield{author}{\bibinfo{person}{Zhenning Cai} {and} \bibinfo{person}{Yanli
  Wang}.} \bibinfo{year}{2020}\natexlab{}.
\newblock \showarticletitle{Regularized 13-Moment Equations for Inverse Power
  Law Models}.
\newblock \bibinfo{journal}{\emph{Journal of Fluid Mechanics}}
  \bibinfo{volume}{894} (\bibinfo{date}{May} \bibinfo{year}{2020}).
\newblock
\showISSN{1469-7645}
\urldef\tempurl%
\url{https://doi.org/10.1017/jfm.2020.251}
\showDOI{\tempurl}


\bibitem[\protect\citeauthoryear{Donea and Huerta}{Donea and Huerta}{2003}]%
        {donea2003finite}
\bibfield{author}{\bibinfo{person}{Jean Donea} {and} \bibinfo{person}{Antonio
  Huerta}.} \bibinfo{year}{2003}\natexlab{}.
\newblock \bibinfo{booktitle}{\emph{Finite Element Methods for Flow Problems}}.
\newblock \bibinfo{publisher}{John Wiley \& Sons}.
\newblock
\urldef\tempurl%
\url{https://doi.org/10.1002/0470013826}
\showDOI{\tempurl}


\bibitem[\protect\citeauthoryear{Dongari, Sharma, and Durst}{Dongari
  et~al\mbox{.}}{2009}]%
        {dongari2009pressure}
\bibfield{author}{\bibinfo{person}{Nishanth Dongari}, \bibinfo{person}{Ashutosh
  Sharma}, {and} \bibinfo{person}{Franz Durst}.}
  \bibinfo{year}{2009}\natexlab{}.
\newblock \showarticletitle{Pressure-Driven Diffusive Gas Flows in
  Micro-Channels: From the Knudsen to the Continuum Regimes}.
\newblock \bibinfo{journal}{\emph{Microfluidics and Nanofluidics}}
  \bibinfo{volume}{6}, \bibinfo{number}{5} (\bibinfo{year}{2009}),
  \bibinfo{pages}{679--692}.
\newblock
\urldef\tempurl%
\url{https://doi.org/10.1007/s10404-008-0344-y}
\showDOI{\tempurl}


\bibitem[\protect\citeauthoryear{Geuzaine and Remacle}{Geuzaine and
  Remacle}{2009}]%
        {geuzaine2009gmsh}
\bibfield{author}{\bibinfo{person}{Christophe Geuzaine} {and}
  \bibinfo{person}{Jean-François Remacle}.} \bibinfo{year}{2009}\natexlab{}.
\newblock \showarticletitle{Gmsh: A 3-D Finite Element Mesh Generator With
  Built-In Pre- and Post-Processing Facilities}.
\newblock \bibinfo{journal}{\emph{Internat. J. Numer. Methods Engrg.}}
  \bibinfo{volume}{79}, \bibinfo{number}{11} (\bibinfo{year}{2009}),
  \bibinfo{pages}{1309--1331}.
\newblock
\urldef\tempurl%
\url{https://doi.org/10.1002/nme.2579}
\showDOI{\tempurl}


\bibitem[\protect\citeauthoryear{Grad}{Grad}{1958}]%
        {grad1958principles}
\bibfield{author}{\bibinfo{person}{Harold Grad}.}
  \bibinfo{year}{1958}\natexlab{}.
\newblock \bibinfo{booktitle}{\emph{Principles of the Kinetic Theory of
  Gases}}.
\newblock \bibinfo{publisher}{Springer Berlin Heidelberg},
  \bibinfo{address}{Berlin, Heidelberg}, \bibinfo{pages}{205--294}.
\newblock
\showISBNx{978-3-642-45892-7}
\urldef\tempurl%
\url{https://doi.org/10.1007/978-3-642-45892-7_3}
\showDOI{\tempurl}


\bibitem[\protect\citeauthoryear{Gu and Emerson}{Gu and Emerson}{2009}]%
        {gu2009high}
\bibfield{author}{\bibinfo{person}{Xiao-Jun Gu} {and} \bibinfo{person}{David~R.
  Emerson}.} \bibinfo{year}{2009}\natexlab{}.
\newblock \showarticletitle{A High-Order Moment Approach for Capturing
  Non-Equilibrium Phenomena in the Transition Regime}.
\newblock \bibinfo{journal}{\emph{Journal of Fluid Mechanics}}
  \bibinfo{volume}{636} (\bibinfo{year}{2009}), \bibinfo{pages}{177--216}.
\newblock
\urldef\tempurl%
\url{https://doi.org/10.1017/S002211200900768X}
\showDOI{\tempurl}


\bibitem[\protect\citeauthoryear{Helanow and Ahlkrona}{Helanow and
  Ahlkrona}{2018}]%
        {helanow2018stabilized}
\bibfield{author}{\bibinfo{person}{Christian Helanow} {and}
  \bibinfo{person}{Josefin Ahlkrona}.} \bibinfo{year}{2018}\natexlab{}.
\newblock \showarticletitle{Stabilized Equal Low-Order Finite Elements in Ice
  Sheet Modeling -- Accuracy and Robustness}.
\newblock \bibinfo{journal}{\emph{Computational Geosciences}}
  \bibinfo{volume}{22}, \bibinfo{number}{4} (\bibinfo{year}{2018}),
  \bibinfo{pages}{951--974}.
\newblock
\urldef\tempurl%
\url{https://doi.org/10.1007/s10596-017-9713-5}
\showDOI{\tempurl}


\bibitem[\protect\citeauthoryear{Hoffman, Jansson, and Jansson}{Hoffman
  et~al\mbox{.}}{2015}]%
        {hoffman2015fenics}
\bibfield{author}{\bibinfo{person}{Johan Hoffman}, \bibinfo{person}{Johan
  Jansson}, {and} \bibinfo{person}{Niclas Jansson}.}
  \bibinfo{year}{2015}\natexlab{}.
\newblock \showarticletitle{FEniCS-HPC: Automated Predictive High-Performance
  Finite Element Computing with Applications in Aerodynamics}. In
  \bibinfo{booktitle}{\emph{International Conference on Parallel Processing and
  Applied Mathematics}}. Springer, \bibinfo{pages}{356--365}.
\newblock
\urldef\tempurl%
\url{https://doi.org/10.1007/978-3-319-32149-3_34}
\showDOI{\tempurl}


\bibitem[\protect\citeauthoryear{Hoffman and Logg}{Hoffman and Logg}{2002}]%
        {hoffman2002dolfin}
\bibfield{author}{\bibinfo{person}{Johan Hoffman} {and} \bibinfo{person}{Anders
  Logg}.} \bibinfo{year}{2002}\natexlab{}.
\newblock \bibinfo{booktitle}{\emph{DOLFIN: Dynamic Object Oriented Library for
  FINite Element Computation}}.
\newblock \bibinfo{type}{{T}echnical {R}eport}.
\newblock
\urldef\tempurl%
\url{http://urn.kb.se/resolve?urn=urn%3Anbn%3Ase%3Akth%3Adiva-235788}
\showURL{%
Retrieved 2019-08-28 from \tempurl}


\bibitem[\protect\citeauthoryear{Kirby}{Kirby}{2004}]%
        {kirby2004algorithm}
\bibfield{author}{\bibinfo{person}{Robert~C. Kirby}.}
  \bibinfo{year}{2004}\natexlab{}.
\newblock \showarticletitle{Algorithm 839: FIAT, A New Paradigm for Computing
  Finite Element Basis Functions}.
\newblock \bibinfo{journal}{\emph{ACM Transactions on Mathematical Software
  (TOMS)}} \bibinfo{volume}{30}, \bibinfo{number}{4} (\bibinfo{year}{2004}),
  \bibinfo{pages}{502--516}.
\newblock
\urldef\tempurl%
\url{https://doi.org/10.1145/1039813.1039820}
\showDOI{\tempurl}


\bibitem[\protect\citeauthoryear{Kirby and Logg}{Kirby and Logg}{2006}]%
        {kirby2006compiler}
\bibfield{author}{\bibinfo{person}{Robert~C. Kirby} {and}
  \bibinfo{person}{Anders Logg}.} \bibinfo{year}{2006}\natexlab{}.
\newblock \showarticletitle{A Compiler for Variational Forms}.
\newblock \bibinfo{journal}{\emph{ACM Transactions on Mathematical Software
  (TOMS)}} \bibinfo{volume}{32}, \bibinfo{number}{3} (\bibinfo{year}{2006}),
  \bibinfo{pages}{417--444}.
\newblock
\urldef\tempurl%
\url{https://doi.org/10.1145/1163641.1163644}
\showDOI{\tempurl}


\bibitem[\protect\citeauthoryear{Koellermeier and Scholz}{Koellermeier and
  Scholz}{2020}]%
        {koellermeier2020spline}
\bibfield{author}{\bibinfo{person}{Julian Koellermeier} {and}
  \bibinfo{person}{Ullika Scholz}.} \bibinfo{year}{2020}\natexlab{}.
\newblock \showarticletitle{Spline Moment Models for the One-Dimensional
  Boltzmann–BGK Equation}.
\newblock \bibinfo{journal}{\emph{Physics of Fluids}} \bibinfo{volume}{32},
  \bibinfo{number}{10} (\bibinfo{year}{2020}), \bibinfo{pages}{102009}.
\newblock
\urldef\tempurl%
\url{https://doi.org/10.1063/5.0020998}
\showDOI{\tempurl}


\bibitem[\protect\citeauthoryear{Koellermeier and Torrilhon}{Koellermeier and
  Torrilhon}{2017}]%
        {koellermeier2017numerical}
\bibfield{author}{\bibinfo{person}{Julian Koellermeier} {and}
  \bibinfo{person}{Manuel Torrilhon}.} \bibinfo{year}{2017}\natexlab{}.
\newblock \showarticletitle{Numerical Study of Partially Conservative Moment
  Equations in Kinetic Theory}.
\newblock \bibinfo{journal}{\emph{Communications in Computational Physics}}
  \bibinfo{volume}{21}, \bibinfo{number}{4} (\bibinfo{year}{2017}),
  \bibinfo{pages}{981–1011}.
\newblock
\urldef\tempurl%
\url{https://doi.org/10.4208/cicp.OA-2016-0053}
\showDOI{\tempurl}


\bibitem[\protect\citeauthoryear{Langtangen, Logg, and Tveito}{Langtangen
  et~al\mbox{.}}{2016}]%
        {langtangen2016solving}
\bibfield{author}{\bibinfo{person}{Hans~P. Langtangen}, \bibinfo{person}{Anders
  Logg}, {and} \bibinfo{person}{Aslak Tveito}.}
  \bibinfo{year}{2016}\natexlab{}.
\newblock \bibinfo{booktitle}{\emph{Solving PDEs in Python: The FEniCS Tutorial
  I}}.
\newblock \bibinfo{publisher}{Springer International Publishing}.
\newblock
\urldef\tempurl%
\url{https://doi.org/10.1007/978-3-319-52462-7}
\showDOI{\tempurl}


\bibitem[\protect\citeauthoryear{Leontidis, Chen, Baldas, and Colin}{Leontidis
  et~al\mbox{.}}{2014}]%
        {leontidis2014numerical}
\bibfield{author}{\bibinfo{person}{Vlasios Leontidis}, \bibinfo{person}{Jie
  Chen}, \bibinfo{person}{Lucien Baldas}, {and} \bibinfo{person}{St{\'e}phane
  Colin}.} \bibinfo{year}{2014}\natexlab{}.
\newblock \showarticletitle{Numerical Design of a Knudsen Pump With Curved
  Channels Operating in the Slip Flow Regime}.
\newblock \bibinfo{journal}{\emph{Heat and Mass Transfer}}
  \bibinfo{volume}{50}, \bibinfo{number}{8} (\bibinfo{year}{2014}),
  \bibinfo{pages}{1065--1080}.
\newblock
\urldef\tempurl%
\url{https://doi.org/10.1007/s00231-014-1314-4}
\showDOI{\tempurl}


\bibitem[\protect\citeauthoryear{Logg}{Logg}{2015}]%
        {logg2015ytImplementing}
\bibfield{author}{\bibinfo{person}{Anders Logg}.}
  \bibinfo{year}{2015}\natexlab{}.
\newblock \bibinfo{title}{FEniCS'15: Implementing Mathematics: Domain-Specific
  Languages and Automated Computing}.
\newblock
\newblock
\urldef\tempurl%
\url{https://www.youtube.com/watch?v=BzvJHsR_RSI}
\showURL{%
Retrieved 2019-08-28 from \tempurl}


\bibitem[\protect\citeauthoryear{Logg, Mardal, Wells, et~al\mbox{.}}{Logg
  et~al\mbox{.}}{2012}]%
        {loggMardalWellsEtAl2012}
\bibfield{author}{\bibinfo{person}{Anders Logg}, \bibinfo{person}{Kent-Andre
  Mardal}, \bibinfo{person}{Garth Wells}, {et~al\mbox{.}}}
  \bibinfo{year}{2012}\natexlab{}.
\newblock \bibinfo{booktitle}{\emph{Automated Solution of Differential
  Equations by the Finite Element Method: The {FEniCS} Book}}.
\newblock \bibinfo{publisher}{Springer-Verlag}.
\newblock
\showISBNx{978-3-642-23098-1}
\urldef\tempurl%
\url{https://doi.org/10.1007/978-3-642-23099-8}
\showDOI{\tempurl}


\bibitem[\protect\citeauthoryear{Logg and Wells}{Logg and Wells}{2010}]%
        {logg2010dolfin}
\bibfield{author}{\bibinfo{person}{Anders Logg} {and} \bibinfo{person}{Garth~N
  Wells}.} \bibinfo{year}{2010}\natexlab{}.
\newblock \showarticletitle{DOLFIN: Automated Finite Element Computing}.
\newblock \bibinfo{journal}{\emph{ACM Transactions on Mathematical Software
  (TOMS)}} \bibinfo{volume}{37}, \bibinfo{number}{2} (\bibinfo{year}{2010}),
  \bibinfo{pages}{20}.
\newblock
\urldef\tempurl%
\url{https://doi.org/10.1145/1731022.1731030}
\showDOI{\tempurl}


\bibitem[\protect\citeauthoryear{McDonald and Torrilhon}{McDonald and
  Torrilhon}{2013}]%
        {mcdonald2013affordable}
\bibfield{author}{\bibinfo{person}{James McDonald} {and}
  \bibinfo{person}{Manuel Torrilhon}.} \bibinfo{year}{2013}\natexlab{}.
\newblock \showarticletitle{Affordable robust moment closures for CFD based on
  the maximum-entropy hierarchy}.
\newblock \bibinfo{journal}{\emph{J. Comput. Phys.}}  \bibinfo{volume}{251}
  (\bibinfo{year}{2013}), \bibinfo{pages}{500 -- 523}.
\newblock
\showISSN{0021-9991}
\urldef\tempurl%
\url{https://doi.org/10.1016/j.jcp.2013.05.046}
\showDOI{\tempurl}


\bibitem[\protect\citeauthoryear{Müller}{Müller}{2010}]%
        {mueller2010computing}
\bibfield{author}{\bibinfo{person}{Kaspar Müller}.}
  \bibinfo{year}{2010}\natexlab{}.
\newblock \emph{\bibinfo{title}{Computing Rarefied External Flows Using
  Extended Fluid Dynamics}}.
\newblock Master Thesis, ETH Zürich.
\newblock


\bibitem[\protect\citeauthoryear{Pavić, Ruggeri, and Simić}{Pavić
  et~al\mbox{.}}{2013}]%
        {pavic2013maximum}
\bibfield{author}{\bibinfo{person}{Milana Pavić}, \bibinfo{person}{Tommaso
  Ruggeri}, {and} \bibinfo{person}{Srboljub Simić}.}
  \bibinfo{year}{2013}\natexlab{}.
\newblock \showarticletitle{Maximum Entropy Principle for Rarefied Polyatomic
  Gases}.
\newblock \bibinfo{journal}{\emph{Physica A: Statistical Mechanics and its
  Applications}} \bibinfo{volume}{392}, \bibinfo{number}{6}
  (\bibinfo{year}{2013}), \bibinfo{pages}{1302 -- 1317}.
\newblock
\showISSN{0378-4371}
\urldef\tempurl%
\url{https://doi.org/10.1016/j.physa.2012.12.006}
\showDOI{\tempurl}


\bibitem[\protect\citeauthoryear{Project}{Project}{2019a}]%
        {fenics2020uflRepo}
\bibfield{author}{\bibinfo{person}{FEniCS Project}.}
  \bibinfo{year}{2019}\natexlab{a}.
\newblock \bibinfo{title}{UFL 2019.1.0 Repository}.
\newblock
\newblock
\urldef\tempurl%
\url{https://github.com/FEniCS/ufl/tree/2019.1.0}
\showURL{%
Retrieved 2020-10-07 from \tempurl}


\bibitem[\protect\citeauthoryear{Project}{Project}{2019b}]%
        {fenics2019uflDocumentation}
\bibfield{author}{\bibinfo{person}{FEniCS Project}.}
  \bibinfo{year}{2019}\natexlab{b}.
\newblock \bibinfo{title}{UFL Documentation: Form Language}.
\newblock
\newblock
\urldef\tempurl%
\url{https://fenics.readthedocs.io/projects/ufl/en/latest/manual/form_language.html}
\showURL{%
Retrieved 2020-02-05 from \tempurl}


\bibitem[\protect\citeauthoryear{Rana, Torrilhon, and Struchtrup}{Rana
  et~al\mbox{.}}{2013}]%
        {rana2013robust}
\bibfield{author}{\bibinfo{person}{Anirudh Rana}, \bibinfo{person}{Manuel
  Torrilhon}, {and} \bibinfo{person}{Henning Struchtrup}.}
  \bibinfo{year}{2013}\natexlab{}.
\newblock \showarticletitle{A Robust Numerical Method for the R13 Equations of
  Rarefied Gas Dynamics: Application to Lid Driven Cavity}.
\newblock \bibinfo{journal}{\emph{J. Comput. Phys.}}  \bibinfo{volume}{236}
  (\bibinfo{year}{2013}), \bibinfo{pages}{169--186}.
\newblock
\urldef\tempurl%
\url{https://doi.org/10.1016/j.jcp.2012.11.023}
\showDOI{\tempurl}


\bibitem[\protect\citeauthoryear{Rana and Struchtrup}{Rana and
  Struchtrup}{2016}]%
        {rana2016thermodynamically}
\bibfield{author}{\bibinfo{person}{Anirudh~Singh Rana} {and}
  \bibinfo{person}{Henning Struchtrup}.} \bibinfo{year}{2016}\natexlab{}.
\newblock \showarticletitle{Thermodynamically Admissible Boundary Conditions
  for the Regularized 13 Moment Equations}.
\newblock \bibinfo{journal}{\emph{Physics of Fluids}} \bibinfo{volume}{28},
  \bibinfo{number}{2} (\bibinfo{year}{2016}), \bibinfo{pages}{027105}.
\newblock
\urldef\tempurl%
\url{https://doi.org/10.1063/1.4941293}
\showDOI{\tempurl}


\bibitem[\protect\citeauthoryear{Rathgeber, Ham, Mitchell, Lange, Luporini,
  McRae, Bercea, Markall, and Kelly}{Rathgeber et~al\mbox{.}}{2017}]%
        {rathgeber2017firedrake}
\bibfield{author}{\bibinfo{person}{Florian Rathgeber},
  \bibinfo{person}{David~A. Ham}, \bibinfo{person}{Lawrence Mitchell},
  \bibinfo{person}{Michael Lange}, \bibinfo{person}{Fabio Luporini},
  \bibinfo{person}{Andrew T.~T. McRae}, \bibinfo{person}{Gheorghe-Teodor
  Bercea}, \bibinfo{person}{Graham~R. Markall}, {and} \bibinfo{person}{Paul
  H.~J. Kelly}.} \bibinfo{year}{2017}\natexlab{}.
\newblock \showarticletitle{Firedrake: Automating the Finite Element Method by
  Composing Abstractions}.
\newblock \bibinfo{journal}{\emph{ACM Transactions on Mathematical Software
  (TOMS)}} \bibinfo{volume}{43}, \bibinfo{number}{3} (\bibinfo{year}{2017}),
  \bibinfo{pages}{24}.
\newblock
\urldef\tempurl%
\url{https://doi.org/10.1145/2998441}
\showDOI{\tempurl}


\bibitem[\protect\citeauthoryear{Sarna, Oblapenko, and Torrilhon}{Sarna
  et~al\mbox{.}}{2020}]%
        {sarna2020moment}
\bibfield{author}{\bibinfo{person}{Neeraj Sarna}, \bibinfo{person}{Georgii
  Oblapenko}, {and} \bibinfo{person}{Manuel Torrilhon}.}
  \bibinfo{year}{2020}\natexlab{}.
\newblock \bibinfo{title}{Moment Method for the Boltzmann Equation of Reactive
  Quaternary Gaseous Mixture}.
\newblock
\newblock
\showeprint[arxiv]{2009.03013}~[physics.comp-ph]


\bibitem[\protect\citeauthoryear{Schade and Neemann}{Schade and
  Neemann}{2009}]%
        {schade2009tensoranalysis}
\bibfield{author}{\bibinfo{person}{H. Schade} {and} \bibinfo{person}{K.
  Neemann}.} \bibinfo{year}{2009}\natexlab{}.
\newblock \bibinfo{booktitle}{\emph{Tensoranalysis}}.
\newblock \bibinfo{publisher}{De Gruyter}.
\newblock
\urldef\tempurl%
\url{https://doi.org/10.1515/9783110213218}
\showDOI{\tempurl}


\bibitem[\protect\citeauthoryear{Struchtrup}{Struchtrup}{2005}]%
        {struchtrup2005macroscopic}
\bibfield{author}{\bibinfo{person}{Henning Struchtrup}.}
  \bibinfo{year}{2005}\natexlab{}.
\newblock \bibinfo{booktitle}{\emph{Macroscopic Transport Equations for
  Rarefied Gas Flows}}.
\newblock \bibinfo{publisher}{Springer-Verlag Berlin Heidelberg}.
\newblock
\urldef\tempurl%
\url{https://doi.org/10.1007/3-540-32386-4}
\showDOI{\tempurl}


\bibitem[\protect\citeauthoryear{Struchtrup and Taheri}{Struchtrup and
  Taheri}{2011}]%
        {struchtrup2011macroscopic}
\bibfield{author}{\bibinfo{person}{Henning Struchtrup} {and}
  \bibinfo{person}{Peyman Taheri}.} \bibinfo{year}{2011}\natexlab{}.
\newblock \showarticletitle{Macroscopic Transport Models for Rarefied Gas
  Flows: A Brief Review}.
\newblock \bibinfo{journal}{\emph{IMA Journal of Applied Mathematics}}
  \bibinfo{volume}{76}, \bibinfo{number}{5} (\bibinfo{year}{2011}),
  \bibinfo{pages}{672--697}.
\newblock
\urldef\tempurl%
\url{https://doi.org/10.1093/imamat/hxr004}
\showDOI{\tempurl}


\bibitem[\protect\citeauthoryear{Struchtrup and Torrilhon}{Struchtrup and
  Torrilhon}{2003}]%
        {struchtrup2003regularization}
\bibfield{author}{\bibinfo{person}{Henning Struchtrup} {and}
  \bibinfo{person}{Manuel Torrilhon}.} \bibinfo{year}{2003}\natexlab{}.
\newblock \showarticletitle{Regularization of Grad’s 13 Moment Equations:
  Derivation and Linear Analysis}.
\newblock \bibinfo{journal}{\emph{Physics of Fluids}} \bibinfo{volume}{15},
  \bibinfo{number}{9} (\bibinfo{year}{2003}), \bibinfo{pages}{2668--2680}.
\newblock
\urldef\tempurl%
\url{https://doi.org/10.1063/1.1597472}
\showDOI{\tempurl}


\bibitem[\protect\citeauthoryear{Su, Wang, Zhang, and Wu}{Su
  et~al\mbox{.}}{2020a}]%
        {su2020implicit}
\bibfield{author}{\bibinfo{person}{Wei Su}, \bibinfo{person}{Peng Wang},
  \bibinfo{person}{Yonghao Zhang}, {and} \bibinfo{person}{Lei Wu}.}
  \bibinfo{year}{2020}\natexlab{a}.
\newblock \showarticletitle{Implicit Discontinuous Galerkin Method for the
  Boltzmann Equation}.
\newblock \bibinfo{journal}{\emph{Journal of Scientific Computing}}
  \bibinfo{volume}{82}, \bibinfo{number}{2} (\bibinfo{year}{2020}),
  \bibinfo{pages}{39}.
\newblock
\urldef\tempurl%
\url{https://doi.org/10.1007/s10915-020-01139-7}
\showDOI{\tempurl}


\bibitem[\protect\citeauthoryear{Su, Zhu, and Wu}{Su et~al\mbox{.}}{2020b}]%
        {su2020fast}
\bibfield{author}{\bibinfo{person}{Wei Su}, \bibinfo{person}{Lianhua Zhu},
  {and} \bibinfo{person}{Lei Wu}.} \bibinfo{year}{2020}\natexlab{b}.
\newblock \bibinfo{title}{Fast Convergence and Asymptotic Preserving of the
  General Synthetic Iterative Scheme}.
\newblock
\newblock
\showeprint[arxiv]{2003.09958}~[physics.flu-dyn]


\bibitem[\protect\citeauthoryear{Theisen and Torrilhon}{Theisen and
  Torrilhon}{2020}]%
        {theisen2020fenicsr13Zenodo}
\bibfield{author}{\bibinfo{person}{Lambert Theisen} {and}
  \bibinfo{person}{Manuel Torrilhon}.} \bibinfo{year}{2020}\natexlab{}.
\newblock \bibinfo{booktitle}{\emph{fenicsR13: A Tensorial Mixed Finite Element
  Solver for the Linear R13 Equations Using the FEniCS Computing Platform
  (Version 1.4)}}.
\newblock
\urldef\tempurl%
\url{https://doi.org/10.5281/zenodo.4172951}
\showDOI{\tempurl}


\bibitem[\protect\citeauthoryear{Torrilhon}{Torrilhon}{2016}]%
        {torrilhon2016modeling}
\bibfield{author}{\bibinfo{person}{Manuel Torrilhon}.}
  \bibinfo{year}{2016}\natexlab{}.
\newblock \showarticletitle{Modeling Nonequilibrium Gas Flow Based on Moment
  Equations}.
\newblock \bibinfo{journal}{\emph{Annual Review of Fluid Mechanics}}
  \bibinfo{volume}{48} (\bibinfo{year}{2016}), \bibinfo{pages}{429--458}.
\newblock
\urldef\tempurl%
\url{https://doi.org/10.1146/annurev-fluid-122414-034259}
\showDOI{\tempurl}


\bibitem[\protect\citeauthoryear{Torrilhon and Sarna}{Torrilhon and
  Sarna}{2017}]%
        {torrilhon2017hierarchical}
\bibfield{author}{\bibinfo{person}{Manuel Torrilhon} {and}
  \bibinfo{person}{Neeraj Sarna}.} \bibinfo{year}{2017}\natexlab{}.
\newblock \showarticletitle{Hierarchical Boltzmann Simulations and Model Error
  Estimation}.
\newblock \bibinfo{journal}{\emph{J. Comput. Phys.}}  \bibinfo{volume}{342}
  (\bibinfo{year}{2017}), \bibinfo{pages}{66--84}.
\newblock
\showISSN{0021-9991}
\urldef\tempurl%
\url{https://doi.org/10.1016/j.jcp.2017.04.041}
\showDOI{\tempurl}


\bibitem[\protect\citeauthoryear{Torrilhon and Struchtrup}{Torrilhon and
  Struchtrup}{2008}]%
        {torrilhon2008boundary}
\bibfield{author}{\bibinfo{person}{Manuel Torrilhon} {and}
  \bibinfo{person}{Henning Struchtrup}.} \bibinfo{year}{2008}\natexlab{}.
\newblock \showarticletitle{Boundary Conditions for Regularized
  13-Moment-Equations for Micro-Channel-Flows}.
\newblock \bibinfo{journal}{\emph{J. Comput. Phys.}} \bibinfo{volume}{227},
  \bibinfo{number}{3} (\bibinfo{year}{2008}), \bibinfo{pages}{1982--2011}.
\newblock
\showISSN{0021-9991}
\urldef\tempurl%
\url{https://doi.org/10.1016/j.jcp.2007.10.006}
\showDOI{\tempurl}


\bibitem[\protect\citeauthoryear{Westerkamp}{Westerkamp}{2012}]%
        {westerkamp2012finite}
\bibfield{author}{\bibinfo{person}{Armin Westerkamp}.}
  \bibinfo{year}{2012}\natexlab{}.
\newblock \emph{\bibinfo{title}{Finite Element Discretizations for Extended Gas
  Dynamics}}.
\newblock Diploma Thesis, RWTH Aachen University.
\newblock
\urldef\tempurl%
\url{http://www.mathcces.rwth-aachen.de/3teaching/00projects/archive}
\showURL{%
Retrieved 2019-08-30 from \tempurl}


\bibitem[\protect\citeauthoryear{Westerkamp}{Westerkamp}{2017}]%
        {westerkamp2017continous}
\bibfield{author}{\bibinfo{person}{Armin Westerkamp}.}
  \bibinfo{year}{2017}\natexlab{}.
\newblock \emph{\bibinfo{title}{A Continuous Interior Penalty Method for the
  Linear Regularized 13-Moment Equations Describing Rarefied Gas Flows}}.
\newblock PhD Thesis, RWTH Aachen University.
\newblock
\urldef\tempurl%
\url{https://doi.org/10.18154/RWTH-2018-222958}
\showDOI{\tempurl}


\bibitem[\protect\citeauthoryear{Westerkamp and Torrilhon}{Westerkamp and
  Torrilhon}{2014}]%
        {westerkamp2014stabilization}
\bibfield{author}{\bibinfo{person}{Armin Westerkamp} {and}
  \bibinfo{person}{Manuel Torrilhon}.} \bibinfo{year}{2014}\natexlab{}.
\newblock \showarticletitle{Stabilization Techniques in Finite Element
  Discretizations for Moment Approximations}. In \bibinfo{booktitle}{\emph{AIP
  Conference Proceedings}}, Vol.~\bibinfo{volume}{1628}. AIP,
  \bibinfo{pages}{1016--1023}.
\newblock
\urldef\tempurl%
\url{https://doi.org/10.1063/1.4902705}
\showDOI{\tempurl}


\bibitem[\protect\citeauthoryear{Westerkamp and Torrilhon}{Westerkamp and
  Torrilhon}{2017}]%
        {westerkamp2017curvature}
\bibfield{author}{\bibinfo{person}{Armin Westerkamp} {and}
  \bibinfo{person}{Manuel Torrilhon}.} \bibinfo{year}{2017}\natexlab{}.
\newblock \showarticletitle{Curvature-Induced Instability of a Stokes-Like
  Problem with Non-Standard Boundary Conditions}.
\newblock \bibinfo{journal}{\emph{Applied Numerical Mathematics}}
  \bibinfo{volume}{121} (\bibinfo{year}{2017}), \bibinfo{pages}{96--114}.
\newblock
\urldef\tempurl%
\url{https://doi.org/10.1016/j.apnum.2017.06.012}
\showDOI{\tempurl}


\bibitem[\protect\citeauthoryear{Westerkamp and Torrilhon}{Westerkamp and
  Torrilhon}{2019}]%
        {westerkamp2019finite}
\bibfield{author}{\bibinfo{person}{Armin Westerkamp} {and}
  \bibinfo{person}{Manuel Torrilhon}.} \bibinfo{year}{2019}\natexlab{}.
\newblock \showarticletitle{Finite Element Methods for the Linear Regularized
  13-Moment Equations Describing Slow Rarefied Gas Flows}.
\newblock \bibinfo{journal}{\emph{J. Comput. Phys.}}  \bibinfo{volume}{389}
  (\bibinfo{year}{2019}), \bibinfo{pages}{1--21}.
\newblock
\showISSN{0021-9991}
\urldef\tempurl%
\url{https://doi.org/10.1016/j.jcp.2019.03.022}
\showDOI{\tempurl}


\end{thebibliography}









\appendix

\section{Derivation of the Variational Formulation}
In the following, we derive the mixed variational formulation of \cref{ss_weakform} for all evolution equations \cref{eq_balance_mass,eq_balance_momentum,eq_balance_energy,eq_balance_heatflux,eq_balance_stress} separately. Reordering of the boundary conditions includes the addition of \cref{eq_bc_sigmant,eq_bc_rnt} to form \cref{eq_bcnew_rnt}, the addition of \cref{eq_bc_sn,eq_bc_mnnn} to form \cref{eq_bcnew_mnnn}, and a reordering/scaling of \cref{eq_bc_sigmant,eq_bc_sn} to obtain \cref{eq_bcnew_sigmant,eq_bcnew_sn} yields
\begin{align}
  u_n &= \epsilon^\mathrm{w} \tilde{\chi} \left( (p-p^\mathrm{w}) + \sigma_{nn} \right) + u_n^{\mathrm{w}}, \label{eq_bcnew_un}
  \\
  \frac{1}{\tilde{\chi}} \sigma_{nt} + u_t^{\mathrm{w}} - \frac{1}{5} s_t &= u_t + m_{nnt}, \label{eq_bcnew_sigmant}
  \\
  R_{nt} &= \tilde{\chi} \frac{12}{5} s_t - \sigma_{nt},\label{eq_bcnew_rnt}
  \\
  \frac{1}{2} \frac{1}{\tilde{\chi}} s_n + \theta^{\mathrm{w}} &= \theta + \frac{1}{4} \sigma_{nn} + \frac{1}{5} R_{nn} + \frac{1}{15} \Delta,\label{eq_bcnew_sn}
  \\
  \frac{3}{4} m_{nnn} &= \tilde{\chi} \frac{9}{8} \sigma_{nn} - \frac{3}{20} s_n,\label{eq_bcnew_mnnn}
  \\
  \left( \frac{1}{2} m_{nnn} + m_{ntt} \right) &= \tilde{\chi} \left( \frac{1}{2} \sigma_{nn} + \sigma_{tt} \right), \label{eq_bcnew_05mnnnmnnt}
\end{align}

\subsection{Heat Flux Balance}\label{ss_derviationHeatflux}
We start by testing the heat flux balance \cref{eq_balance_heatflux} with \(\te{r}\) and apply integration by parts to the terms
\begin{align}
  \int_\Omega \left(\te{\nabla} \te{\cdot} \tee{\sigma}\right) \te{\cdot} \te{r} \dd \te{x}
  &=
  - \int_\Omega \tee{\sigma} : \nabla \te{r} \dd \te{x} + \int_{\Gamma} \left( \tee{\sigma} \cdot \te{n} \right) \cdot \te{r} \dd l
  ,
  \\
  \frac{1}{2} \int_\Omega \left(\te{\nabla} \te{\cdot}\tee{R}\right)  \te{\cdot} \te{r} \dd \te{x}
  &=
  - \frac{1}{2}  \int_\Omega \tee{R} : \nabla \te{r} \dd \te{x} + \frac{1}{2}  \int_{\Gamma} \left( \tee{R} \cdot \te{n} \right) \cdot \te{r} \dd l
  ,
  \\
  \frac{5}{2} \int_\Omega \left(\te{\nabla} \theta\right) \te{\cdot} \te{r} \dd \te{x}
  &=
  - \frac{5}{2} \int_\Omega \theta \left( \nabla \cdot \te{r} \right) \dd \te{x} + \frac{5}{2} \int_{\Gamma} \theta \left( \te{r} \cdot \te{n} \right) \dd l
  ,
  \\
  \frac{1}{6} \int_\Omega \left(\te{\nabla} \Delta\right) \te{\cdot} \te{r} \dd \te{x}
  &=
  - \frac{1}{6} \int_\Omega \Delta \left( \nabla \cdot \te{r} \right) \dd \te{x} + \frac{1}{6} \int_{\Gamma} \Delta \left( \te{r} \cdot \te{n} \right) \dd l
  .
\end{align}
To proceed further, we expand all boundary integrals. In fact, for two spatial dimensions and regarding a local normal/tangential (characterized through \(\te{n},\te{t}\)) coordinate system along the boundary path, it holds that
\begin{align}
  \left( \tee{\sigma} \cdot \te{n} \right) \cdot \te{r} &= \sigma_{nn} r_n + \sigma_{nt} r_t,
  &
  \left( \tee{R} \cdot \te{n} \right) \cdot \te{r} &= R_{nn} r_n + R_{nt} r_t,
  \\
  \theta \left( \te{r} \cdot \te{n} \right) &= \theta r_n,
  &
  \Delta \left( \te{r} \cdot \te{n} \right) &= \Delta r_n.
\end{align}
The resulting equation for the left-hand side is normalized with the factor \((2/5)\) and reads
\begin{align}
  &
  - \frac{2}{5} \int_\Omega \tee{\sigma} : \nabla \te{r} \dd \te{x}
  - \frac{1}{5}  \int_\Omega \tee{R} : \nabla \te{r} \dd \te{x}
  - \int_\Omega \theta \left( \nabla \cdot \te{r} \right) \dd \te{x}
  - \frac{1}{15} \int_\Omega \Delta \left( \nabla \cdot \te{r} \right) \dd \te{x}
  \nonumber
  \\
  &
  \frac25 \int_{\Gamma} \left( \sigma_{nn} r_n + \sigma_{nt} r_t \right) \dd l
  + \frac{1}{5}  \int_{\Gamma} \left( R_{nn} r_n + R_{nt} r_t \right) \dd l
  + \int_{\Gamma} \theta r_n \dd l
  + \frac{1}{15} \int_{\Gamma} \Delta r_n \dd l
  ,
\end{align}
to insert the boundary conditions \cref{eq_bcnew_sn,eq_bcnew_rnt} together with the closure expression \cref{eq_closure_r,eq_closure_delta}. The final weak form of the heat balance therefore reads
\begin{align}
  0=
  &
  - \frac{2}{5} \int_\Omega \tee{\sigma} : \nabla \te{r} \dd \te{x}
  + \frac{24}{25} \Knud \int_\Omega {(\te{\nabla} \te{s})}_{\text{stf}} : \nabla \te{r} \dd \te{x}
  - \int_\Omega \theta \left( \nabla \cdot \te{r} \right) \dd \te{x}
  \nonumber
  \\
  &
  + \frac{4}{5} \Knud \int_\Omega \left( \nabla \cdot \te{s} \right) \left( \nabla \cdot \te{r} \right) \dd \te{x}
  + \frac{3}{20} \int_{\Gamma} \sigma_{nn} r_n
  + \frac{1}{5} \int_{\Gamma} \sigma_{nt} r_t \dd l
  + \frac{12}{25} \tilde{\chi} \int_{\Gamma} s_t r_t \dd l
  \nonumber
  \\
  &
  + \frac{1}{2} \frac{1}{\tilde{\chi}} \int_\Gamma s_n r_n \dd l
  + \int_\Gamma \theta^{\text{w}} r_n \dd l
  + \frac{4}{15} \frac{1}{\Knud} \int_\Omega \te{s} \te{\cdot} \te{r} \dd \te{x}
  ,
\end{align}
which equivalently, using the sub-functionals \cref{eq_subf_a,eq_subf_b,eq_subf_c,eq_subf_l1}, reads as
\begin{equation}
  a(\te{s},\te{r})
  -
  b(\theta, \te{r})
  -
  c(\te{r},\tee{\sigma})
  =
  l_1(\te{r})
  .
  \label{eq_subf_line_heatflux}
\end{equation}
Note that \cref{eq_subf_line_heatflux} is the result defining the symmetric and trace-free operator in three dimensions \cref{eq:heatStfModification} together with the orthogonality principle of the additive tensor decomposition into the symmetric and the skew-symmetric part as
\begin{align}
  \int_\Omega {(\te{\nabla} \te{s})}_{\text{stf}} : \nabla \te{r} \dd \te{x}
  &=
  \int_\Omega \text{sym}(\te{\nabla} \te{s}) : \text{sym}(\te{\nabla} \te{r}) \dd \te{x} - \frac{1}{3} \int_\Omega \left( \trace{(\te{\nabla} \te{s})} \tee{I} \right) \te{:} \te{\nabla} \te{r
  } \dd \te{x}
  \nonumber
  \\
  &=
  \int_\Omega \text{sym}(\te{\nabla} \te{s}) : \text{sym}(\te{\nabla} \te{r}) \dd \te{x} - \frac{1}{3} \int_\Omega \left( \te{\nabla} \te{\cdot} \te{s} \right) \te{\cdot} \left( \te{\nabla} \te{\cdot} \te{r} \right) \dd \te{x}
  .
\end{align}

\subsection{Energy Balance}\label{ss_derviationEnergy}
The energy equation \cref{eq_balance_energy} needs special treatment already in the strong form. To have an (anti-) symmetric system, as we see later on, we eliminate the velocity divergence utilizing the continuity equation \cref{eq_balance_mass}. This step is not necessary when using the variables \((\rho,\theta)\) instead of \((p, \theta)\). A subsequent testing with the scalar test function \(\kappa\) yields
\begin{equation}
  \int_\Omega \kappa \left( \te{\nabla} \te{\cdot} \te{s} \right) \dd \te{x}
  =
  \int_\Omega \left( r - \dot{m} \right) \kappa \dd \te{x}
  ,
\end{equation}
which equivalently, using the sub-functionals \cref{eq_subf_b,eq_subf_l2}, reads as
\begin{equation}
  b(\theta,\te{r}) = l_2(\kappa)
  .
  \label{eq_subf_line_energy}
\end{equation}

\subsection{Stress Balance}\label{ss_derviationStress}
The stress balance \cref{eq_balance_stress} has a tensorial rank of two and, therefore, needs the corresponding 2-tensor test function \(\tee{\psi}\). Normalization with the factor \((1/2)\) yields
\begin{equation}
  \frac{2}{5} \int_\Omega {(\te{\nabla} \te{s})}_{\text{stf}} \tee{:} \tee{\psi} \dd \te{x} + \int_\Omega {(\te{\nabla} \te{u})}_{\text{stf}} \tee{:} \tee{\psi} \dd \te{x} + \frac{1}{2} \int_\Omega \left(\te{\nabla} \te{\cdot} \teee{m} \right) \tee{:} \tee{\psi} \dd \te{x} + \frac{1}{2} \frac{1}{\Knud} \int_\Omega \tee{\sigma} \tee{:} \tee{\psi} \dd \te{x} = 0
  .
\end{equation}
We reconsider that the stress tensor \(\tee{\sigma}\) is symmetric and trace-free and chose the same properties also for \(\tee{\psi}\), such that
\begin{equation}
  \tee{\sigma} =
  \begin{pmatrix}
    \sigma_{xx} & \sigma_{xy} & 0 \\
    \sigma_{xy} & \sigma_{yy} & 0 \\
    0           & 0           & -\left(\sigma_{xx} + \sigma_{yy}\right)
  \end{pmatrix},
  \quad
  \tee{\psi} =
  \begin{pmatrix}
    \psi_{xx} & \psi_{xy} & 0 \\
    \psi_{xy} & \psi_{yy} & 0 \\
    0           & 0           & -\left(\psi_{xx} + \psi_{yy}\right)
  \end{pmatrix}.
  \label{eq:stressTensorShapes}
\end{equation}
The symmetric and trace-free velocity gradient, therefore, expands to
\begin{equation}
  \int_\Omega {\left( \te{\nabla} \te{u} \right)}_{\text{stf}} \tee{:} \tee{\psi} \dd \te{x} = \int_\Omega {\left( \te{\nabla} \te{u} \right)}_{\mathrm{sym}} \tee{:} \tee{\psi} \dd \te{x} - \frac{1}{3} \int_\Omega \left( \te{\nabla} \te{\cdot} \te{u} \right) \trace({\tee{\psi}}) \dd \te{x},
\end{equation}
where the trace of \(\tee{\psi}\) directly vanishes due to the present setup. Using the additive tensor decomposition's orthogonality of the symmetric and skew-symmetric components, integration by parts becomes possible as
\begin{equation}\label{eq:stressIpU}
  \int_\Omega {\left( \te{\nabla} \te{u} \right)}_{\text{stf}} \tee{:} \tee{\psi} \dd \te{x}
  =
  \int_\Omega \te{\nabla} \te{u} \tee{:} \tee{\psi} \dd \te{x}
  =
  - \int_\Omega \te{u} \tee{\cdot} \left( \te{\nabla} \te{\cdot} \tee{\psi} \right) \dd \te{x} + \int_\Gamma \te{u} \te{\cdot} \left( \tee{\psi} \te{\cdot} \te{n} \right) \dd l
  .
\end{equation}
The integral with the divergence of the 3-tensor \(\teee{m}\) utilizes integration by parts, such that a Frobenius inner product of degree three and a boundary expression result as
\begin{equation}\label{eq:stressIp1}
  \int_\Omega (\nabla \cdot \teee{m}) \tee{:} \tee{\psi} \dd \te{x}
  =
  - \int_\Omega \teee{m} \teee{\because} \nabla \tee{\psi} \dd \te{x}
  + \int_{\Gamma} (\teee{m} \cdot \te{n}) : \tee{\psi} \dd l.
\end{equation}
To insert the boundary conditions, the terms of \cref{eq:stressIpU,eq:stressIp1} again use the local coordinate system with
\begin{align}
  (\teee{m} \cdot \te{n}) : \tee{\psi}
  &=
  m_{nnn}\psi_{nn} + 2 m_{nnt}\psi_{nt} + m_{ntt}\psi_{tt} + (-m_{ntt}-m_{nnn})(-\psi_{tt}-\psi_{nn})
  \nonumber
  \\
  &=
  \frac{3}{2} m_{nnn}\psi_{nn} + 2 m_{nnt}\psi_{nt} + 2\left(m_{ntt}+\frac{1}{2}m_{nnn}\right)\left(\psi_{tt}+\frac{1}{2}\psi_{nn}\right),
  \\
  \te{u} \cdot (\tee{\psi} \cdot \te{n})
  &=
  u_n \psi_{nn} + u_t \psi_{nt},
\end{align}
where \(m_{nnt} = m_{ntn}\) was utilized. All boundary terms can now be collected as
\begin{multline}
  \frac{1}{2} \int_{\Gamma} (\teee{m} \cdot \te{n}) : \tee{\psi} \dd l
  +
  \int_{\Gamma} \te{u} \cdot (\tee{\psi} \cdot \te{n}) \dd l
  =
  \int_{\Gamma} \left( \frac{3}{4} m_{nnn} + u_n \right) \psi_{nn} \dd l
  \\
  +
  \int_{\Gamma} \left( m_{nnt} + u_t \right) \psi_{nt} \dd l
  +
  \int_{\Gamma} \left(\frac{1}{2}m_{nnn}+m_{ntt}\right)\left(\frac{1}{2}\psi_{nn}+\psi_{tt}\right) \dd l,
\end{multline}
where the reordered boundary conditions \cref{eq_bcnew_un,eq_bcnew_sigmant,eq_bcnew_mnnn,eq_bcnew_05mnnnmnnt} fit naturally. Eliminating the highest-order moment \(\teee{m}\) with the closure relation \cref{eq_closure_m} yields the resulting equation as
\begin{align}
  &
  \int_\Gamma \left( u_t^{\text{w}} \psi_{nt} + \left( u_n^{\text{w}} - \epsilon^{\text{w}} \tilde{\chi} p^{\text{w}} \right)  \psi_{nn} \right) \dd l
  \nonumber
  \\
  &
  =
  \Knud \int_\Omega \text{stf}(\te{\nabla}\tee{\sigma}) \teee{\because} \text{stf}(\te{\nabla}\tee{\psi}) \dd \te{x}
  +
  \frac{1}{2} \frac{1}{\Knud} \int_\Omega \tee{\sigma} \tee{:} \tee{\psi} \dd \te{x}
  +
  \frac{9}{8} \tilde{\chi} \int_\Gamma \sigma_{nn} \psi_{nn} \dd l
  \nonumber
  \\
  &+
  \tilde{\chi} \int_\Gamma \left( \sigma_{tt} + \frac{1}{2} \sigma_{nn} \right) \left( \psi_{tt} + \frac{1}{2} \psi_{nn} \right) \dd l
  +
  \frac{1}{\tilde{\chi}} \int_\Gamma \sigma_{nt} \psi_{nt} \dd l
  +
  \epsilon^{\text{w}} \tilde{\chi} \int_\Gamma \sigma_{nn} \psi_{nn} \dd l
  \nonumber
  \\
  &+
  \frac{2}{5} \int_\Omega \tee{\psi} \tee{:} \te{\nabla} \te{s} \dd \te{x}
  - \frac{3}{20} \int_\Gamma \psi_{nn} s_n \dd l
  - \frac{1}{5} \int_\Gamma \psi_{nt} s_t \dd l
  - \int_\Omega \text{div}(\tee{\psi}) \te{\cdot} \te{u} \dd \te{x}
  + \epsilon^{\text{w}} \tilde{\chi} \int_\Gamma p \psi_{nn} \dd l
  ,
\end{align}
and using the sub-functionals defined in \cref{eq_subf_c,eq_subf_d,eq_subf_e,eq_subf_f,eq_subf_l3}, we have
\begin{equation}
  c(\te{s},\tee{\psi})
  +
  d(\tee{\sigma},\tee{\psi})
  -
  e(\te{u},\tee{\psi})
  +
  f(p,\tee{\psi})
  =
  l_3(\tee{\psi})
  .
  \label{eq_subf_line_stress}
\end{equation}

\subsection{Momentum Balance}\label{ss_derviationMomentum}
In contrast to the extensive derivation for the stress balance, we obtain the weak momentum balance trivially by multiplicating \cref{eq_balance_momentum} with \(\te{v}\). The absence of higher-order moments results in no need for integration by parts in
\begin{equation}
  \int_\Omega \te{\nabla} p \te{\cdot} \te{v} \dd \te{x} + \int_\Omega \left( \te{\nabla} \te{\cdot} \tee{\sigma} \right) \te{\cdot} \te{v} \dd \te{x} = \int_\Omega \te{b} \te{\cdot} \te{v} \dd \te{x}
  ,
\end{equation}
such that the resulting weak form \textendash\ using the sub-functionals \cref{eq_subf_e,eq_subf_g,eq_subf_l4} \textendash\ reads
\begin{equation}
  e(\te{v},\tee{\sigma})
  +
  g(p,\te{v})
  =
  l_4(\te{v})
  .
  \label{eq_subf_line_momentum}
\end{equation}

\subsection{Mass Balance}\label{ss_derviationMass}
Application of integration by parts for the mass balance from \cref{eq_balance_mass} allows enforcing the velocity boundary condition for \(u_n\) as
\begin{align}
  \int_\Omega \left( \te{\nabla} \te{\cdot} \te{u} \right) q \dd \te{x}
  &=
  - \int_\Omega \te{u} \te{\cdot} \te{\nabla}q \dd \te{x} + \int_\Gamma u_n q \dd l
  \\
  &=
  - \int_\Omega \te{u} \te{\cdot} \te{\nabla}q \dd \te{x}
  + \int_\Gamma \left( \epsilon^\mathrm{w} \tilde{\chi} \left( (p-p^\mathrm{w}) + \sigma_{nn} \right) + u_n^{\mathrm{w}} \right) q \dd l
  .
\end{align}
A reordering for \(\te{\mathcal{U}}\)- and \(\te{\mathcal{V}}\)-variables yields the weak formulation as
\begin{equation}
  \int_\Omega \te{u} \te{\cdot} \te{\nabla}q \dd \te{x}
  \int_\Gamma \epsilon^\mathrm{w} \tilde{\chi} \left( p + \sigma_{nn}\right) q \dd l
  =
  \int_\Omega \dot{m} q \dd \te{x}
  -
  \int_\Gamma \left( u_n^{\mathrm{w}} - \epsilon^\mathrm{w} p^\mathrm{w} \right) q \dd l,
  \label{eq_weakmass}
\end{equation}
that has to hold for all \(q \in \mathbb{V}_p\). The equation \cref{eq_weakmass} reads, in terms of sub-functionals \cref{eq_subf_f,eq_subf_g,eq_subf_h,eq_subf_l5}, as
\begin{equation}
  f(q,\tee{\sigma})
  -
  g(q,\te{u})
  +
  h(p,q)
  =
  l_5(q)
  .
  \label{eq_subf_line_mass}
\end{equation}


\end{document}